\title{Listing, Verifying and Counting Lowest Common Ancestors in DAGs: Algorithms and Fine-Grained Lower Bounds} 
\titlerunning{Listing, Verifying and Counting LCAs in DAGs} 
\author{Surya Mathialagan}{Massachusetts Institute of Technology, United States}{smathi@mit.edu}{}{Supported by the MIT Akamai Presidential Fellowship.}
\author{Virginia {Vassilevska Williams}}{Massachusetts Institute of Technology, United States}{virgi@mit.edu}{}{Supported by an NSF CAREER Award, NSF Grants CCF-1528078, CCF-1514339 and CCF-1909429, a BSF Grant BSF:2012338, a Google Research Fellowship and a Sloan Research Fellowship.}
\author{Yinzhan Xu}{Massachusetts Institute of Technology, United States}{xyzhan@mit.edu}{}{Partially supported by NSF Grant CCF-1528078.}
\authorrunning{S. Mathialagan, V. Vassilevska Williams, and Y. Xu} 
\keywords{All-Pairs Lowest Common Ancestor, Fine-Grained Complexity} 
\newtheorem{hypothesis}[theorem]{Hypothesis}
\newtheorem{fact}[theorem]{Fact}
\tiny\color{gray},
\newcommand{\ZZ}{\ensuremath{\mathbb Z}}
\def \poly {\mathop{\rm poly}} 
\def\tO{\tilde{O}}
\newcommand{\Anc}{\ensuremath{\mathsf{Anc}}}
\newcommand{\lca}{\ensuremath{\mathsf{LCA}}}
\newcommand{\TOP}{\ensuremath{\pi}}
\newcommand{\eqlca}[1]{{\sf AP-Exact$#1$-LCA}}
\newcommand{\leqlca}[1]{{\sf AP-AtMost$#1$-LCA}}
\newcommand{\geqlca}[1]{{\sf AP-AtLeast$#1$-LCA}}
\newcommand{\aplca}{{\sf AP-LCA}}
\newcommand{\apklca}{{\sf AP-List-$k$-LCA}} 
\newcommand{\apkarglca}[1]{{\sf AP-List-$#1$-LCA}}
\newcommand{\alllca}{{\sf AP-All-LCA}}
\newcommand{\apvlca}{{\sf AP-Ver-LCA}}
\newcommand{\vlca}{{\sf Ver-LCA}}
\newcommand{\countlca}{{\sf AP-\#LCA}}
\newcommand{\maxsat}{{\sf Max-3-SAT}}
\newcommand{\maxksat}[1]{{\sf Max-{#1}-SAT}}
\newcommand{\fourclique}{{\sf 4-Clique}}
\newcommand{\hyperclique}[2]{{\sf $(#1, #2)$-Hyperclique}}
\newcommand{\ksat}[1]{{\sf $#1$-SAT}}
\newcommand{\maxwitness}{{\sf Max-Witness}}
\definecolor{LightCyan}{rgb}{0.88,1,1}
\def\eps{{\varepsilon}}
\begin{document}

\maketitle 

\begin{abstract}
The AP-LCA problem asks, given an $n$-node directed acyclic graph (DAG), to compute for every pair of vertices $u$ and $v$ in the DAG a lowest common ancestor (LCA) of $u$ and $v$ if one exists, i.e. a node that is an ancestor of both $u$ and $v$ but no proper descendent of it is their common ancestor. Recently [Grandoni et al. SODA'21] obtained the first sub-$n^{2.5}$ time algorithm for AP-LCA running in $O(n^{2.447})$ time. Meanwhile, the only known conditional lower bound for AP-LCA is that the problem requires $n^{\omega-o(1)}$ time where $\omega$ is the matrix multiplication exponent.

In this paper we study several interesting variants of AP-LCA, providing both algorithms and fine-grained lower bounds for them. The lower bounds we obtain are the first conditional lower bounds for LCA problems higher than $n^{\omega-o(1)}$. Some of our results include:
\begin{itemize}
    \item In any DAG, we can detect all vertex pairs that have at most two LCAs and list all of their LCAs in $O(n^\omega)$ time. This algorithm extends a result of [Kowaluk and Lingas ESA'07] which showed an $\tilde{O}(n^\omega)$ time algorithm that detects all pairs with a unique LCA in a DAG and outputs their corresponding LCAs.

    \item Listing $7$ LCAs per vertex pair in DAGs requires $n^{3-o(1)}$ time under the popular assumption that 3-uniform 5-hyperclique detection requires $n^{5-o(1)}$ time. This is surprising since essentially cubic time is sufficient to list all LCAs (if $\omega=2$).
    
    \item Counting the number of LCAs for every vertex pair in a DAG requires $n^{3-o(1)}$ time under the Strong Exponential Time Hypothesis, and $n^{\omega(1,2,1)-o(1)}$ time under the $4$-Clique hypothesis. This shows that the algorithm of [Echkardt, M\"{u}hling and Nowak ESA'07] for listing all LCAs for every pair of vertices is likely optimal.
    
    \item Given a DAG and a vertex $w_{u,v}$ for every vertex pair $u,v$, verifying whether all $w_{u,v}$ are valid LCAs requires $n^{2.5-o(1)}$ time assuming 3-uniform 4-hyperclique requires $n^{4 - o(1)}$ time. This defies the common intuition that verification is easier than computation since returning some LCA per vertex pair can be solved in $O(n^{2.447})$ time.   
\end{itemize}
\end{abstract}

\section{Introduction}

A lowest common ancestor (LCA) of two nodes $u$ and $v$ in a directed acyclic graph (DAG) is a common ancestor $c$ of $u$ and $v$ such that no proper descendent of $c$ is a common ancestor of $u$ and $v$. The AP-LCA problem asks to compute for every pair of nodes in a given DAG, some LCA, provided a common ancestor exists.

Computing LCAs is an important problem with a wide range of applications. For instance, LCA computation is a key ingredient in verification of the correctness of distributed computation (e.g. \cite{BouchitteR97}), object inheritance in object oriented programming languages such as C++ and Java (e.g. \cite{Ait-KaciBLN89,DucournauH87,HabibHS95}), and computational biology for finding the closest ancestor of species in rooted phylogenetic networks (e.g. \cite{FischerH10}). 

Computing LCAs is very well-understood in trees~\cite{Wen94,SchieberV88,Tarjan79,HarelT84,GabowBT84,ColeH05,BerkmanV94,BerkmanV93}. 
A{\"{\i}}t{-}Kaci, Boyer, Lincoln and
Nasr \cite{Ait-KaciBLN89} were one of the first to 
consider LCAs in DAGs, focusing on lattices and lower semilattices with object inheritance in mind.
Nyk{\"{a}}nen and Ukkonen~\cite{NykanenU94} obtained efficient algorithms for directed trees and asked if there is a subcubic time algorithm for AP-LCA in DAGs.

Bender, Martin Farach{-}Colton, Pemmasani, Skiena and
Sumazin~\cite{BenderFPSS05} gave the first subcubic, $O(n^{(3+\omega)/2})\leq O(n^{2.687})$, time algorithm for AP-LCA in DAGs, where $\omega<2.37286$ is the matrix multiplication exponent \cite{almanv21}. They also showed that AP-LCA is equivalent to the so-called All-Pairs Shortest LCA Distance problem. Czumaj, Kowaluk and Lingas~\cite{KowalukL05,CzumajKL07} 
improved the AP-LCA running time to $O(n^{2.575})$  using a reduction to the \maxwitness{} Product problem. With the current best bounds for rectangular matrix multiplication~\cite{GallU18}, their algorithm runs in $O(n^{2.529})$ time. 

Notice that all subcubic algorithms above would run in $\tilde{O}(n^{2.5})$ time\footnote{$\tilde{O}$ hides poly-logarithmic factors.} if $\omega=2$.
For more than a decade, this running time remained unchallenged. It seemed that AP-LCA might actually require $n^{2.5-o(1)}$ time, similar to several other $n^{2.5}$ time problems such as computing the \maxwitness{} product (see e.g. \cite{lincoln2020monochromatic}).

Recently, Grandoni, Italiano, Lukasiewicz, Parotsidis and Uznanski~\cite{grandoni2020lca} showed that this is not the case, giving an algorithm that runs in $O(n^{2.447})$ time, or in $\tilde{O}(n^{7/3})$ time if $\omega=2$.

It is not hard to show (see \cite{BenderFPSS05,CzumajKL07}) that any algorithm for AP-LCA can be used to solve Boolean Matrix Multiplication (BMM), and hence beating $O(n^\omega)$ time  for AP-LCA would likely be difficult. No higher conditional lower bounds are known for the problem. It is still open whether $O(n^\omega)$ time can actually be achieved for AP-LCA. 

Partial progresses have been made for DAGs with special structures or for variants of AP-LCA. Czumaj, Kowaluk and Lingas~\cite{CzumajKL07} showed that AP-LCA is in $O(n^\omega)$ time for low-depth DAGs. Kowaluk and Lingas~\cite{KowalukL07} showed that in ${O}(n^\omega \log n)$ time one can return an LCA for every vertex pair that has a unique LCA. 
Eckhardt, M{\"u}hling and Nowak \cite{Eckhardt2007FastLC} showed
that one can solve the \alllca{} problem, which asks to output all LCAs for every pair of vertices, in  $O(n^{\omega(1,2,1)})$ time. Here $\omega(1,2,1)\leq 3.252$ is the exponent of multiplying an $n\times n^2$ by an $n^2\times n$ matrix. AP-LCA was also studied in the weighted setting \cite{Baumgart2007weighted}, the dynamic setting \cite{Eckhardt2007FastLC} and the space-efficient setting \cite{kowaluk2008path}.

This paper considers the following questions:
\begin{enumerate}
\item Can we return all LCAs for every pair of nodes that has at most $2$ LCAs, in $\tilde{O}(n^\omega)$ time, extending Kowaluk and Lingas's algorithm~\cite{KowalukL07}?

\item The \aplca{} problem asks us to exhibit a single LCA for each vertex pair. What if we want to list 2, 3, $\ldots, k$ LCAs? How fast can we do it?

So far two variants of LCA are studied: list a single LCA per pair and list all LCAs per pair. What about listing numbers in between? This is just as natural. In phylogenetic networks for instance, there can be multiple LCAs per species pair, but typically not too many. Then listing a constant number of LCAs fast can give a better picture than listing a single representative.
Other applications of \aplca{} would similarly make more sense for listing multiple LCAs. 
\item How fast can we count the number of LCAs each vertex pair has? 

\item Suppose that for every pair of nodes $u,v$ in a DAG we are given a node $w_{u,v}$. Can we efficiently determine whether $w_{u,v}$ is an LCA of $u$ and $v$, for each $u,v$?
One would think that if \aplca{} can be solved faster than $O(n^{2.5})$ time, then this verification version of the problem should also be solvable faster.

\end{enumerate}

We provide algorithms and fine-grained conditional lower bounds to address the above questions. Our lower bounds are the first conditional lower bounds higher than $n^{\omega-o(1)}$ for LCA problems.

\subsection{Our results}

\subparagraph*{Detecting and listing $O(1)$ LCAs.}
Our results for this part are summarized in Table~\ref{table:constresults}.
\begin{table}[h]
        \centering
        \resizebox{\textwidth}{!}{
        \begin{tabular}{c|cc|cc|cc|c}
             $k$ &  \multicolumn{2}{|c|}{\geqlca{k} Exponent} &  \multicolumn{2}{|c|}{\apklca{} Exponent} & \multicolumn{2}{|c|}{Best Lower Bound} & Source of LB\\
             \hline 
             1 & $\omega$ (2) & Folklore  & ${2.447}$ (7/3) & \cite{grandoni2020lca} & $\omega$ (2) & \cite{BenderFPSS05} & BMM\\
             2 & $\omega$ (2) & \cite{KowalukL07}, Thm \ref{thm:exact_1_lca} & 2.529 (2.5) & Thm \ref{thm:2_3_listing} & $\omega$ (2) & \cite{BenderFPSS05} & BMM\\
             3 & $\omega$ (2) & Thm \ref{thm:exact_2_lca} & ${2.529}$ (2.5) & Thm \ref{thm:2_3_listing} & $\omega$ (2) & \cite{BenderFPSS05} & BMM\\
             4 & $3$ & & $3$ & & 2.5 & Thm \ref{thm:apeq3456lb} & \hyperclique{4}{3}\\
             5 & $3$ & & $3$ & & ${2.666}$ & Thm \ref{thm:apeq3456lb} & \hyperclique{5}{3}\\
             6 & $3$ & & $3$ & & ${2.8}$ & Thm \ref{thm:apeq3456lb} &\hyperclique{6}{3}\\
             \rowcolor{LightCyan}
             7 & $3$ & & $3$ & & $3$ & Thm \ref{thm:apeq3456lb} & \hyperclique{5}{3}\\
             \rowcolor{LightCyan}
             All & N/A & & $\omega(1, 2, 1)$ (3) & \cite{Eckhardt2007FastLC} & $\omega(1, 2, 1)$ (3) & Thm \ref{thm:SETH}, \ref{thm:4_clique_hardness}& SETH, \fourclique{}
        \end{tabular}}
        \caption{A summary of our results for detecting and listing LCAs. In the second and third columns, we give the best known runtime exponents for \geqlca{k} and \apkarglca{k} respectively. An exponent of 3 above corresponds to the trivial brute-force algorithm. In the fourth and fifth columns, we give the best conditional lower bounds for the exponent of \geqlca{k}, and the corresponding hardness sources for the lower bounds. The exponents and lower bounds in the last row are for \alllca{} problem. All values in parentheses are the corresponding values when $\omega = 2$. }
        \label{table:constresults}
\end{table}

Let us define \eqlca{k}, \geqlca{k} and \leqlca{k} as the problems of deciding for every pair of vertices in a given DAG, whether they have exactly, greater than or equal to, and less than or equal to $k$ LCAs, respectively.

We study how fast \eqlca{k}, \geqlca{k} and \leqlca{k} can be solved for constant $k$. More generally, we study the problem of returning $k$ LCAs per vertex pair if it has at least $k$ LCAs, or all LCAs if it has fewer. We call the latter problem \apkarglca{k}.

For any constant $k$, one can return up to $k$ LCAs for every vertex pair in a DAG in cubic time using a trivial brute-force algorithm\footnote{We first compute a topological ordering of the graph in $O(n^2)$ time and the transitive closure in $O(n^\omega)$ time using~\cite{Fischer71}. For each vertex pair $(u, v)$, we scan the vertices in the reverse order of the topological ordering, and declare the current vertex $w$  a new LCA if $w$ can reach both $u$ and $v$ and $w$ cannot reach any LCAs found so far. We stop the scan as soon as we find $k$ LCAs or reach the end of the topological ordering. Given the transitive closure, each reachability check can be finished in $O(1)$ time, so the overall running time of the algorithm is $O(kn^3)$.}. More generally, if $\omega=2$, the $O(n^{\omega(1,2,1)})$ time \alllca{} algorithm in \cite{Eckhardt2007FastLC} would also run in essentially cubic time. 

It is thus interesting to study {\em for what values of $k$, \eqlca{k}, \geqlca{k}, \leqlca{k} and \apkarglca{k} are solvable in truly subcubic, $O(n^{3-\eps})$ for $\eps>0$, time}.

We show that for every constant $k$, the listing problem \apkarglca{k} and the decision problem \geqlca{k} are subcubically equivalent. This statement appears as Theorem~\ref{thm:listing_and_detection} in the main text. Thus, the values $k$ for which one problem is in subcubic time are exactly the same for the other problem. 

We also prove a convenient equivalence between \eqlca{k}, \geqlca{k} and \leqlca{k}:

\begin{restatable}{theorem}{basicEquiv}
\label{thm:eq-leq-geq-eq}
For any constant $k \ge 0$, the running times of \eqlca{k}, \leqlca{k} and \geqlca{(k+1)} are the same up to constant factors. 
\end{restatable}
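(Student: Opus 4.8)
The plan is to route everything through a single ``shift'' construction. First observe that \leqlca{k} and \geqlca{(k+1)} are trivially equivalent: for each pair, ``at most $k$ LCAs'' is the negation of ``at least $k+1$ LCAs,'' so an algorithm for one solves the other after flipping all $\Theta(n^2)$ output bits, a cost dominated by the $\Omega(n^2)$ cost of writing the output. It therefore suffices to show that \eqlca{k} and \leqlca{k} have the same running time up to constants, and since \eqlca{0} and \leqlca{0} are literally the same predicate, I may assume $k\ge 1$.

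The technical core I would establish is a gadget that adds a prescribed constant number $m$ of LCAs to \emph{every} pair. Given a DAG $G$ on $n$ vertices, form $G^{(m)}$ by (a) adding a sink copy $v'$ of each vertex $v$, with the single edge $v\to v'$, and (b) adding $m$ fresh ``apex'' vertices $c_1,\dots,c_m$, each with an edge to every $v'$. This has $O(n)$ vertices and is built in $O(n^2)$ time. I would then check, by a direct reachability/domination argument, that: (i) every pair $\{u',v'\}$ with $u\neq v$ is incomparable in $G^{(m)}$, since the $v'$ are sinks; (ii) under the standard convention that a vertex is its own ancestor, $\#\lca_{G^{(0)}}(u',v')=\#\lca_G(u,v)$ for all $u\neq v$, because the common ancestors of $u'$ are exactly $\{u\}\cup\Anc_G(u)$, so the set of common ancestors of $(u',v')$ and the domination relations among them are in bijection with those of $(u,v)$ in $G$ --- even when one of $u,v$ reaches the other; and (iii) each $c_i$ is an LCA of every such $(u',v')$ (it reaches both, and all its proper descendants are sinks, hence not common ancestors), while the $c_i$ have no in-edges and do not change reachability among the original vertices, so they neither dominate nor are dominated by any original LCA. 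Combining (ii) and (iii) gives $\#\lca_{G^{(m)}}(u',v')=\#\lca_G(u,v)+m$ for every $u\neq v$.

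Given the gadget, both reductions are short. For \eqlca{k}$\,\le\,$\leqlca{k}: run \leqlca{k} on $G$ and on $G^{(1)}$; a pair has exactly $k$ LCAs in $G$ iff it has at most $k$ LCAs in $G$ but does not have at most $k-1$, and by the gadget the latter is exactly ``$(u',v')$ does not have at most $k$ LCAs in $G^{(1)}$,'' so reading off the answers costs $O(n^2)$. For \leqlca{k}$\,\le\,$\eqlca{k}: a pair has at most $k$ LCAs in $G$ iff it has exactly $j$ for some $j\in\{0,\dots,k\}$, and by the gadget ``$(u,v)$ has exactly $j$ LCAs in $G$'' $\iff$ ``$(u',v')$ has exactly $k$ LCAs in $G^{(k-j)}$''; so run \eqlca{k} on $G^{(0)},G^{(1)},\dots,G^{(k)}$ --- that is $k+1=O(1)$ calls on $O(n)$-vertex graphs --- and OR the answers over the sink-copy pairs. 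Together with the first paragraph this gives the claimed three-way equivalence.

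The hard part is the gadget, in particular property (ii) and the underlying idea. The naive attempt of adding a single ``universal'' new ancestor to $G$ fails: such a vertex is dominated by \emph{any} preexisting common ancestor of a pair and hence is never an LCA, and one cannot afford a fresh private vertex per pair ($\Theta(n^2)$ of them would blow up the vertex count). Passing to sink copies is precisely what makes all relevant pairs pairwise incomparable, so that one apex vertex can serve simultaneously as a genuine, undominated new LCA of every pair; the bookkeeping that needs care is verifying that the sink-copy operation leaves every LCA count unchanged, with comparable pairs being the delicate case.
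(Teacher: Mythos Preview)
Your proposal is correct and follows essentially the same approach as the paper: your gadget $G^{(1)}$ is exactly the paper's Lemma~\ref{lem:one_more_LCA} construction (sink copies plus one apex), your $G^{(m)}$ is what the paper obtains by iterating that lemma $m$ times (Corollary~\ref{cor:ktok+1}), and your two reductions are the same identities the paper uses, just with the \leqlca{k}$\to$\geqlca{(k+1)}$\to$\eqlca{k} chain collapsed into a single step.
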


Now we can focus on \eqlca{k}, and due to the above equivalence, we also obtain results for the other variants.

Next, we extend the result of Kowaluk and Lingas~\cite{KowalukL07} for pairs with unique LCAs to pairs with two LCAs by showing that \eqlca{k} can be solved in $O(n^\omega)$ time for both $k=1,2$. Moreover, the corresponding witness LCAs can be listed in the same time.

\begin{theorem}
\label{thm:exact_1_2_lca}
    \eqlca{1} and \eqlca{2} can be solved in $O(n^\omega)$ time with high probability by Las Vegas algorithms. Moreover, finding the LCAs for vertex pairs $(u, v)$ that have exactly $1$ or $2$ LCAs can also be solved in $O(n^\omega)$ time with high probability. 
\end{theorem}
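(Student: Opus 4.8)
The plan is to reduce \eqlca{1} and \eqlca{2} to a constant number of integer matrix products together with, for every vertex pair, the extraction of one or two canonically-defined ``topmost'' common ancestors. First I would compute the transitive closure and a topological order $\TOP$ in which every vertex precedes all of its proper descendants. From the transitive closure we read off the ancestor matrix $M$ (so $M[w][x]=1$ iff $w$ reaches $x$, with $x\in\Anc(x)$), the ancestor-set sizes $a_w:=|\Anc(w)|$, and --- via the single integer product $M^{\top}M$ in $O(n^\omega)$ time --- the common-ancestor counts $c_{u,v}:=|\Anc(u)\cap\Anc(v)|$ and, more generally, all pairwise intersection sizes $|\Anc(x)\cap\Anc(y)|$. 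All of this costs $O(n^\omega)$.

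The structural heart of the argument: for any pair $(u,v)$ with a common ancestor, let $w^{\star}(u,v)$ be the common ancestor of largest $\TOP$-value. Then $w^{\star}$ is \emph{always} an LCA (a proper descendant of $w^{\star}$ comes later in $\TOP$, hence cannot be a common ancestor), and $\Anc(w^{\star})\subseteq\Anc(u)\cap\Anc(v)$. Hence $(u,v)$ has a unique LCA iff $\Anc(w^{\star})=\Anc(u)\cap\Anc(v)$, iff $a_{w^{\star}}=c_{u,v}$; and, since every common ancestor $w$ has $a_w\le c_{u,v}$, equivalently iff some common ancestor attains $a_w=c_{u,v}$, in which case that vertex is the unique LCA. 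For the two-LCA case, let $w_1(u,v)$ be the common ancestor of largest $\TOP$-value that does not reach $w^{\star}(u,v)$; a short poset argument (using that $\Anc(u)\cap\Anc(v)$ is downward closed, so its maximal elements are exactly the LCAs) shows $(u,v)$ has exactly two LCAs iff $w_1$ exists, $w_1\ne w^{\star}$, $w_1$ and $w^{\star}$ are incomparable, and $|\Anc(w_1)\cup\Anc(w^{\star})|=c_{u,v}$, and then $\{w_1,w^{\star}\}$ are precisely those two LCAs; the union size is obtained from the precomputed intersection sizes by inclusion--exclusion. Crucially all of these tests are \emph{self-certifying}: whatever candidates $w$ (resp.\ $w,w'$) we come up with, if $w\in\Anc(u)\cap\Anc(v)$ and $a_w=c_{u,v}$ then $w$ genuinely is the unique LCA, and if $w,w'\in\Anc(u)\cap\Anc(v)$ are incomparable with $|\Anc(w)\cup\Anc(w')|=c_{u,v}$ then $(u,v)$ genuinely has exactly those two LCAs --- so the output is correct no matter how the candidates were generated.

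It remains to produce, for every pair, the candidate $w^{\star}$ and (for $k=2$) $w_1$, in $O(n^{\omega})$ time. Computing $w^{\star}$ for all pairs is a maximum-witness-type task, which in general costs roughly $n^{2.5}$; the point is that we only need the answer to be trustworthy on pairs that actually have at most two LCAs, and on such pairs the winning witness is pinned down exactly (it is the unique common ancestor of maximal ancestor-count). Following and extending Kowaluk and Lingas's unique-LCA technique, I would recover it using $O(1)$ auxiliary integer products combined with randomization (random sampling / isolation over the vertex set, and weighting common ancestors by their $\TOP$-labels so the isolated witness is read off from the product), all within $O(n^{\omega})$ time, and then validate each recovered $w^{\star}$ against the self-certifying test above. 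For $w_1$, the difficulty is that the relevant restriction ``throw away the vertices reaching $w^{\star}(u,v)$'' depends on the pair; I would group the pairs by the value of $w^{\star}(u,v)$, so that within a group the deleted set $\Anc(w^{\star})$ is fixed, run the analogous recovery per group, and again certify every surviving candidate with the inclusion--exclusion identity. Since wrong candidates are discarded and, with high probability, every genuine $\le 2$-LCA pair has its witness(es) recovered, this is a Las Vegas algorithm running in $O(n^{\omega})$ time with high probability, and the recovered LCAs are emitted in the process.

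The main obstacle is exactly this witness-extraction step: pushing the cost down to $O(n^{\omega})$ instead of the $\Theta(n^{2.5})$-type bound of generic maximum witness, which forces one to genuinely exploit that for a pair with at most two LCAs the common-ancestor set is a union of at most two principal ideals --- so its topmost element(s) are uniquely determined --- and, in the $k=2$ case, to argue that grouping the pairs by $w^{\star}$ does not inflate the total work past $O(n^{\omega})$. Everything else (transitive closure, the single $M^{\top}M$ product, the ancestor-count bookkeeping, and the inclusion--exclusion checks) is routine and fits within $O(n^{\omega})$.
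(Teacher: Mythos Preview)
Your structural layer is correct and is essentially the Kowaluk--Lingas verification idea: the self-certifying tests $a_w=c_{u,v}$ for one LCA and $|\Anc(w)\cup\Anc(w')|=c_{u,v}$ (via inclusion--exclusion on the precomputed pairwise intersections) for two incomparable common ancestors are exactly right, and they make the algorithm Las Vegas once any candidates are produced.

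The gap is in the witness extraction for $k=2$, which you yourself flag as ``the main obstacle'' without resolving. Your plan is to first recover $w^\star(u,v)$, then group all pairs by the value of $w^\star$, and ``run the analogous recovery per group'' to get $w_1$. But there can be $\Theta(n)$ distinct values of $w^\star$, and each group's recovery is an extraction problem on the restricted universe $V\setminus\Anc(w^\star)$; a fresh matrix product per group costs $n\cdot n^\omega$. (The grouping \emph{can} be made to work in $\tO(n^2)$ extra time, because both the target hash $f\big((\Anc(u)\cap\Anc(v))\setminus\Anc(w^\star)\big)=F(u,v)-f(\Anc(w^\star))$ and the candidate hashes $f\big(\Anc(w)\setminus\Anc(w^\star)\big)=f(\Anc(w))-F(w,w^\star)$ are already readable from the single precomputed matrix $F$, so each group needs only an $O(n\log n)$ sort and lookups---but you have not said any of this, and ``isolation weighted by $\TOP$-labels'' does not get you there.)

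The paper sidesteps the two-phase plan entirely. With a random $f:V\to\ZZ_p$ it computes $F(u,v)=f(\Anc(u)\cap\Anc(v))$ via one product, and---this is the step you are missing---computes $H(a,b)=f(\Anc(a)\cup\Anc(b))$ for \emph{all} pairs $(a,b)$ in one more product, using $\Anc(a)\cup\Anc(b)=\overline{\overline{\Anc(a)}\cap\overline{\Anc(b)}}$ and multiplying the complemented reachability matrix against a weighted copy of itself. It then sorts the $n^2$ values $H(a,b)$ and, for each $(u,v)$ already known not to have a unique LCA, binary-searches for some $(a,b)$ with $F(u,v)=H(a,b)$; by Claim~\ref{claim:lca_reachability} such a match exists (whp) iff $\lca(u,v)=\{a,b\}$, and both LCAs are read off at once. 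No sequential dependence on first finding $w^\star$, no grouping, just two products and a sort.
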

This theorem appears as Theorems~\ref{thm:exact_1_lca} and \ref{thm:exact_2_lca} in the main text. By our equivalence theorem, the same result applies to \geqlca{(k+1)} and \leqlca{k} for $k=1,2$. 

Our algorithm for \eqlca{1} is different from that of \cite{KowalukL07}. The algorithm of \cite{KowalukL07} is deterministic while ours is randomized, so it is seemingly weaker. We nevertheless include our approach to \eqlca{1} as it is simple and saves a factor of $\log n$. Additionally, our approach generalizes to \eqlca{2}.

As our techniques no longer seem to work for the case of deciding if there are exactly 3 LCAs, we turn to conditional lower bounds. We prove that under popular fine-grained hypotheses, 
the following hold in the word-RAM model with $O(\log n)$ bit words (Theorem~\ref{thm:apeq3456lb}):
\eqlca{k} requires time $n^{2.5-o(1)}$ for $k=3$, $n^{8/3-o(1)}$ for $k=4$, $n^{2.8-o(1)}$ for $k=5$ and $n^{3-o(1)}$ for $k=6$.

With our earlier equivalence theorem in mind, our conditional lower bound for \eqlca{3} means that detecting for each pair whether it has at least $4$ LCAs, or listing $4$ LCAs per vertex pair also requires $n^{2.5 - o(1)}$ time. In particular, this shows that listing 4 LCAs is more difficult than listing just one LCA per vertex pair, as the latter has an $O(n^{2.447})$ time algorithm~\cite{grandoni2020lca}.

Furthermore, our conditional lower bound for \eqlca{6}  also implies that \geqlca{7} requires $n^{3-o(1)}$ time, and hence the clearly even harder problem of listing $7$ LCAs per vertex pair requires $n^{3-o(1)}$ time. 
This is intriguing since as we mentioned earlier, we can list all LCAs per pair in essentially cubic time if $\omega=2$.

We also show the following algorithmic results for \apkarglca{2} and \apkarglca{3}. 

\begin{restatable}{theorem}{TwoThreelisting}
\label{thm:2_3_listing}
For $k = 2$ and $k = 3$, the \apklca{} problem can be deterministically solved in $\tilde{O}(n^{2 + \lambda})$ time, where $\lambda$ satisfies the equation $\omega(1, \lambda, 1) = 1 + 2 \lambda$. Here, $\omega(1, \lambda, 1)$ is the exponent of multiplying an $n\times n^\lambda$ by an $n^\lambda\times n$ matrix.
\end{restatable}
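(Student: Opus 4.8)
The plan is to build on the block-based rectangular-matrix-multiplication technique underlying the $\tilde O(n^{2+\lambda})$-time maximum witness algorithm and the corresponding AP-LCA algorithm~\cite{CzumajKL07}, extending it to recover the two (resp.\ three) topologically latest LCAs of every pair. First I would compute the transitive closure $R$ with $R[i][i]=1$ and a topological order, relabel the vertices so an ancestor always gets a smaller label, and cut the label range into $b=n^{1-\lambda}$ contiguous blocks $B_1<\dots<B_b$ of size $n^\lambda$ each. The structural backbone is a greedy description of the LCA list: set $w_1(u,v)$ to be the common ancestor of $u,v$ with the largest label, and inductively let $w_{t+1}(u,v)$ be the largest-label common ancestor $c$ with $R[c][w_s(u,v)]=0$ for all $s\le t$ (undefined if no such $c$). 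Writing $\ell_1>\ell_2>\cdots$ for the LCAs of $u,v$ in decreasing label order, one shows $w_s=\ell_s$ for all $s$: since the LCAs form an antichain, $\ell_{t+1}$ is incomparable to $\ell_1,\dots,\ell_t$ and hence lies in the candidate set for $w_{t+1}$; conversely, any candidate with a strictly larger label than $\ell_{t+1}$ is a common ancestor that is an ancestor of some already-listed $\ell_j$ (if the candidate is itself an LCA it must be one of $\ell_1,\dots,\ell_t$; otherwise a maximal common ancestor above it is), contradicting the candidate condition $R[c][\ell_j]=0$. So it suffices to compute $w_1,w_2$ (and $w_3$ when $k=3$) for all pairs.

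The crux is to express the query ``does block $B_i$ contain a legal next candidate for $(u,v)$?'' as an $O(1)$-time lookup into a single integer matrix computed once per block. For each $i$, let $N_i$ be the $n\times n$ integer matrix with $N_i[x][y]=|\{c\in B_i: R[c][x]=R[c][y]=1\}|$; this is one rectangular integer product of a $0/1$ matrix of size $n\times n^\lambda$ by one of size $n^\lambda\times n$, costing $\tilde O(n^{\omega(1,\lambda,1)})$, hence $\tilde O(n^{1-\lambda+\omega(1,\lambda,1)})=\tilde O(n^{2+\lambda})$ over all $b$ blocks by the defining equation $\omega(1,\lambda,1)=1+2\lambda$. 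Write $a_i(x):=N_i[x][x]=|\{c\in B_i:R[c][x]=1\}|$. Then for a pair $(u,v)$: block $B_i$ has a common ancestor iff $N_i[u][v]>0$; it has a common ancestor $c$ with $R[c][w_1(u,v)]=0$ iff $N_i[u][v]>a_i(w_1(u,v))$ --- here one uses that $w_1(u,v)$ is itself a common ancestor of $u,v$, so $\{c\in B_i:R[c][w_1(u,v)]=1\}$ is \emph{contained in} the common ancestors of $u,v$ in $B_i$; and, by inclusion--exclusion, it has a common ancestor $c$ with $R[c][w_1(u,v)]=R[c][w_2(u,v)]=0$ iff $N_i[u][v]>a_i(w_1(u,v))+a_i(w_2(u,v))-N_i[w_1(u,v)][w_2(u,v)]$, again because $R[c][w_1]=1$ or $R[c][w_2]=1$ already forces $c$ to be a common ancestor of $u,v$. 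In each case the subtracted quantity is exactly the number of common ancestors of $u,v$ in $B_i$ that are ancestors of some already-committed LCA, so the strict inequality holds precisely when $B_i$ contains a legal next candidate.

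Given these thresholds, computing $w_1$, then $w_2$, then $w_3$ is routine, done in three sequential sweeps. In each sweep, for every pair I scan the blocks from $B_b$ down to $B_1$ to locate the last block whose threshold is strictly exceeded (total $O(n^2 b)=O(n^{3-\lambda})$, which is $O(n^{2+\lambda})$ because $\lambda\ge 1/2$ since $\omega(1,\lambda,1)\ge 2$), then brute-force over the $n^\lambda$ vertices of that block to output the largest-label $c$ meeting the exact per-vertex conditions ($R[c][u]=R[c][v]=1$ and $R[c][w_s(u,v)]=0$ for the already-committed LCAs), costing $O(n^{2+\lambda})$; if no block qualifies, the pair has fewer LCAs than requested and we stop there. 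Caching the $N_i$'s (in $O(n^{3-\lambda})\le O(n^{2+\lambda})$ space), or recomputing them in each sweep, stays within $\tilde O(n^{2+\lambda})$, and every step is deterministic.

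I expect the heart of the argument to be the two inclusion--exclusion identities above: they are what keep every tracked quantity two-dimensional --- a single matrix $N_i$ or its diagonal --- and they work precisely because each already-committed LCA $w_s(u,v)$ is a common ancestor of the pair, so ``$c$ reaches $w_s$'' is strictly stronger than ``$c$ is a common ancestor of $u,v$''. This is also exactly why the method caps out at $k=3$: deciding the next candidate for $k=4$ would require, per block, the number of block-vertices reaching three prescribed targets simultaneously --- a three-dimensional object that cannot be read off an $n\times n$ matrix. A secondary point to verify carefully is the structural claim that the greedy list $w_1,w_2,\dots$ really coincides with the LCAs in label order, which is what makes the ``last qualifying block plus brute force'' procedure correct.
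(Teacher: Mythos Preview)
Your proposal is correct and follows essentially the same route as the paper: partition the topological order into $n^{1-\lambda}$ blocks of size $n^\lambda$, compute per block the counts $N_i[x][y]=|\Anc(x)\cap\Anc(y)\cap B_i|$ via rectangular matrix multiplication, and use the containment $\Anc(\ell_1)\subseteq\Anc(u)\cap\Anc(v)$ (resp.\ $\Anc(\ell_1)\cup\Anc(\ell_2)\subseteq\Anc(u)\cap\Anc(v)$) to reduce the ``is there a fresh candidate in this block?'' question to a single comparison, then brute-force inside the first qualifying block. The only cosmetic differences are that the paper phrases the test as an equality $|\Anc(u)\cap\Anc(v)\cap V_i|=|\Anc(\ell_1(u,v))\cap V_i|$ while you use the equivalent strict inequality, and the paper invokes Czumaj et al.\ as a black box for $\ell_1$ whereas you recover it within the same block framework; your explicit verification that the greedy list $w_1,w_2,\dots$ coincides with the LCAs in label order is a point the paper leaves implicit.
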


The running time for \apklca{} above matches
 the best known running time for \maxwitness{} product \cite{CzumajKL07}.
Using the current best bounds for rectangular matrix multiplication~\cite{GallU18}, the runtime we get for \apklca{} is $O(n^{2.529})$ for $k=2$ and $3$. 
\subparagraph*{Counting LCAs.}
We now turn our attention to computing the number of LCAs for every pair of vertices in a DAG. We call this problem 
\countlca{}. As shown in \cite{Eckhardt2007FastLC}, we can list all LCAs for every pair of vertices in $O(n^{\omega(1,2,1)})$ time, which is essentially cubic time if $\omega=2$. Thus in particular, we can also count all the LCAs in the same amount of time. 

One might wonder, can the counts be computed faster, in truly subcubic time? We show that under the Strong Exponential Time (SETH) Hypothesis \cite{ipz01,cip10,cip13}, this is impossible, even if we are only required to return the count for a vertex pair if it is smaller than some superconstant function $g(n)$. Notice that we can solve this restrained case in $O(n^3 g(n))$ time using the brute-force algorithm, so the following theorem is tight up to $n^{o(1)}$ factors when $g(n)$ is $\tO(1)$. 

\begin{restatable}{theorem}{SETH}
\label{thm:SETH}
Assuming SETH, \countlca{} requires $n^{3-o(1)}$ time, even if we only need to return the minimum between the count and $g(n)$ for any $g(n) = \omega(1)$. 
\end{restatable}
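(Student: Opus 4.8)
The plan is to reduce from $3$-Orthogonal Vectors ($3$-OV). Under SETH, for every $\epsilon>0$ there is no $O(N^{3-\epsilon})$-time algorithm deciding, given sets $A,B,C$ of $N$ Boolean vectors in $d=N^{o(1)}$ dimensions, whether some $a\in A$, $b\in B$, $c\in C$ satisfy $a_\ell b_\ell c_\ell=0$ for all $\ell$; this is the textbook split-the-variables-into-thirds reduction from $k$-SAT, with coordinates indexed by clauses (the dimension being $(\log N)^{O(1)}=\omega(\log N)$, which is what makes the $N^{3-o(1)}$ bound tight). Given such an instance I would build an $O(N)$-vertex DAG whose all-pairs LCA counts reveal, for every pair $(a,b)$, the number of $c$'s completing $(a,b)$ to an orthogonal triple, and then read off the answer.

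The DAG: a sink $x_a$ for each $a\in A$, a sink $y_b$ for each $b\in B$, a vertex $z_c$ for each $c\in C$, and a ``coordinate gadget'' vertex $w_\ell$ for each $\ell\in[d]$. Add edges making every $z_c$ reach every $x_a$ and every $y_b$ (cheaply, by routing through a hub that reaches only the $x$'s and another that reaches only the $y$'s, so that neither hub is a common ancestor of any queried pair), add $z_c\to w_\ell$ whenever $c_\ell=1$, and add $w_\ell\to x_a$ whenever $a_\ell=1$ and $w_\ell\to y_b$ whenever $b_\ell=1$. The graph is acyclic. One checks that $w_\ell$ is a common ancestor of $(x_a,y_b)$ iff $a_\ell=b_\ell=1$, that the common ancestors of $(x_a,y_b)$ lying strictly below $z_c$ are exactly those $w_\ell$ with $c_\ell=a_\ell=b_\ell=1$, and hence that $z_c$ is an LCA of $(x_a,y_b)$ precisely when $(a,b,c)$ is orthogonal. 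Therefore the number of LCAs of $(x_a,y_b)$ equals $G(a,b)+\langle a,b\rangle$, where $G(a,b)=\lvert\{c:(a,b,c)\text{ orthogonal}\}\rvert$ and $\langle a,b\rangle=\lvert\{\ell:a_\ell=b_\ell=1\}\rvert$ is the number of ``spurious'' LCAs contributed by the gadgets.

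To extract the answer, compute all the numbers $\langle a,b\rangle$ directly in $N^{2+o(1)}$ time; the instance is a yes-instance iff some pair $(x_a,y_b)$ has strictly more than $\langle a,b\rangle$ LCAs, and from the returned value $\min(G(a,b)+\langle a,b\rangle,\,g(n))$ this comparison is immediate for every pair with $\langle a,b\rangle<g(n)$ (then $G(a,b)\ge 1$ iff the returned value is at least $\langle a,b\rangle+1$). I expect the main obstacle to be the remaining ``heavy'' pairs, those with $\langle a,b\rangle\ge g(n)$, on which the cap erases all useful information: one must refine the construction so that the number of spurious LCAs of any fixed pair is bounded by a constant (depending only on the arity $k$) rather than by $d$, so that a slowly growing cap $g(n)=\omega(1)$ eventually clears it. This is exactly why the hypothesis is $g(n)=\omega(1)$ and not $g(n)=O(1)$: a constant cap cannot give an $n^{3-o(1)}$ bound, since for instance $\eqlca{3}$ is not known to require $n^{3-o(1)}$ time, while any superconstant cap eventually sees past a bounded slop.

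Finally, the bookkeeping: the DAG has $n=O(N)$ vertices and is constructed in $N^{2+o(1)}$ time, and the post-processing costs $N^{2+o(1)}$. So an $O(n^{3-\epsilon})$-time algorithm for $\countlca$ — even the restricted version that only returns $\min(\cdot,g(n))$ for some $g(n)=\omega(1)$ — would decide $3$-OV in $O(N^{3-\epsilon}+N^{2+o(1)})=O(N^{3-\epsilon'})$ time, contradicting SETH.
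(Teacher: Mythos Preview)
Your reduction from $3$-OV is correct for the \emph{uncapped} version of \countlca{}: the DAG is as described, and the LCA count of $(x_a,y_b)$ is indeed $G(a,b)+\langle a,b\rangle$, so subtracting the inner product recovers $G(a,b)$. That already gives the $n^{3-o(1)}$ bound for full counting.

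The gap is exactly where you say it is, and you do not close it. You write that ``one must refine the construction so that the number of spurious LCAs of any fixed pair is bounded by a constant (depending only on the arity $k$),'' but you never carry out this refinement; the proposal ends with the $3$-OV construction still in place, whose spurious count $\langle a,b\rangle$ can be as large as $d$. Since under SETH the dimension $d$ coming out of $k$-SAT is $\Theta_k(\log N)$ (or $\mathrm{poly}(\log N)$ without sparsification), a cap like $g(n)=\log\log n$ wipes out the information on essentially all pairs, and there is no evident way to handle the heavy pairs in subcubic time. So as written the argument proves the theorem only for $g(n)\ge d$, not for arbitrary $g(n)=\omega(1)$.

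The paper does precisely the refinement you gesture at, but not by tweaking the $3$-OV gadget: it bypasses OV entirely and, for each $k$, reduces \maxksat{k} to \hyperclique{2k-1}{k} and then to an LCA instance in which the ``middle layer'' has exactly $\binom{2k-2}{k-1}$ common ancestors for every queried pair. Thus the spurious LCA count is the \emph{same constant} $\binom{2k-2}{k-1}$ for all pairs, and any $g(n)=\omega(1)$ eventually exceeds it, so the capped oracle solves \eqlca{\binom{2k-2}{k-1}} and hence $k$-SAT. The crucial structural point, which your $3$-OV construction lacks, is that the middle layer is indexed by $(k-1)$-subsets of a $(2k-2)$-element set rather than by coordinates of growing dimension; this is what pins the spurious count to a constant per $k$. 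Getting from your sketch to a proof requires supplying a construction with this property, and that is the actual content of the lemma.
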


The current best running time $O(n^{\omega(1,2,1)})$ for listing LCAs and also for \countlca{} is actually supercubic, however. For the current best bounds on $\omega(1,2,1)$, it is $O(n^{3.252})$~\cite{GallU18}.
In fact, there are serious limitations of the known matrix multiplication techniques \cite{almanuni,ChristandlGLZ20,zuiddamccc} that show that current techniques cannot be used to prove that $\omega(1,2,1)<3.05$.

In this case, the cubic lower bound for \countlca{} under SETH would not be entirely satisfactory. We thus present a tight conditional lower bound from the \fourclique{} problem.

The \fourclique{} problem asks, does a given $n$-node graph contain a clique on $4$ nodes? The fastest known algorithm for \fourclique{} runs in $\tilde{O}(n^{\omega(1,2,1)})$ time~\cite{EisenbrandG04}, which has remained unchallenged for almost two decades. We show that an improvement over the $O(n^{\omega(1,2,1)})$ time for \countlca{} would also solve \fourclique{} faster. 

\begin{restatable}{theorem}{FourCliqueHardness}
\label{thm:4_clique_hardness}
    If the \countlca{} problem can be solved in $T(n)$ time, then \fourclique{} can be computed in $O(T(n) + n^\omega)$ time.
\end{restatable}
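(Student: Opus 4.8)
The plan is to reduce \fourclique{} on an $n$-node graph $G$ to one call of \countlca{} on a DAG $H$ with $O(n)$ vertices, together with $O(n^\omega)$ pre- and post-processing. The starting point is the combinatorial identity underlying both the $O(n^{\omega(1,2,1)})$-time \fourclique{} algorithm and the \alllca{} algorithm of~\cite{Eckhardt2007FastLC}: writing $A$ for the adjacency matrix of $G$ and $S_{bc} := N(b)\cap N(c)$ for the common neighbourhood of two vertices, $G$ has a $4$-clique if and only if there is an \emph{edge} $\{b,c\}$ of $G$ for which $G[S_{bc}]$ contains at least one edge. Since $|S_{bc}| = (A^2)[b][c]$, the whole matrix of common-neighbourhood sizes can be computed in $O(n^\omega)$ time.

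First I would build a layered DAG $H$ from $G$ with three groups of $n$ vertices, each a copy of $V(G)$: a bottom group of \emph{query} vertices $\{q_v\}$, a middle group of \emph{witness} vertices $\{w_d\}$, and a top group of \emph{candidate} vertices $\{z_a\}$. All edges point downward and are inserted to mirror adjacency in $G$: the intent is that $w_d$ reaches $q_v$ exactly when $d\sim v$, that a witness $w_d$ is a proper descendant of a candidate $z_a$ exactly when $a\sim d$, and that $z_a$ reaches $q_v$ exactly when $a\sim v$. The designated queries are the pairs $(q_b,q_c)$, one for each pair of vertices $b,c$ of $G$.

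The core step is an exact formula for the number of LCAs of $(q_b,q_c)$ in $H$. Since a sink cannot reach two distinct sinks, no query vertex is a common ancestor of $(q_b,q_c)$, so the common ancestors of $(q_b,q_c)$ are precisely the witnesses $w_d$ with $d\in S_{bc}$ and the candidates $z_a$ with $a\in S_{bc}$. Every such $w_d$ is automatically a minimal common ancestor (its only descendants are query vertices), while a candidate $z_a$ with $a\in S_{bc}$ is an LCA if and only if it has no witness descendant that is itself a common ancestor, i.e.\ if and only if $N(a)\cap S_{bc}=\emptyset$. Hence
\[
\#\lca(q_b,q_c)\;=\;|S_{bc}|\;+\;\bigl|\{\,a\in S_{bc}\ :\ N(a)\cap S_{bc}=\emptyset\,\}\bigr| ,
\]
whose second summand is at most $|S_{bc}|$, with equality exactly when $G[S_{bc}]$ is edgeless. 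Thus $G[S_{bc}]$ has an edge if and only if $\#\lca(q_b,q_c) < 2|S_{bc}|$. The resulting \fourclique{} algorithm is: call \countlca{} once on $H$; compute $A^2$ in $O(n^\omega)$ time; and answer ``yes'' iff some edge $\{b,c\}$ of $G$ satisfies $\#\lca(q_b,q_c) < 2|S_{bc}|$. Since $H$ has $O(n)$ vertices this runs in $O(T(n)+n^\omega)$ time, and correctness follows from the identity in the first paragraph.

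The step I expect to be the main obstacle is realizing the three reachability specifications above \emph{exactly} in a DAG with only $O(n)$ vertices. The difficulty is that each witness $w_d$ has to be simultaneously reachable from the candidates (so that it can block them) and able to reach the query vertices (so that it is a genuine common ancestor), which threatens to introduce spurious paths $z_a\to w_d\to q_b$ and to degrade the clean statement ``$z_a$ reaches $q_b$ iff $a\sim b$'' into a weaker $N(a)\cap N(b)\neq\emptyset$-type condition that no longer detects $4$-cliques. Resolving this — via a careful gadget and a careful choice of queried pairs that preserves clean candidate-to-query reachability while still forcing a \emph{single} common neighbour $d$ of the triangle $\{a,b,c\}$ to serve as the blocker, and that confines any unintended LCAs to a set whose contribution is recomputable in $O(n^\omega)$ time — is the technical heart of the argument; the rest is the bookkeeping above.
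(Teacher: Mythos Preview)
Your high-level plan matches the paper's: build an $O(n)$-vertex three-layer DAG, call \countlca{} once, subtract the middle-layer LCA contribution via an $O(n^\omega)$ matrix product, and read off from the residual whether the required triangle-plus-common-neighbour pattern exists. But the proof is incomplete exactly where you say it is. With the edges you actually describe, $z_a$ reaches $q_b$ whenever $a\sim b$ \emph{or} $N(a)\cap N(b)\neq\emptyset$, so the set of top-layer common ancestors of $(q_b,q_c)$ is not $S_{bc}$, and your displayed formula for $\#\lca(q_b,q_c)$ is false for your construction. You flag this as the ``technical heart'' and leave it unresolved; without the gadget there is no proof.

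The paper's fix is not a patching of your three copies of $V(G)$ but a genuinely different gadget. First pass to a $4$-partite instance on $A\sqcup B\sqcup C\sqcup D$; the query pairs are $(a,b)\in A\times B$, the top layer is $C$, and the middle layer is $D\sqcup A'\sqcup B'$ where $A',B'$ are fresh copies of $A,B$. Each $a'\in A'$ has a single edge into $A$ (to $a$) and all edges into $B$, so the \emph{only} $A'$-vertex that is a common ancestor of a fixed $(a,b)$ is $a'$; symmetrically for $B'$. The key twist is that edges from $C$ to $A'\cup B'$ encode \emph{non}-adjacency ($c\to a'$ iff $\{c,a\}\notin E$), while direct edges $c\to a$ encode adjacency. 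Thus every $c\in C$ is always a common ancestor of $(a,b)$, and $c$ fails to be an LCA iff it reaches one of the middle-layer common ancestors $a',b'$, or some $d\in D$ with $d\sim a,d\sim b$; equivalently, iff $c\not\sim a$, or $c\not\sim b$, or some $d$ has $c\sim d,\ d\sim a,\ d\sim b$. Hence $|\lca(a,b)\cap C|$ counts exactly those $c$ adjacent to both $a,b$ with no such $d$, and subtracting it from $|\{c:c\sim a,c\sim b\}|$ (one matrix product) gives the number of $c$ that \emph{do} extend to a $4$-clique through some $d$. The middle-layer LCAs are just the common in-neighbours of $(a,b)$ there and are removed with one more matrix product. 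The complement-edge trick together with the single-matching edges from $A',B'$ is precisely the missing ingredient that kills the spurious $z_a\to w_d\to q_b$ paths you were worried about.
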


\subparagraph*{Verifying LCAs.} Oftentimes in algorithms, one is also concerned with the problem of verifying an answer besides computing an answer. In many cases, verification is an easier problem than computation.
For instance, even though computing the product of two $n\times n$ matrices $A$ and $B$ currently is only known to be possible in $O(n^{2.373})$ time, verifying whether the product of $A$ and $B$ is a matrix $C$ can be done in randomized $\tilde{O}(n^2)$ time. This was the basis of the Blum-Luby-Rubinfeld linearity test \cite{blumlubyrubinfeld}.

We consider the following two verification variants of \aplca{} which we call \vlca{} and \apvlca{}. In both variants, we are given an $n$-node DAG,
and for every pair of nodes $u,v$ in the DAG, we are also given a node $w_{u,v}$.
In \vlca{}, we want to determine whether all $w_{u,v}$ are LCAs for their respective pair $u,v$, i.e. that the matrix $w$ of candidate LCAs is all correct (or conversely, that there is {\em some} pair that has an incorrect entry). In the \apvlca{} variant we want to know for every $u,v$ whether $w_{u,v}$ is an LCA of $u$ and $v$, so this variant is potentially more difficult. After we compute the transitive closure of the graph, it takes $O(n)$ time to verify whether a vertex $w_{u, v}$ is indeed an LCA of $u$ and $v$. Thus, both \vlca{} and \apvlca{} can be solved in $O(n^3)$ time. No faster algorithm is known to the best of our knowledge.

Kowaluk and Lingas~\cite{KowalukL07} solved a variant of \apvlca{} concerning vertex pairs that have at most $2$ LCAs. Specifically, given one or two nodes per pair they showed how to  verify that those nodes are all the LCAs for the pair, in $O(n^\omega)$ time. However, their algorithm is not able to compute $2$ LCAs for vertex pairs that have exactly $2$ LCAs in $O(n^\omega)$ time.

Surprisingly, we provide strong evidence that \vlca{} and \apvlca{} are actually {\em harder} than \aplca{}, as \aplca{} can be solved in $O(n^{2.5-\eps})$ time for $\eps>0$, while under popular fine-grained hypotheses, \vlca{} and \apvlca{} require $n^{2.5-o(1)}$ time. 

Our first hardness result is that the running time of \apvlca{} is at least as high as that of the \maxwitness{} problem, whose current best running time is $O(n^{2.529})$ \cite{KowalukL05,GallU18}. If $\omega=2$, then \maxwitness{} would be solvable in $\tilde{O}(n^{2.5})$ time, and it is hypothesized \cite{lincoln2020monochromatic} that no $n^{2.5-o(1)}$ time algorithms exist for it. 

\begin{restatable}{theorem}{apvlbmaxwitness}
\label{thm:apvlb_maxwitness}
    If the \apvlca{} problem can be solved in $T(n)$ time, then the \maxwitness{} problem can be solved in $\tilde{O}(T(n))$ time.
\end{restatable}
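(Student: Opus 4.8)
The plan is to reduce \maxwitness{} to \apvlca{} with $O(\log n)$ oracle calls, computing the answer one bit at a time by a binary search over the witness index. Recall that \maxwitness{} asks: given Boolean $n\times n$ matrices $A,B$, output for every pair $(i,j)$ the value $\mu(i,j):=\max\{k : A[i][k]=B[k][j]=1\}$ (or ``no witness''). Assume WLOG that $n$ is a power of two. Note that \apvlca{} only \emph{verifies} a guessed LCA, so we cannot simply make every pair $(u_i,v_j)$ have the unique LCA ``$y_{\mu(i,j)}$'' and read it off; instead we must design, for each interval $J$ of witness indices, a single vertex whose LCA-status for $(u_i,v_j)$ certifies ``$(i,j)$ has a common witness inside $J$'', and then binary search on $J$.

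The key gadget is, for an interval $J\subseteq[n]$, a small DAG on the $2n$ sink vertices $u_1,\dots,u_n$ (``rows'') and $v_1,\dots,v_n$ (``columns''): a hub $h_J$, two padding vertices $a_J,b_J$, and witness vertices $y_{J,k}$ for $k\in J$, with edges $h_J\to a_J$, $h_J\to b_J$, $h_J\to y_{J,k}$ for every $k\in J$; $a_J\to u_i$ for every $i$; $b_J\to v_j$ for every $j$; and $y_{J,k}\to u_i$ whenever $A[i][k]=1$, $y_{J,k}\to v_j$ whenever $B[k][j]=1$. Through $a_J$ and $b_J$ the hub $h_J$ reaches every $u_i$ and every $v_j$, so $h_J$ is always a common ancestor of $(u_i,v_j)$; its only proper descendants that can be common ancestors of $(u_i,v_j)$ are the $y_{J,k}$ (since $a_J$ reaches no $v_j$, $b_J$ reaches no $u_i$, and $u_{i'},v_{j'}$ are sinks), and $y_{J,k}$ is one exactly when $A[i][k]=B[k][j]=1$, i.e. when $k$ is a common witness of $(i,j)$. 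Hence $h_J$ is an LCA of $(u_i,v_j)$ iff $(i,j)$ has no common witness inside $J$. A single \apvlca{} call with the candidate matrix defined by $w_{u_i,v_j}:=h_J$ (values on all other pairs are arbitrary and discarded) then reveals, for all $(i,j)$ at once, whether $(i,j)$ has a witness in $J$.

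To assemble the full matrix I binary search. First, one extra \apvlca{} call with $J=[n]$ identifies (and I then ignore) the pairs with no witness. For the others: after rounds $1,\dots,r-1$ I know the top $r-1$ bits of $\mu(i,j)$, which confine $\mu(i,j)$ to a dyadic interval $I_{i,j}$ of length $n/2^{r-1}$; since $\mu(i,j)$ is the \emph{largest} witness and lies in $I_{i,j}$, its $r$-th bit is $1$ exactly when there is a witness in the upper half $U_{i,j}$ of $I_{i,j}$. In round $r$ I take the union of the gadget above over all $2^{r-1}$ dyadic intervals at the relevant level (distinct gadgets share only $u_i,v_j$, so the DAG has $O(n)$ vertices and $O(n^2)$ edges, using $\sum_J|J|\le n$), set $w_{u_i,v_j}:=h_{U_{i,j}}$, and make one \apvlca{} call, which returns the $r$-th bit of every $\mu(i,j)$. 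After $\log n$ rounds the whole max-witness matrix is known. The cost is $O(\log n)$ rounds, each costing $O(n^2)$ for the construction plus one \apvlca{} call on an $O(n)$-vertex DAG, for a total of $\tilde{O}(T(n))$ (using $T(n)=\Omega(n^2)$ and that $T$ is polynomially bounded).

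The one step that needs care — and the reason the gadget has this shape — is the ``conjunction'' issue: if $h_J$ reached $u_i$ and $v_j$ only through the $y_{J,k}$, then $h_J$ being a common ancestor of $(u_i,v_j)$ would merely say that some $A$-entry of row $i$ inside $J$ is set \emph{and} some $B$-entry of column $j$ inside $J$ is set, not that both happen at the same $k$; a ``no'' from the oracle would then conflate ``no common witness in $J$'' with ``neither side fires in $J$''. The padding vertices $a_J,b_J$ remove this ambiguity by making $h_J$ a common ancestor unconditionally, so that the informative event becomes whether $h_J$ is blocked from below by a genuine common witness $y_{J,k}$, and a ``no'' answer then means precisely ``$(i,j)$ has a common witness in $J$''. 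The rest is routine: the graph is three-layered, hence acyclic; no edges run between different interval gadgets, so each $h_J$'s LCA status depends only on its own $y_{J,k}$'s; and the bit-extraction step is valid precisely because we want the \emph{maximum} witness. (One can also avoid rebuilding the DAG each round, using a single $O(n\log n)$-vertex DAG that contains a gadget for every dyadic interval and only varying the candidate matrix across rounds.)
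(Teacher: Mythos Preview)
Your proposal is correct and follows essentially the same approach as the paper: a parallel binary search using $O(\log n)$ \apvlca{} calls, where at each level a ``hub'' vertex per candidate prefix is wired to reach every $u_i$ and $v_j$ unconditionally and to the witness vertices in its interval, so that its LCA status for $(u_i,v_j)$ encodes exactly whether $(i,j)$ has a witness in that interval. The only cosmetic differences are that the paper makes the hub a common ancestor via direct edges to all $u_i,v_j$ (rather than your padding intermediaries $a_J,b_J$), reuses the original $K$ vertices across phases (rather than creating fresh $y_{J,k}$ copies per interval), and handles ``no witness'' pairs by padding $A,B$ with an all-ones row/column instead of your extra $J=[n]$ call.
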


Note that Czumaj, Kowaluk and Lingas's algorithm~\cite{CzumajKL07} for \aplca{} is essentially a reduction from \aplca{} to \maxwitness{}. Combined with their algorithm, the above theorem says that if \apvlca{} can be solved in $T(n)$ time, then \aplca{} can be solved in $\tO(T(n))$ time.

Our second result is the hardness of \vlca{} based on the hardness of the \hyperclique{4}{3} problem:
given  a $3$-uniform hypergraph on $n$ nodes, return whether it contains a $4$-hyperclique. This problem is hypothesized to require $n^{4-o(1)}$ time \cite{lincoln2018tight}, and solving it in $O(n^{4-\eps})$ time for $\eps>0$ would imply improved algorithms for Max-3-SAT and other problems (see \cite{lincoln2018tight} and the discussion therein).

\begin{restatable}{theorem}{apvlbhyperclique}
\label{thm:apvlb_hyperclique}
Assuming the \hyperclique{4}{3} hypothesis, \vlca{} requires $n^{2.5-o(1)}$ time. 
\end{restatable}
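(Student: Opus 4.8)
The plan is to reduce \hyperclique{4}{3} detection on $n$ vertices to \vlca{} on a DAG with only $\tO(n^{1.6})$ vertices. Since any \vlca{} algorithm already spends $\Omega(N^2)$ time reading the candidate matrix $w$, a hypothetical $O(N^{2.5-\eps})$-time algorithm for \vlca{} (which subsumes the $\tO(N^2)$ cost of writing down the instance) would detect $4$-hypercliques in $\tO(n^{1.6(2.5-\eps)})=\tO(n^{4-1.6\eps})$ time, contradicting the \hyperclique{4}{3} hypothesis; this yields the claimed $n^{2.5-o(1)}$ lower bound for \vlca{}. So the whole task is to build, from a $3$-uniform hypergraph, a \vlca{} instance on $\tO(n^{1.6})$ vertices that is a NO-instance exactly when the hypergraph has a $4$-hyperclique.

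The starting observation is a clean characterisation of a wrong candidate. In every instance I build, $w_{u,v}$ will be a common ancestor of $u$ and $v$ by construction, so $w_{u,v}$ fails to be an \lca{} of $u,v$ if and only if some proper descendant $x$ of $w_{u,v}$ is itself a common ancestor of both $u$ and $v$. Thus the \vlca{} instance is a NO-instance precisely when there exist $u,v,x$ with $x$ a proper descendant of $w_{u,v}$ and an ancestor of both $u$ and $v$. I would lay the DAG out in layers: a top ``candidate'' layer $W$, a bottom ``target'' layer $T$ holding the $u$'s and $v$'s, and one or more intermediate layers; direct edges from $W$ into $T$ make every candidate a common ancestor of its pair, so the only possible intruders $x$ live in the intermediate layers, and for a pair of distinct targets no vertex of $T$ can be an intruder.

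It then remains to design the layers, the edges, and the function $w:T\times T\to W$ so that a bad triple $(u,v,x)$ exists iff the hypergraph has a $4$-hyperclique $\{a,b,c,d\}$, i.e.\ all of $\{a,b,c\},\{a,b,d\},\{a,c,d\},\{b,c,d\}$ are hyperedges. Each clique vertex is written as a $(\text{block},\text{offset})$ pair for a fixed partition of $[n]$ into (roughly) $n^{0.4}$-size blocks, and these parts are spread across the roles $u,v,x$ — and across the intermediate vertices on the reachability \emph{paths} from $x$ to $u$ and from $x$ to $v$ — so that (i) together they pin down $a,b,c,d$; (ii) every layer has $\tO(n^{1.6})$ vertices; and (iii) each of the four hyperedge tests, plus the block-consistency tests needed to make the decoding of $a,b,c,d$ unambiguous, can be placed on one edge (or one path segment) whose two endpoints jointly know all vertices in that test. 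Edges encode hyperedge-incidence and equality-of-block checks in the natural way. For pairs $(u,v)$ whose only obstruction is a condition visible to $(u,v)$ alone, I would make $w_{u,v}$ a ``flavoured'' candidate: a $W$-vertex that, when that condition fails, has no out-edges into the intermediate layers and is therefore automatically a valid \lca{} — keeping $|W|=\tO(n^{1.6})$ while avoiding a distinct certificate per pair; diagonal pairs $(u,u)$ and pairs not of the intended shape are handled the same way. Correctness is then two implications: a $4$-hyperclique gives a bad triple, and a bad triple, read off through the consistency constraints, gives a genuine $4$-hyperclique.

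The hard part will be step (iii): hyperedges are ternary but DAG edges are binary, so no single edge whose endpoints each carry only $\tO(n^{1.6})$ worth of information can by itself certify a hyperedge on three ``full'' vertices — the naive encoding is provably over budget. I expect this is what forces intermediate vertices on the $x$-to-$u$ and $x$-to-$v$ paths to be used purely as extra ``slots'' that synthesise a wider incidence test out of several binary ones, with the candidate function $w_{u,v}$ absorbing the single test that depends on $u$ and $v$ jointly. Simultaneously keeping every layer within the $n^{1.6}$ bound while ruling out ``cheating'' bad triples that decode two inconsistent copies of a clique vertex (so that bad triples correspond \emph{exactly} to $4$-hypercliques) is the technical heart; the vertex-count accounting, the treatment of irrelevant pairs, and the time bookkeeping should be routine.
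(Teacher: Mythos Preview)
Your proposal is a plan rather than a proof: you explicitly flag step~(iii) as ``the technical heart'' and leave it unsolved, and the block/offset scheme, the multi-layer path gadgets, and the ``flavoured candidates'' are all speculative. More importantly, you are working much harder than necessary because you miss two simplifications the paper exploits.

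First, the paper does not reduce from balanced $n$-vertex \hyperclique{4}{3}. It invokes Fact~\ref{fac:unbalanced_hyperclique}: under the \hyperclique{4}{3} hypothesis, detecting a $4$-hyperclique in a $4$-partite $3$-uniform hypergraph with parts $A,B,C,U$ of sizes $|A|=|B|=|C|=\sqrt{n}$ and $|U|=n$ already requires $n^{2.5-o(1)}$ time. With this unbalanced instance the DAG has only $O(n)$ vertices and three layers: $V_1=A\times B$, $V_2=U$, $V_3=(B\times C)\sqcup(C\times A)$, with $(a,b)\to u$ iff $\{u,a,b\}$ is a hyperedge, $u\to(b,c)$ and $u\to(c,a)$ encoding the other two hyperedges involving $u$, and all-to-all edges $V_1\to V_3$. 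Then $(a,b)$ fails to be an LCA of $(b,c),(c,a)$ iff some $u\in U$ supplies the three hyperedges through $u$; the fourth hyperedge $\{a,b,c\}$ is enforced by only planting the candidate $(a,b)$ when $\{a,b,c\}\in E$. No block/offset tricks and no extra intermediate layers are needed, because every hyperedge involving $u$ has its other two vertices in the small parts, so a single DAG edge between $O(n)$-size layers can encode it. This is exactly the obstacle you identified in~(iii), and the unbalanced instance dissolves it.

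Second, for the ``irrelevant'' pairs you propose flavoured candidates that are automatically valid LCAs, but you neither construct them nor argue they fit in your budget. The paper sidesteps this entirely: for every pair not of the intended form (or with $\{a,b,c\}\notin E$), it runs the $O(n^{2.447})$-time \aplca{} algorithm of Grandoni et al.\ once and plugs in a \emph{genuine} LCA as $w_{u,v}$. Since $2.447<2.5$, this preprocessing is within the target budget, and the \vlca{} instance is then a NO-instance iff one of the deliberately planted candidates fails, i.e.\ iff the hypergraph has a $4$-hyperclique. These two ideas --- unbalanced hyperclique and using the sub-$n^{2.5}$ \aplca{} algorithm as a subroutine inside the reduction --- are precisely what your plan is missing, and with them the proof is a few lines rather than a delicate combinatorial design.
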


Thus, verifying candidate LCAs is most likely harder than finding LCAs, defying the common intuition that verification should be easier than computation.

\subsection{Paper Organization}
In Section~\ref{sec:prelim}, we give necessary definitions. In Section~\ref{sec:relationship}, we show basic relationships among \eqlca{k}, \leqlca{k} and \geqlca{k}, including Theorem~\ref{thm:eq-leq-geq-eq}. In Section~\ref{sec:eqlca}, we show $O(n^\omega)$ time algorithms for \eqlca{1} and \eqlca{2}, proving Theorem~\ref{thm:exact_1_2_lca}. In Section~\ref{sec:lcalist}, we prove the subcubic equivalence between \apklca{} and \geqlca{k} and prove Theorem~\ref{thm:2_3_listing} by giving algorithms for \apkarglca{2} and \apkarglca{3}. In Section~\ref{sec:lower_bound}, we prove several conditional lower bounds for \eqlca{k} and \countlca{}, including Theorem~\ref{thm:SETH} and Theorem~\ref{thm:4_clique_hardness}. In Section~\ref{sec:apvlca}, we show conditional lower bounds for \apvlca{}, proving Theorem~\ref{thm:apvlb_maxwitness} and Theorem~\ref{thm:apvlb_hyperclique}. Finally, in Section~\ref{sec:open}, we conclude with several open problems. 

\section{Preliminaries}
\label{sec:prelim}

\subsection{Notation}

Let $G = (V, E)$ be a DAG. For every $u, v \in V$, we use $\lca(u, v)$ to denote the set of vertices that are LCAs for vertex pair $u$ and $v$. We use $u \leadsto v$ to denote that $u$ can reach $v$ via zero or more edges and use $u \not \leadsto v$ to denote that $u$ cannot reach $v$. In particular, $u \leadsto u$ for every $u \in V$. We also use $\Anc(u)$ to denote the set of vertices that can reach $u$. For any $V' \subseteq V$, we use $G[V']$ to denote the  subgraph in $G$ induced by the vertex set $V'$.

We use $\omega < 2.37286$ to denote the matrix multiplication exponent \cite{almanv21}. For any constants $a, b, c \ge 0$, we use $\omega(a, b, c)$ to denote the exponent of multiplying an $n^a \times n^b$ matrix by an $n^b \times n^c$ matrix, in the arithmetic circuit model. 
Note that the fastest known algorithms for square \cite{almanv21} and rectangular \cite{GallU18} matrix multiplication all work in the arithmetic circuit model.

It is well-known that $\omega(a, b, c) = \omega(b, c, a)$ (see e.g. \cite{burgisser}).

\subsection{Variants of \aplca{}}

Given a DAG $G = (V, E)$, we study the following variants of \aplca{}. 

\begin{definition}[\eqlca{k}]
Decide if $|\lca(u, v)| = k$ for every pair $u, v \in V$. 
\end{definition}

\begin{definition}[\leqlca{k}]
Decide if $|\lca(u, v)| \le k$ for every pair $u, v \in V$. 
\end{definition}

\begin{definition}[\geqlca{k}]
Decide if $|\lca(u, v)| \ge k$ for every pair $u, v \in V$. 
\end{definition}

\begin{definition}[\countlca{}] Compute $|\lca(u, v)|$  for every pair $u, v \in V$.
\end{definition}

\begin{definition}[\apkarglca{k}] Compute for every pair $u, v \in V$ a list of $k$ distinct LCAs. If any pair $u, v \in V$ has fewer than $k$ LCAs, output all of their LCAs.
\end{definition}

\begin{definition}[\alllca{}]
    For every pair $u, v \in V$, output $\lca(u, v)$.
\end{definition}

\begin{definition}[\apvlca{}] 
Given a candidate vertex $w_{u, v}$ for each pair $u, v \in V$, decide if $w_{u, v} \in \lca(u, v)$ for every pair $u, v \in V$.
\end{definition}

\begin{definition}[\vlca{}] 
Given a candidate vertex $w_{u, v}$ for each pair $u, v \in V$, decide if there exists $u, v \in V$ such that $w_{u, v}$ is not an LCA for $u$ and $v$.
\end{definition}

\subsection{Fine-Grained Hypotheses}

In this section, we list the hypotheses we use in this paper. 

Eisenbrand et al. \cite{EisenbrandG04} gave the current best algorithm for \fourclique{} that runs in $O(n^{\omega(1, 2, 1)})$ time. The \fourclique{} hypothesis states that we cannot improve this algorithm much. 

\begin{hypothesis}[\fourclique{} Hypothesis \cite{ bringmann2017dichotomy, abboud2019mincut}]
    On a Word-RAM with $O(\log n)$ bit words, detecting a 4-clique in an $n$-node graph requires $n^{\omega(1, 2, 1)-o(1)}$ time.
\end{hypothesis}

\begin{hypothesis}[\hyperclique{\ell}{k} Hypothesis, \cite{lincoln2018tight}]
    Let $\ell > k > 2$ be constant integers. On a Word-RAM with $O(\log n)$ bit words, detecting whether an $n$-node $k$-uniform hypergraph contains an $\ell$-hyperclique requires $n^{\ell - o(1)}$ time.
\end{hypothesis}

Using common techniques (see e.g. \cite{williamssubcubic}), the \hyperclique{\ell}{k} hypothesis actually implies the hardness of the following unbalanced version of \hyperclique{\ell}{k}.
\begin{fact}
\label{fac:unbalanced_hyperclique}
    Assuming the \hyperclique{\ell}{k} hypothesis, on a Word-RAM with $O(\log n)$ bit words, detecting whether a $k$-uniform $\ell$-partite hypergraph with $n^{a_1}, \ldots, n^{a_\ell}$ vertices on each part for $a_1, \ldots, a_\ell > 0$ requires $n^{a_1 + \cdots + a_\ell - o(1)}$ time. 
\end{fact}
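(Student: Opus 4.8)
The plan is to prove the contrapositive by the standard ``partition-and-split'' technique (see e.g.\ \cite{williamssubcubic}): assume that for some constant $\delta>0$ there is an algorithm that, given a $k$-uniform $\ell$-partite hypergraph with $n^{a_1},\dots,n^{a_\ell}$ vertices in its parts, detects an $\ell$-hyperclique in time $O(n^{a_1+\cdots+a_\ell-\delta})$; I would then build an algorithm that detects $\ell$-hypercliques in arbitrary $n$-vertex $k$-uniform hypergraphs in time $n^{\ell-\Omega(1)}$, contradicting the \hyperclique{\ell}{k} hypothesis. The reduction proceeds in two steps: (a) reduce general $k$-uniform hyperclique detection to \emph{balanced} $\ell$-partite hyperclique detection, and (b) reduce the balanced $\ell$-partite problem to the unbalanced $\ell$-partite instances handled by the assumed algorithm. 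Everything is carried out on a word-RAM with $O(\log n)$-bit words, as in the hypothesis.

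For step (a), given a $k$-uniform hypergraph $H$ on $N$ vertices, I would create $\ell$ disjoint copies $V_1,\dots,V_\ell$ of $V(H)$ and, for every $k$-subset of parts $\{i_1<\cdots<i_k\}$ and every choice of $v_{i_1}\in V_{i_1},\dots,v_{i_k}\in V_{i_k}$ whose originals are pairwise distinct and form a hyperedge of $H$, insert the corresponding hyperedge into the new $\ell$-partite hypergraph $H'$. Since $\ell>k\ge 3$, any $\ell$ mutually hyper-adjacent vertices of $H'$ must occupy $\ell$ distinct parts and have pairwise distinct originals (any repeated part or repeated original would leave some $k$-subset of the chosen vertices without a hyperedge), so $H'$ contains an $\ell$-hyperclique if and only if $H$ does; conversely an $\ell$-clique of $H$ placed one vertex per part is an $\ell$-clique of $H'$ by construction. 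This multiplies the vertex count by the constant $\ell$, so a subcubic-in-the-exponent algorithm for $H'$ yields one for $H$ with only a constant-factor loss.

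For step (b), given a balanced $\ell$-partite instance with $N$ vertices per part, set $A:=\max_i a_i$ and $n:=\lceil N^{1/A}\rceil$, and partition part $i$ into $\lceil N/n^{a_i}\rceil$ blocks of size at most $\lceil n^{a_i}\rceil$ (note $n^{a_i}=N^{a_i/A+o(1)}\le N^{1+o(1)}$, so each part does split). Any $\ell$-hyperclique uses exactly one vertex per part, hence lies in the subhypergraph induced by one block per part, and each such induced subhypergraph is $\ell$-partite with at most $\lceil n^{a_i}\rceil$ vertices in part $i$; so running the assumed algorithm on all $\prod_{i=1}^{\ell}\lceil N/n^{a_i}\rceil$ of these induced subhypergraphs detects an $\ell$-hyperclique. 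The number of induced subhypergraphs is $N^{\ell-(a_1+\cdots+a_\ell)/A+o(1)}$, and the time per subhypergraph is $O(n^{a_1+\cdots+a_\ell-\delta})=N^{(a_1+\cdots+a_\ell-\delta)/A+o(1)}$, so the total time is $N^{\ell-\delta/A+o(1)}=N^{\ell-\Omega(1)}$; chaining with step (a) contradicts the \hyperclique{\ell}{k} hypothesis and proves the Fact.

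The only step requiring genuine care is the bookkeeping in (b): one must choose $n$ as a function of $N$ so that \emph{every} part of \emph{every} induced subhypergraph has at most $n^{a_i}$ vertices (padding up if needed, since the assumed algorithm's guarantee is stated for exactly those sizes) while keeping $\prod_i\lceil N/n^{a_i}\rceil$ small enough that its product with $n^{a_1+\cdots+a_\ell-\delta}$ remains $N^{\ell-\Omega(1)}$ --- this is exactly what forces the choice $A=\max_i a_i$ --- and one must track that the $a_i$ may be irrational, so that all the exponent identities above hold only up to $n^{o(1)}$ factors, which is harmless because $\ell$ and $k$ are constants. Step (a) is entirely routine, and could alternatively be done by randomized color-coding followed by derandomization via splitters at the cost of an additional $n^{o(1)}$ factor.
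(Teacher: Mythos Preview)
Your proposal is correct and is precisely the ``common technique'' the paper alludes to (citing \cite{williamssubcubic}) without spelling out a proof: reduce the general $N$-vertex $k$-uniform instance to a balanced $\ell$-partite instance via $\ell$ copies, then split each part into blocks of the target unbalanced sizes and enumerate the $\prod_i \lceil N/n^{a_i}\rceil$ induced sub-instances, choosing $n=N^{1/\max_i a_i}$ so that the saved $n^{\delta}$ factor survives as $N^{\delta/\max_i a_i}$ in the total. The paper does not give its own proof, so there is nothing to compare against beyond this.
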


\begin{hypothesis}[\maxksat{k} Hypothesis, \cite{lincoln2018tight}]
    On a Word-RAM with $O(\log n)$ bit words, for any $k \ge 3$, given a $k$-CNF formula on $n$ variables and $\poly(n)$ clauses, determining  the maximum number of clauses that can be satisfied by a Boolean assignment of the variables requires $2^{n - o(n)}$ time.
\end{hypothesis}

\begin{hypothesis}[Strong Exponential Time Hypothesis (SETH), \cite{impagliazzo2001eth, cip10, cip13}]
    On a Word-RAM with $O(\log n)$ bit words, for every $\epsilon > 0$, there exists $k$ such that \ksat{k} on $n$ variables cannot be solved in $O(2^{(1 - \epsilon)n})$ time.
\end{hypothesis}

\begin{definition}
The \maxwitness{} product $C$ of two $n \times n$ Boolean matrices $A$ and $B$ is defined as \[C[i, j] = \max \{k \ | \ A[i, k] = B[k, j] = 1\}\]
where the maximum is defined to be $-\infty$ if no such witness exists.
\end{definition}

The best running time to compute the \maxwitness{} product is $O(n^{2+\lambda})$ where $\lambda$ satisfies the equation $\omega(1, \lambda, 1) = 1 + 2 \lambda$ \cite{CzumajKL07}. This running time is $\tO(n^{2.5})$ if $\omega = 2$. It is used as a hypothesis that this running time cannot be improved much. 

\begin{hypothesis}[\maxwitness{} Hypothesis, \cite{ lincoln2020monochromatic}]
    On a Word-RAM with $O(\log n)$ bit words, computing the \maxwitness{} product of two $n \times n$ matrices requires $n^{2.5 -o(1)}$ time.
\end{hypothesis}

\section{Relationships among \texorpdfstring{\eqlca{k}}{AP-Exactk-LCA}, \texorpdfstring{\leqlca{k}}{AP-AtMostk-LCA} and \texorpdfstring{\geqlca{k}}{AP-AtLeastk-LCA}}
\label{sec:relationship}

In this section, we consider the relationships between \eqlca{k}, \leqlca{k} and \geqlca{k}. Our results are depicted in Figure~\ref{fig:klcarelationships}.
\begin{figure}[ht]
    \centering
    \begin{tikzpicture}
        
        \node at(0, 0)  [] (eqa){\ \eqlca{0}\ };
        \node at(0, -2.5)  [] (leqa){\leqlca{0}};
        \node at(0, 2.5)  [] (geqa){\geqlca{1}};       
         
         \node at(4, 0)  [] (eqb){\ \eqlca{1}\ };
        \node at(4, -2.5)  [] (leqb){\leqlca{1}};
        \node at(4, 2.5)  [] (geqb){\geqlca{2}};       
         
         \node at(8, 0)  [] (eqc){\ \eqlca{2}\ };
        \node at(8, -2.5)  [] (leqc){\leqlca{2}};
        \node at(8, 2.5)  [] (geqc){\geqlca{3}};       
        
        \node at(11, 0)  [] (eqd){};
        \node at(11, -2.5)  [] (leqd){};
        \node at(11, 2.5)  [] (geqd){};       
        
        \node at(11.3, 0)  [] (){$\ldots$};
        \node at(11.3, -2.5)  [] (){$\ldots$};
        \node at(11.3, 2.5)  [] (){$\ldots$};       
        
        \draw[->,line width=1pt] (eqa) to[]  node[right] {} (eqb);
        \draw[->,line width=1pt] (leqa) to[]  node[right] {} (leqb);
        \draw[->,line width=1pt] (geqa) to[]  node[right] {} (geqb);
                  
        \draw[->,line width=1pt] (eqb) to[]  node[right] {} (eqc);
        \draw[->,line width=1pt] (leqb) to[]  node[right] {} (leqc);
        \draw[->,line width=1pt] (geqb) to[]  node[right] {} (geqc);
        
        \draw[->,line width=1pt] (eqc) to[]  node[right] {} (eqd);
        \draw[->,line width=1pt] (leqc) to[]  node[right] {} (leqd);
        \draw[->,line width=1pt] (geqc) to[]  node[right] {} (geqd);
        
        \draw[opacity=0.4, dashed, rounded corners=3] (geqa.north east) -- (leqa.south east) -- (leqa.south west) -- (geqa.north west) -- cycle;
        
        \draw[opacity=0.4, dashed, rounded corners=3] (geqb.north east) -- (leqb.south east) -- (leqb.south west) -- (geqb.north west) -- cycle;
        
        \draw[opacity=0.4, dashed, rounded corners=3] (geqc.north east) -- (leqc.south east) -- (leqc.south west) -- (geqc.north west) -- cycle;

	   \draw[<->,line width=1pt] (eqa) to[]  node[right] {} (leqa);
	   \draw[<->,line width=1pt] (eqa) to[]  node[right] {} (geqa);
	   
	   \draw[<->,line width=1pt] (eqb) to[]  node[right] {} (leqb);
	   \draw[<->,line width=1pt] (eqb) to[]  node[right] {} (geqb);
	   
	   \draw[<->,line width=1pt] (eqc) to[]  node[right] {} (leqc);
	   \draw[<->,line width=1pt] (eqc) to[]  node[right] {} (geqc);
    \end{tikzpicture}
\caption{Reductions between \leqlca{k}, \eqlca{k} and \geqlca{k}. All arrows in this figure represent $O(n^2)$ time reductions from an instance to another instance with the same input sizes up to constant factors. }
\label{fig:klcarelationships}
\end{figure}
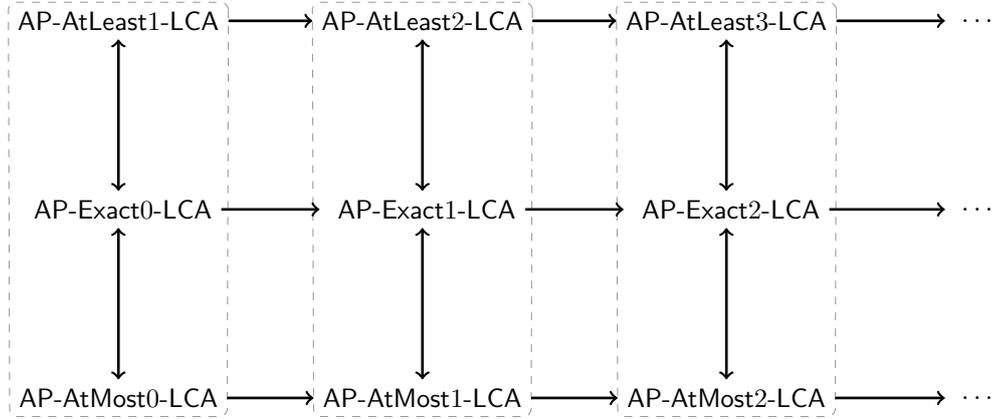

We first show the following lemma which then allows us to show that \eqlca{(k+1)} (resp. \leqlca{(k+1)}, \geqlca{(k+1)}) is harder than \eqlca{k} (resp. \leqlca{k}, \geqlca{k}) for $k \ge 0$. 

\begin{lemma}

\label{lem:one_more_LCA}
Given a DAG $G$ with $n$ vertices, we can create another DAG $G'$ with $2n+1$ vertices and a map $\rho: V(G) \rightarrow V(G')$ in $O(n^2)$ time such that for every $u, v \in V(G)$, the number of LCAs of $u$ and $v$ in $G$ is exactly one fewer than the number of LCAs of $\rho(u)$ and $\rho(v)$ in $G'$. 
\end{lemma}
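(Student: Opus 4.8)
The plan is to build $G'$ out of three pieces: (i) a verbatim copy of $G$ on the vertex set $V(G)$, retaining every original edge; (ii) $n$ fresh vertices $\{v' : v \in V(G)\}$, each a sink, together with one edge $v \to v'$ for every $v$; and (iii) one more fresh vertex $r$ with an edge $r \to v'$ to every sink $v'$. Define $\rho(v) := v'$. This uses $n + n + 1 = 2n+1$ vertices; $G'$ is acyclic since the $v'$ are sinks and $r$ is a source, so no new edge can lie on a cycle; and $G'$ and $\rho$ are written down in $O(n^2)$ time. The goal is then to show that for distinct $u,v$ the LCAs of $\rho(u)=u'$ and $\rho(v)=v'$ in $G'$ are precisely the LCAs of $u$ and $v$ in $G$ (these are original vertices, hence still vertices of $G'$) together with the single new vertex $r$, which yields $|\lca_{G'}(\rho(u),\rho(v))| = |\lca_G(u,v)| + 1$.

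The verification rests on one reachability observation and three short claims. Observation: from any original vertex $w$, the vertices reachable in $G'$ are exactly $\{x : w \leadsto_G x\} \cup \{x' : w \leadsto_G x\}$, because the only edges out of original vertices are original edges and the edges $w \to w'$, and each $w'$ is a sink. Hence (a) an original vertex $w$ is a common ancestor of $u'$ and $v'$ in $G'$ iff it is a common ancestor of $u$ and $v$ in $G$; (b) no primed vertex is a common ancestor of the distinct pair $u', v'$, since a sink reaches only itself; and (c) $r$ is a common ancestor of $u'$ and $v'$, and its proper descendants are exactly the sinks $\{x' : x \in V(G)\}$, none of which is a common ancestor of $u', v'$ by (b), so $r$ is an LCA of $u', v'$. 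Finally, for an original vertex $w$ that is a common ancestor of $u, v$ in $G$, its proper descendants in $G'$ consist of its proper $G$-descendants plus some sinks; the sinks are irrelevant by (b), and by (a) a proper $G$-descendant of $w$ is a common ancestor of $u', v'$ iff it is a common ancestor of $u, v$ in $G$ --- so $w$ is an LCA of $u', v'$ in $G'$ iff $w$ is an LCA of $u, v$ in $G$. Putting the claims together gives exactly the desired description of $\lca_{G'}(u', v')$.

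The step I expect to be the genuine obstacle is deciding \emph{where to attach the new vertex} so that it becomes an LCA of \emph{every} pair at once. The obvious attempt --- a single source $r$ with $r \to v$ for all original $v$ --- does not work: if $u$ and $v$ already have a common ancestor $c$ in $G$, then $c$ is a proper descendant of $r$ and still a common ancestor, so $r$ is not an LCA; in fact such an $r$ would be an LCA only for pairs that previously had \emph{no} common ancestor, so it would not add exactly one LCA to every pair. The resolution is to route $r$'s reachability to the targets only through fresh sinks and to redefine the targets to be those sinks ($\rho(v) = v'$): then $r$ is reachability-incomparable to every original vertex, so it can never be dominated by an old common ancestor, while the ancestor--descendant order among the original vertices that matters for $u'$ and $v'$ is a faithful copy of the one in $G$, so no old LCA is created or destroyed. (Note the statement is meant for distinct $u, v$; for $u = v$ a vertex is trivially its own unique LCA in any DAG, so there is nothing to arrange.)
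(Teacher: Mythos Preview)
Your construction is identical to the paper's, modulo notation: where the paper writes $V'\cup V''\cup\{x\}$ with edges $u'\to u''$ and $x\to u''$, you write $V(G)\cup\{v':v\in V(G)\}\cup\{r\}$ with edges $v\to v'$ and $r\to v'$, and both set $\rho$ to send each vertex to its sink copy. Your correctness argument (the reachability observation plus claims (a)--(c)) matches the paper's reasoning step for step, and your closing remark about the $u=v$ case is a fair caveat that the paper's statement glosses over as well.
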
 

\begin{proof}
The vertex set of $G'$ contains two copies $V'$ and $V''$ of $V(G)$ and an additional vertex $x$. For every $v \in V(G)$, we use $v'$ to call its copy in $V'$ and use $v''$ to call its copy in $V''$. 
For every $(u, v) \in E(G)$, we add an edge $(u', v')$ to $G'$. For every $u \in V(G)$, we add an edge $(u', u'')$ and an edge $(x, u'')$. The function $\rho$ is defined as $\rho(u) = u''$ for every $u \in V(G)$. 

Clearly $G'$ is a DAG. It remains to show that the number of LCAs of any pair $(u, v)$ increases by $1$. 

For any pair of vertices $w$ and $v$, if $w$ can reach $v$ in $G$, then $w'$ can reach $v''$ in $G'$ since $w'$ can first take the path inside $G'[V']$ to $v'$ and then take the edge $(v', v'')$. Conversely, if $w'$ can reach $v''$ in $G'$, then $w'$ must first reach $v'$ inside $G'[V']$ so $w$ can reach $v$ in $G$ as well. This means that the set of ancestors of $v''$ in $V'$ is exactly the copy of the set of ancestors of $v$ in $G$. 

Therefore, for any pair of vertices $u, v$, the set of common ancestors of $u'', v''$ in $V'$ is exactly the copy of the set of common ancestors of $u$ and $v$. Another common ancestor of them is the vertex $x$. For any common ancestor $w_1' \in V'$, it cannot reach $x$ by the construction. Also the reachability between it and any other $w_2' \in V'$ is the same as the reachability between $w_1$ and $w_2$ in $G$ since any path from $w_1'$ to $w_2'$ must stay entirely in $G'[V']$.
Therefore, the set of LCAs of $u''$ and $v''$ in $V'$ is exactly the copy of the set of LCAs of $u$ and $v$. Additionally, $x$ is clearly an LCA. 

Thus, the number of LCAs of $u''$ and $v''$ is one more than that of $u$ and $v$. 
\end{proof}

\begin{corollary}
\label{cor:ktok+1}
For any $k \ge 0$, an instance of \eqlca{k} (resp. \leqlca{k}, \geqlca{k}) with $n$ vertices reduces to an instance of \eqlca{(k+1)} (resp. \leqlca{(k+1)}, \geqlca{(k+1)}) with  $O(n)$ vertices in $O(n^2)$ time.
\end{corollary}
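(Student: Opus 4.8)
The plan is to obtain the corollary directly from Lemma~\ref{lem:one_more_LCA}, which already carries all the real content. Given an $n$-vertex DAG $G$, I would first invoke Lemma~\ref{lem:one_more_LCA} to construct, in $O(n^2)$ time, the DAG $G'$ on $2n+1 = O(n)$ vertices together with the map $\rho : V(G) \to V(G')$, with the guarantee that $|\lca(u,v)| = |\lca(\rho(u),\rho(v))| - 1$ for every $u,v \in V(G)$, where the left-hand side is measured in $G$ and the right-hand side in $G'$.

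From this single identity all three reductions fall out at once: since the two counts differ by exactly $1$, for every $u,v \in V(G)$ we have $|\lca(u,v)| = k$ in $G$ iff $|\lca(\rho(u),\rho(v))| = k+1$ in $G'$; likewise $|\lca(u,v)| \le k$ iff $|\lca(\rho(u),\rho(v))| \le k+1$, and $|\lca(u,v)| \ge k$ iff $|\lca(\rho(u),\rho(v))| \ge k+1$. Hence to solve \eqlca{k} (resp.\ \leqlca{k}, \geqlca{k}) on $G$ it suffices to solve \eqlca{(k+1)} (resp.\ \leqlca{(k+1)}, \geqlca{(k+1)}) on $G'$ and then read off the answers at the $n^2$ pairs $(\rho(u),\rho(v))$.

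The one point that needs a word of care is that $G'$ also contains pairs that are not of the form $(\rho(u),\rho(v))$ --- pairs inside the copy $V'$, pairs touching the new vertex $x$, and ``mixed'' pairs --- and the LCA counts of those pairs bear no relation to anything in $G$. So the reduction is not simply ``$G'$ is a yes-instance iff $G$ is''; rather, we use the per-pair output of the target problem on $G'$ and restrict attention to the sub-table indexed by $\rho(V(G)) \times \rho(V(G))$. Extracting this sub-table is an $O(n^2)$ post-processing step, so the whole construction-plus-extraction runs in $O(n^2)$ time on top of one call to the $(k+1)$-variant on an $O(n)$-vertex DAG, exactly as claimed. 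I do not expect any genuine obstacle here: the corollary is essentially a repackaging of Lemma~\ref{lem:one_more_LCA}, and the only thing to keep honest is this restriction to the image of $\rho$.
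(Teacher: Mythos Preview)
Your proposal is correct and follows essentially the same approach as the paper: invoke Lemma~\ref{lem:one_more_LCA} to build $G'$ and $\rho$, translate the count identity into the three predicate equivalences, and read off the answers for the pairs $(\rho(u),\rho(v))$. Your extra remark about restricting to the image of $\rho$ is a point the paper leaves implicit but is handled the same way.
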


\begin{proof}
Given an instance $G$ of  \eqlca{k}, we use Lemma~\ref{lem:one_more_LCA} to create $G'$ and $\rho$, and solve \eqlca{(k+1)} in $G'$. Then for any $u, v$, $|\lca(u,v)| = k$ if and only if $|\lca(\rho(u), \rho(v))| = k+1$, so we can easily use the results of \eqlca{(k+1)} on $G'$ to solve \eqlca{k} on $G$  in $O(n^2)$ time. 

The proofs for \leqlca{k} and \geqlca{k} follow similarly. 
\end{proof}
Finally, we show the relationship among \eqlca{k}, \leqlca{k} and \geqlca{k}.

\basicEquiv*

\begin{proof}
We prove a cyclic chain of reductions. 
\subparagraph*{\eqlca{k} $\rightarrow$ \leqlca{k}. }  If $k=0$, then these two are the same problem, so we assume $k \ge 1$. For any integer $\ell$, $[\ell = k] = [\ell \le k] \oplus [\ell \le (k-1)]$, where $[P]$ denotes the indicator function for a statement $P$ and $\oplus$ denotes the exclusive or operation between two Boolean values. Therefore, given an \eqlca{k} instance $G$, we can run \leqlca{k} and \leqlca{(k-1)} on $G$. Then  the number of LCAs of some vertex pair $(u, v)$ equals $k$ if and only if the two outputs for vertex pair $(u, v)$ from \leqlca{k} and \leqlca{(k-1)} are different. 

The above reduces an instance of \eqlca{k} to an instance of  \leqlca{k} and an instance of \leqlca{(k-1)}. Using Corollary~\ref{cor:ktok+1}, \leqlca{(k-1)} further reduces to \leqlca{k}. Thus, an instance of \eqlca{k} on a graph with $n$ vertices reduces to $O(1)$ instances of \leqlca{k} on graphs with $O(n)$ vertices in $O(n^2)$ time.

\subparagraph*{\leqlca{k} $\rightarrow$ \geqlca{(k+1)}. } This reduction is straightforward. For any integer $\ell$, $[\ell \le k] = \neg [\ell \ge k+1]$. Therefore, given an \leqlca{k} instance $G$, we can run \geqlca{(k+1)} and negate all the answers. 

\subparagraph*{\geqlca{(k+1)} $\rightarrow$ \eqlca{k}. } For any nonnegative integer $\ell$, $[\ell \ge k + 1] = \neg (\bigwedge_{i=0}^k [\ell = i])$. Therefore, given an \geqlca{(k+1)} instance $G$, we can run \eqlca{i} for every $0 \le i \le k$ on the same graph $G$ and use the answers to compute \geqlca{(k+1)} using the above formula. By Corollary~\ref{cor:ktok+1}, each instance of \eqlca{i} on $G$ for every $0 \le i \le k$ reduces to an instance of \eqlca{k} on a graph with $O(n)$ vertices, so an instance of \geqlca{(k+1)} on a graph with $n$ vertices reduces to $O(1)$ instances of \eqlca{k} on graphs with $O(n)$ vertices in $O(n^2)$ time.
\end{proof}

\section{Algorithms for \texorpdfstring{\eqlca{k}}{AP-Exactk-LCA}}
\label{sec:eqlca}

As noted in the introduction, \eqlca{k} can be solved in $O(n^3)$ time for any constant $k$. Interestingly,  an algorithm by Kowaluk and Lingas \cite{KowalukL07}
that finds and verifies the LCAs for vertex pairs with a unique LCA
implies that \eqlca{1} can be solved in $\tilde{O}(n^\omega)$  time deterministically. 
In this section, we present an alternative randomized algorithm for \eqlca{1}, and also extend the algorithm for \eqlca{2}.

The following claim is essential to our \eqlca{1} algorithm. 

\begin{claim}\label{claim:lca_reachability}
    Given a DAG $G = (V, E)$, for every pair of vertices $u, v \in V$, we have that 
    \begin{equation}\label{eq:lca_reachability}
        \Anc(u) \cap \Anc(v) = \bigcup_{w \in \lca(u, v)} \Anc(w).
    \end{equation}
    Moreover, if $\Anc(u) \cap \Anc(v) = \bigcup_{w \in S} \Anc(w)$ for some $S \subseteq V$, it must be the case that $\lca(u, v) \subseteq S$.
\end{claim}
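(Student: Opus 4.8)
The plan is to prove the two statements separately, both by straightforward reachability arguments using only the definition of LCA. For the equation~\eqref{eq:lca_reachability}, I would first establish the easy inclusion $\bigcup_{w \in \lca(u,v)} \Anc(w) \subseteq \Anc(u) \cap \Anc(v)$: if $w \in \lca(u,v)$ then $w$ is a common ancestor of $u$ and $v$, and any $z \in \Anc(w)$ satisfies $z \leadsto w \leadsto u$ and $z \leadsto w \leadsto v$, so $z \in \Anc(u) \cap \Anc(v)$. For the reverse inclusion, take any $z \in \Anc(u) \cap \Anc(v)$; then $z$ is a common ancestor of $u$ and $v$. I would then argue that $z$ can reach some LCA of $u,v$. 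The cleanest way is to consider the set $C$ of common ancestors of $u,v$ reachable from $z$ (this set is nonempty, containing $z$ itself), take a topological order, and pick a common ancestor $w \in C$ that is ``lowest'' among those reachable from $z$ — i.e., such that no proper descendant of $w$ lies in $C$. I need to check that such a $w$ is genuinely an LCA of $u,v$ (not just minimal within $C$): if some proper descendant $w'$ of $w$ were a common ancestor of $u,v$, then since $z \leadsto w \leadsto w'$ we would have $w' \in C$, contradicting the choice of $w$. Hence $w \in \lca(u,v)$ and $z \in \Anc(w) \subseteq \bigcup_{w \in \lca(u,v)} \Anc(w)$.

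For the ``moreover'' statement, suppose $\Anc(u) \cap \Anc(v) = \bigcup_{w \in S} \Anc(w)$ for some $S \subseteq V$, and let $c \in \lca(u,v)$; I want $c \in S$. Since $c$ is a common ancestor of $u$ and $v$, $c \in \Anc(u) \cap \Anc(v) = \bigcup_{w \in S}\Anc(w)$, so $c \in \Anc(w)$ for some $w \in S$, i.e. $w \leadsto c$. Now $w$ itself lies in $\Anc(w) \subseteq \Anc(u) \cap \Anc(v)$, so $w$ is a common ancestor of $u$ and $v$. Since $c$ is an LCA, no proper descendant of $c$ is a common ancestor of $u,v$; but $w \leadsto c$ and $w$ is a common ancestor, so $w$ cannot be a proper descendant of $c$ — combined with $w \leadsto c$, this forces $w = c$. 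Hence $c = w \in S$, as desired.

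I do not expect any real obstacle here; the only place to be slightly careful is making sure the ``lowest reachable common ancestor'' $w$ chosen in the reverse inclusion is actually an LCA of $u,v$ in the global sense, rather than merely minimal within the reachable-from-$z$ subset — but as noted above, the acyclicity of $G$ and transitivity of $\leadsto$ close this gap immediately, since any common-ancestor proper descendant of $w$ would itself be reachable from $z$. All arguments are $O(1)$-depth reachability manipulations and require nothing beyond the definitions already in Section~\ref{sec:prelim}.
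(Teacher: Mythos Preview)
Your approach mirrors the paper's: both parts are proved by direct reachability arguments, and your treatment of the reverse inclusion in~\eqref{eq:lca_reachability} is simply a more explicit version of the paper's one-line assertion that every common ancestor has an LCA as a descendant.

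There is, however, a direction slip in your ``moreover'' argument that makes the step fail as written. By the paper's convention $\Anc(w) = \{z : z \leadsto w\}$, so $c \in \Anc(w)$ means $c \leadsto w$, not $w \leadsto c$ as you write. With the direction reversed, your concluding inference does not go through: from ``$w$ is not a proper descendant of $c$'' together with $w \leadsto c$ you cannot deduce $w=c$, since any proper ancestor of $c$ that happens to lie in $S$ would satisfy both conditions. The correct argument---which is exactly the paper's---runs: $c \in \Anc(w)$ gives $c \leadsto w$, so $w$ is a descendant of $c$; since $w \in \Anc(w) \subseteq \Anc(u)\cap\Anc(v)$ is a common ancestor of $u,v$ and $c$ is an LCA, $w$ cannot be a \emph{proper} descendant of $c$, hence $w=c \in S$. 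Once the direction is fixed, your proof is identical to the paper's.
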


\begin{proof}
        First, we will prove \eqref{eq:lca_reachability}. If a vertex $x$ lies in $\Anc(u) \cap \Anc(v)$, then $x$ is a common ancestor of $u$ and $v$. Clearly, $x$ has a descendent $w$ that is an $\lca$. 
        Hence, it is clear that $\Anc(u) \cap \Anc(v) \subseteq \cup_{w \in \lca(u, v)} \Anc(w)$. Moreover, for any vertex $x \notin \Anc(u) \cap \Anc(v)$, it cannot be an ancestor of any common ancestor of $u$ and $v$, and hence will not appear in $\cup_{w \in \lca(u, v)} \Anc(w)$. 
        
        Now we prove the second statement in the claim. Suppose $\Anc(u) \cap \Anc(v) = \bigcup_{w \in S} \Anc(w)$ for some $S \subseteq V$. First, any $w \in S$ must be a common ancestor of $u$ and $v$, since otherwise, $w\in \Anc(w)$ is not in $\Anc(u) \cap \Anc(v)$.
        Then, for any $x \in \lca(u, v)$, it must be the case that it lies in $\Anc(w)$ for some $w \in S$. Since $w$ must be a common ancestor of $u$ and $v$, this is only possible if $x = w$ (otherwise it contradicts with the condition that $x$ is an LCA). Therefore, we have $x \in S$, as desired. 
\end{proof}

\begin{theorem}\label{thm:exact_1_lca}
    There exists an $O(n^\omega)$ time Las Vegas algorithm for \eqlca{1} that succeeds  with high probability. Additionally, this algorithm can find the unique LCA for all pairs of vertices that have exactly $1$ LCA. 
\end{theorem}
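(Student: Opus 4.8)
The plan is to combine Claim~\ref{claim:lca_reachability} with a single random fingerprinting step. First I would record a reformulation that is immediate from the claim: $|\lca(u,v)|=1$ if and only if there is a vertex $w$ with $\Anc(w)=\Anc(u)\cap\Anc(v)$, and then that $w$ is the unique LCA. Indeed, if $\lca(u,v)=\{w^\ast\}$ then \eqref{eq:lca_reachability} gives $\Anc(u)\cap\Anc(v)=\Anc(w^\ast)$; conversely, if $\Anc(w)=\Anc(u)\cap\Anc(v)$ then the second part of the claim applied to $S=\{w\}$ forces $\lca(u,v)\subseteq\{w\}$, and this set is nonempty because $w\in\Anc(u)\cap\Anc(v)$. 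Moreover, for a \emph{common ancestor} $w$ of $u,v$ one always has $\Anc(w)\subseteq\Anc(u)\cap\Anc(v)$, so the condition $\Anc(w)=\Anc(u)\cap\Anc(v)$ is equivalent to: $w$ reaches both $u$ and $v$, and $|\Anc(w)|=|\Anc(u)\cap\Anc(v)|$. Hence, writing $T$ for the transitive closure of $G$, after computing $T$ in $O(n^\omega)$ time, the values $\alpha(w):=|\Anc(w)|$ (column sums of $T$, in $O(n^2)$ time) and $a(u,v):=|\Anc(u)\cap\Anc(v)|$ for all pairs (the integer product $T^{\!\top}T$, in $O(n^\omega)$ time), the problem reduces to finding, for every pair $(u,v)$, the vertex $w$ --- necessarily unique if it exists --- with $T[w][u]=T[w][v]=1$ and $\alpha(w)=a(u,v)$.

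The obstacle is that this witness condition couples $w$ with the pair $(u,v)$, so the obvious routes cost $\Theta(n^3)$ (scan all $w$ for each pair) or $\Theta(n^{\omega+1})$ (one matrix product per level set of $\alpha$). To collapse everything into a single product I would fingerprint. Fix a prime $p=\Theta(n^4)$ and a uniformly random $r\in\mathbb{Z}_p^{V}$, and compute $\phi(w):=\sum_{x\in\Anc(w)}r_x$ for all $w$ (a matrix--vector product) and $\psi(u,v):=\sum_{x\in\Anc(u)\cap\Anc(v)}r_x$ for all pairs (the product $\widetilde T^{\!\top}T$ modulo $p$, where $\widetilde T[x][w]=r_x\,T[x][w]$), both in $O(n^\omega)$ time. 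The map $w\mapsto\Anc(w)$ is injective (if $\Anc(w)=\Anc(w')$ then $w\leadsto w'\leadsto w$, so $w=w'$), and each of the $O(n^3)$ relevant difference vectors ($\chi_{\Anc(w)}-\chi_{\Anc(w')}$ for $w\neq w'$, and $\chi_{\Anc(w)}-\chi_{\Anc(u)\cap\Anc(v)}$ for pairs with $|\lca(u,v)|\neq1$) is a nonzero vector in $\{-1,0,1\}^V$; a random $\mathbb{Z}_p$-linear form vanishes on such a vector with probability $1/p$, so with high probability all $\phi(w)$ are distinct and no $\phi(w)$ equals $\psi(u,v)$ for any pair with $|\lca(u,v)|\neq1$. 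Crucially, the negative side is actually exact: when $\lca(u,v)=\{w^\ast\}$ we have $\Anc(u)\cap\Anc(v)=\Anc(w^\ast)$, hence $\psi(u,v)=\phi(w^\ast)$ for \emph{every} choice of $r$.

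The algorithm then builds a lookup table from $\phi$-values to vertices (a hash table, or radix sort since the values are $\mathrm{poly}(n)$), and for each pair $(u,v)$ looks up $\psi(u,v)$: if it resolves to a vertex $w$, it verifies deterministically that $T[w][u]=T[w][v]=1$ and $\alpha(w)=a(u,v)$, and if so outputs "$|\lca(u,v)|=1$'' with the LCA $w$; if $\psi(u,v)$ is absent, it outputs "$|\lca(u,v)|\neq1$''; if a verification ever fails, or two vertices share a $\phi$-value, it restarts with fresh randomness. This is Las Vegas: every "$=1$'' answer is certified by the verification (a common ancestor $w$ with $\alpha(w)=a(u,v)$ satisfies $\Anc(w)=\Anc(u)\cap\Anc(v)$, so $\lca(u,v)=\{w\}$ by the reformulation), and every "$\neq1$'' answer is correct because a true unique LCA $w^\ast$ would have placed $\psi(u,v)=\phi(w^\ast)$ into the table. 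By the bounds above a restart happens with probability only $O(n^3/p)=O(1/n)$, so the algorithm runs in $O(n^\omega)$ time both with high probability and in expectation, and in particular it outputs the unique LCA for every pair that has one.

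I expect the main difficulty to be the decoupling step: turning the pair-dependent set equality $\Anc(w)=\Anc(u)\cap\Anc(v)$ into the scalar test $\phi(w)=\psi(u,v)$ that one fingerprinted matrix product plus a hash lookup can resolve, and then arranging the verification so that the Monte Carlo step becomes zero-error --- which works only because positive answers self-certify while negative answers rest on the \emph{exact} identity $\psi(u,v)=\phi(w^\ast)$. A secondary point is the word-RAM bookkeeping: working modulo a single $\mathrm{poly}(n)$-sized prime keeps every number in $O(1)$ machine words and all matrix products at $O(n^\omega)$ cost.
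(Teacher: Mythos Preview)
Your proposal is correct and follows essentially the same fingerprinting idea as the paper: reduce the set equality $\Anc(u)\cap\Anc(v)=\Anc(w)$ to a scalar test via a random linear form, compute all the needed fingerprints with one $O(n^\omega)$ matrix product, and then match by lookup. The one substantive difference is in how you make the algorithm Las Vegas: the paper invokes the verification algorithm of Kowaluk and Lingas~\cite{KowalukL07} as a black box to certify each candidate $w_{u,v}$, whereas you give a self-contained certificate by additionally computing the exact cardinalities $\alpha(w)=|\Anc(w)|$ and $a(u,v)=|\Anc(u)\cap\Anc(v)|$ (via the integer product $T^{\top}T$) and checking $T[w][u]=T[w][v]=1$ together with $\alpha(w)=a(u,v)$. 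This buys you a cleaner, dependency-free proof at the cost of one extra $O(n^\omega)$ product; the paper's route avoids that extra product but relies on an external result.
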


\begin{proof}
        For every pair of vertices $u$ and $v$ with a unique LCA $w$, we rewrite Equation~(\ref{eq:lca_reachability}) as $\Anc(u) \cap \Anc(v) = \Anc(w).$ In fact, Claim~\ref{claim:lca_reachability} 
        gives us that this holds if and only if $w$ is a unique LCA of the pair $u$ and $v$. 
        
        Let $f: V \to \ZZ_p$ be a random function for some $p = \Theta(n^{10})$. 
        For every $S \subseteq V$, we will use $f(S)$ to denote $\sum_{x \in S} f(x)$. Then with high probability, for any $u, v, x \in V$, 
        \[\Anc(u) \cap \Anc(v) = \Anc(x) \quad \text{if and only if} \quad f(\Anc(u) \cap \Anc(v)) =f(\Anc(x)).\]
        To see this, note that for $S, S' \subseteq V$,  if $S \neq S'$, then $f(S) - f(S')$ is a sum of a nonzero number of independent uniform random variables from $\ZZ_p$. Thus if $S \neq S'$, then $\Pr\left[f(S) = f(S') \right] = \frac{1}{p}.$
        Since we are comparing $O(n^2)$ such sets of the form $f(\Anc(x))$ and $f(\Anc(u) \cap \Anc(v))$, by a union bound, the probability that two distinct sets collide is $O(n^4/p)$. 
        
        Therefore, it suffices to compute $f(\Anc(x))$ and $f(\Anc(u) \cap \Anc(v))$ for all $u, v, x \in V$. For each $x \in V(G)$, it is easy to compute $f(\Anc(x)) = \sum_{v \in \Anc(x)} f(v)$
        in $O(n)$ time. 
        To compute $F(u, v) = f(\Anc(u) \cap \Anc(v))$ for all $u, v \in V$, we construct the following matrices. Let $A$ be the transitive closure of $G$ and let $B[x, v] = f(x) \cdot A[x, v]$. Now, note that the $(u, v)$-th entry of $C = A^T B$ gives us \[C[u, v] = \sum_{x \in \Anc(u) \cap \Anc(v)} f(x) = F(u, v),\] as desired. Therefore, we can compute all $F(u, v)$ in $O(n^\omega)$ time. 
        
        Now, we sort the list $L = \{f(v) \mid v \in V(G)\}$ in $\tilde{O}(n)$ time. For each $u, v \in V$, we can find an arbitrary $w_{u, v}$ such that $F(u, v) = f(w_{u, v})$ in $\tO(1)$ time. Assuming none of the $O(n^2)$ sets we are interested in collide, which happens with probability at least $1 - O(n^4/p) = 1 - O(1/n^6)$, we find such a $w_{u, v}$ if and only if it is the unique LCA of $u, v \in V$. 
        
        To make this algorithm Las Vegas, we first notice that if our algorithm does not report a $w_{u, v}$, then $u$ and $v$ does not have a unique LCA. For the vertex pairs that our algorithm does find a $w_{u, v}$,
        we run \cite{KowalukL07}'s verification algorithm (Theorem 2 in \cite{KowalukL07}) to verify if each $w_{u, v}$ is in fact the unique LCA of $u, v$ in $O(n^\omega)$ time. If we find any errors, we can simply repeat the algorithm. 
\end{proof}

Now we show how to extend our \eqlca{1} algorithm to \eqlca{2}. 

\begin{theorem}\label{thm:exact_2_lca}
There exists an $O(n^\omega)$ time Las Vegas algorithm for \eqlca{2} that succeeds  with high probability.  Additionally, this algorithm can find the two LCAs for all pairs of vertices with exactly 2 LCAs.
\end{theorem}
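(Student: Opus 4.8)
The plan is to run the fingerprinting idea of Theorem~\ref{thm:exact_1_lca}, but searching over \emph{unordered pairs} of candidate LCAs rather than single candidates. The characterization I would use is: $\lca(u,v)=\{w_1,w_2\}$ with $w_1\neq w_2$ if and only if $w_1,w_2$ are incomparable (neither reaches the other) and $\Anc(u)\cap\Anc(v)=\Anc(w_1)\cup\Anc(w_2)$. One direction is equation~\eqref{eq:lca_reachability} of Claim~\ref{claim:lca_reachability}. For the other direction, $w_1,w_2\in\Anc(w_1)\cup\Anc(w_2)=\Anc(u)\cap\Anc(v)$ forces both to be common ancestors; incomparability prevents either from being a proper descendant of the other; and the second part of Claim~\ref{claim:lca_reachability} (applied with $S=\{w_1,w_2\}$) pins $\lca(u,v)$ down to a subset of $\{w_1,w_2\}$, which together with incomparability gives equality.

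Next I would reuse the random $f:V\to\ZZ_p$ with $p=\Theta(n^{10})$, the transitive closure $A$, the matrix $B[x,v]=f(x)A[x,v]$, and the product $C=A^T B$ from the proof of Theorem~\ref{thm:exact_1_lca}. The key observation is that $C[w_1,w_2]=f(\Anc(w_1)\cap\Anc(w_2))$ for \emph{every} pair $w_1,w_2$, not only for the query pairs. So, after computing $f(\Anc(w))$ for all $w$ in $O(n^2)$ time, I can form in $O(n^2)$ additional time the quantity $D[w_1,w_2]:=f(\Anc(w_1))+f(\Anc(w_2))-C[w_1,w_2]$, which equals $f(\Anc(w_1)\cup\Anc(w_2))$ by inclusion–exclusion (valid since $f(\cdot)$ is additive). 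I would then sort the values $D[w_1,w_2]$ over incomparable pairs $w_1,w_2$ (those with $A[w_1,w_2]=A[w_2,w_1]=0$), storing one witnessing pair per value. Finally, for each $(u,v)$ I look up $C[u,v]=f(\Anc(u)\cap\Anc(v))$ among these sorted values; if an incomparable pair $(w_1,w_2)$ is returned, declare $\lca(u,v)=\{w_1,w_2\}$, and otherwise declare $|\lca(u,v)|\neq 2$. The running time is $O(n^\omega)$, dominated by the transitive closure and the product $A^T B$; the sorting and lookups cost only $\tilde O(n^2)=O(n^\omega)$.

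For correctness: if $\lca(u,v)=\{w_1^\ast,w_2^\ast\}$ then by equation~\eqref{eq:lca_reachability} we have $D[w_1^\ast,w_2^\ast]=f(\Anc(w_1^\ast)\cup\Anc(w_2^\ast))=f(\Anc(u)\cap\Anc(v))=C[u,v]$ \emph{with certainty}, so every true two-LCA pair is flagged regardless of the choice of $f$; and a union bound over the $O(n^2)$ sets of the forms $\Anc(u)\cap\Anc(v)$, $\Anc(w)$, and $\Anc(w_1)\cup\Anc(w_2)$ (incomparable pairs) shows that, with probability $1-O(n^4/p)$, equal fingerprints imply equal sets, so whenever the lookup returns a pair $(w_1,w_2)$ we genuinely have $\Anc(w_1)\cup\Anc(w_2)=\Anc(u)\cap\Anc(v)$, hence $\lca(u,v)=\{w_1,w_2\}$ by the characterization above. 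To make the algorithm Las Vegas I would feed every flagged pair together with its two claimed LCAs into Kowaluk and Lingas's $O(n^\omega)$-time routine that verifies whether given nodes are exactly the complete LCA set of pairs with at most two LCAs~\cite{KowalukL07} (the same routine invoked in Theorem~\ref{thm:exact_1_lca}): no true two-LCA pair is ever missed by the deterministic identity, so if all checks pass the output is exactly correct, and if any check fails we restart with fresh randomness. Since a single run succeeds with probability $1-O(1/n^6)$, the bound $O(n^\omega)$ holds with high probability.

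The step I expect to be the main obstacle is keeping the matching phase within $O(n^\omega)$: a naive pairing of the $O(n^2)$ query fingerprints against the $O(n^2)$ candidate-pair fingerprints is $O(n^4)$, and the point is that one lookup per $(u,v)$ suffices precisely because fingerprints determine sets with high probability, so any returned incomparable pair is automatically the right answer. The second delicate point—again essentially already present in Theorem~\ref{thm:exact_1_lca}—is verifying that the single matrix product $A^T B$ simultaneously supplies the ``intersection'' quantity $f(\Anc(u)\cap\Anc(v))$ needed for query pairs and the off-diagonal values $f(\Anc(w_1)\cap\Anc(w_2))$ that feed the ``union'' quantities $D[w_1,w_2]$, and that the resulting Las Vegas wrapper catches every failure mode (a collision producing the wrong candidate pair, or one making a pair with a different number of LCAs look like a two-LCA pair) rather than silently accepting it.
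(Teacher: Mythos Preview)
Your proof is correct and follows the same fingerprinting strategy as the paper, but with two worthwhile simplifications. First, you compute $f(\Anc(w_1)\cup\Anc(w_2))$ by inclusion--exclusion from the single product $C=A^TB$ (which already gives $f(\Anc(w_1)\cap\Anc(w_2))$ for every pair), whereas the paper obtains it via De Morgan as $f(V)-f\bigl(\overline{\Anc(a)}\cap\overline{\Anc(b)}\bigr)$, requiring a second matrix product $M^TN$ with $M=1-A$; your route saves one multiplication. Second, you restrict the candidate table to \emph{incomparable} pairs $(w_1,w_2)$ and argue directly that $\Anc(u)\cap\Anc(v)=\Anc(w_1)\cup\Anc(w_2)$ with $w_1,w_2$ incomparable forces $\lca(u,v)=\{w_1,w_2\}$, whereas the paper searches over all pairs $(a,b)$ and handles the ambiguous case (one of $a,b$ is the unique LCA and the other merely a common ancestor) by first running the \eqlca{1} algorithm of Theorem~\ref{thm:exact_1_lca} to remove the unique-LCA pairs from consideration. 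Both differences are minor, but your version is the cleaner of the two.
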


\begin{proof}
        For all pair of vertices $u$ and $v$ with exactly two LCAs, say $a$ and $b$, we rewrite \eqref{eq:lca_reachability} as
        $\Anc(u) \cap \Anc(v) = \Anc(a) \cup \Anc(b).$
        Moreover, for any $u, v, a, b$ such that the above equation holds, it must be the case that either both $a$ and $b$ are the only LCAs of $u$ and $v$, or exactly one of them is the unique LCA (and the other is a common ancestor). We can detect the latter case with high probability by performing the algorithm as described in \autoref{thm:exact_1_lca}. 
        
        Let $f: V(G) \to \ZZ_p$ be a random function for some $p = \Theta(n^{10})$. By the same argument as \autoref{thm:exact_1_lca}, with high probability, for any $u, v, a, b \in V$, 
        \[\Anc(u) \cap \Anc(v) = \Anc(a) \cup \Anc(b) \quad \text{if and only if} \quad f(\Anc(u) \cap \Anc(v)) = f(\Anc(a) \cup \Anc(b)).\]
        Let $F(u, v) = f(\Anc(u) \cap \Anc(v))$ and $H(a, b) = f(\Anc(a) \cup \Anc(b))$. As we saw in \autoref{thm:exact_1_lca}, we can compute $F(u, v)$ in $O(n^\omega)$ time. 
        
        To compute $H(a, b)$, note that $\Anc(a) \cup \Anc(b) = \overline{\overline{\Anc(a)} \cap \overline{\Anc(b)}}$. First, we compute the transitive closure $A$ of $G$ in $O(n^\omega)$ time. Then, we construct an $n \times n$ matrix $M$ by setting $M[x, a] = 1 - A[x, a]$. 
        Now, construct another matrix $N$ by setting $N[x, b] = f(x) \cdot M[x, b]$. Then, it is easy to see that \[(M^T N) [a, b] = \sum_{x \in \overline{\Anc(a)} \cap \overline{\Anc(b)}} f(x) = f(\overline{\Anc(a)} \cap \overline{\Anc(b)}).\]
        Therefore, one can compute \[H(a, b) = f(\Anc(a) \cup \Anc(b)) = f(V) - f(\overline{\Anc(a)} \cap \overline{\Anc(b)}) = f(V) - (M^T N)[a, b]\]
        for all $a, b \in V$ in $O(n^\omega)$ time. 
    
        Now, sort $L = \{H(a, b) \mid a, b \in V\}$. For each $u, v$ which does not have a unique LCA, search for an arbitrary pair $a_{u,v}, b_{u,v}$  (if one exists) such that $F(u, v) = H(a_{u, v}, b_{u, v})$ in $\tO(1)$ time.  With probability $1 - O(1/n^6)$, we find such a pair for each $u, v$ if and only if $a_{u,v}$ and $b_{u,v}$ are the only two LCAs of $u$ and $v$.  
    
        To make this algorithm Las Vegas, we first notice that if our algorithm does not report a pair $a_{u, v}, b_{u, v}$, then $u$ and $v$ does not have exactly two LCAs. For vertex pairs for which our algorithm does find two LCA candidates, we run \cite{KowalukL07}'s verification algorithm (it is described in a remark in \cite{KowalukL07}) to verify that $a_{u,v}$ and $b_{u, v}$ are the only two LCAs of $u$ and $v$ in $O(n^\omega)$ time. If we find any errors, we can simply repeat the algorithm from the beginning.
\end{proof}

Note that our technique for \eqlca{1} and \eqlca{2} does not extend to \eqlca{3} because it would require us to list $f(\Anc(x) \cup \Anc(y) \cup \Anc(z))$ for all $x, y, z \in V$, which easily exceeds $n^\omega$ time. In fact, in Section~\ref{sec:lower_bound}, we show it is unlikely to obtain an $\tO(n^\omega)$ time algorithm for \eqlca{3} by proving that any $O(n^{2.5-\epsilon})$ time algorithm for $\epsilon > 0$ for \eqlca{3} would refute the \hyperclique{4}{3} hypothesis. Thus, \eqlca{3} is indeed (conditionally) harder than \eqlca{1} and \eqlca{2}. 

\section{\aplca{} Listing Algorithms}
\label{sec:lcalist}
In this section, we consider the \apklca{} problem. First, we show that \geqlca{k} and \apkarglca{k} are \emph{subcubically equivalent}, i.e. either both or neither have a truly subcubic time algorithm.

\begin{theorem}\label{thm:listing_and_detection}
    Suppose \geqlca{k} can be computed in $T(n)$ time for a constant $k$. Then, \apkarglca{k} can be computed in $O(\sqrt{n^3 \cdot T(n)})$ time. In particular, \geqlca{k} and \apkarglca{k} are subcubically equivalent.
\end{theorem}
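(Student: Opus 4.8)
The plan is to establish both directions of the subcubic equivalence. The easy direction is immediate: given a \geqlca{k} instance (a DAG), run \apkarglca{k} and, for each pair $u,v$, report ``$|\lca(u,v)|\ge k$'' iff the returned list has size exactly $k$; this costs $O(n^2)$ extra time, so a truly subcubic \apkarglca{k} algorithm gives one for \geqlca{k}. It therefore suffices to prove the main direction: a $T(n)$-time algorithm for \geqlca{k} yields an $O(\sqrt{n^3 T(n)})$-time algorithm for \apkarglca{k}. We may assume $T(n)=\Omega(n^2)$ (reading a dense DAG) and $T(n)=O(n^3)$ (else the $O(kn^3)$ brute force already beats the target), so $g:=\lceil\sqrt{n^3/T(n)}\,\rceil$ satisfies $1\le g\le n$. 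First I would fix a topological order of $G$, compute its transitive closure in $O(n^\omega)$ time (which is $O(\sqrt{n^3 T(n)})$ since $T(n)=\Omega(n^2)$), and cut the topological order into $g$ consecutive blocks $B_1<B_2<\cdots<B_g$ of at most $\lceil n/g\rceil$ vertices each, with $B_1$ earliest.

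The structural heart of the argument is that a \emph{suffix} of blocks induces a subgraph that captures the relevant reachability of $G$ exactly: if $w\in B_j\cup\cdots\cup B_g$ and $w\leadsto u$ in $G$, then (by the topological order) $u\in B_j\cup\cdots\cup B_g$ as well and every $w$--$u$ path stays inside $B_j\cup\cdots\cup B_g$. Writing $G_j:=G[B_j\cup\cdots\cup B_g]$, this gives $\lca_{G_j}(u,v)=\lca_G(u,v)\cap(B_j\cup\cdots\cup B_g)$ for every pair $u,v$ inside $G_j$, and $\lca_G(u,v)$ is disjoint from $B_j\cup\cdots\cup B_g$ whenever $u$ or $v$ lies outside $G_j$. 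Hence, setting $d_j(u,v):=|\lca_G(u,v)\cap(B_j\cup\cdots\cup B_g)|$, the predicate ``$d_j(u,v)\ge r$'' for all pairs is obtained by one call of \geqlca{r} on $G_j$. Iterating Corollary~\ref{cor:ktok+1}, each \geqlca{r} with $r\le k$ reduces in $O(n^2)$ time to \geqlca{k} on a graph with $O(n)$ vertices. Running this for every $j\in\{1,\dots,g\}$ and $r\in\{1,\dots,k\}$ costs $O(gk(T(n)+n^2))=O(g\,T(n))=O(\sqrt{n^3 T(n)})$ and yields $\min(d_j(u,v),k)$ for every block index $j$ and pair $u,v$.

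From these capped counts, for each pair $u,v$ I would mark as \emph{useful} the blocks $B_J$ with $d_{J+1}(u,v)<d_J(u,v)$ and $d_{J+1}(u,v)<k$ (interpreting $d_{g+1}=0$); these are precisely the blocks holding one of the $\min(k,|\lca_G(u,v)|)$ topologically latest LCAs, so there are at most $k$ of them per pair. Then I would process blocks in reverse topological order $B_g,B_{g-1},\dots,B_1$, maintaining for each pair an at-most-$k$-element list of LCAs found so far. When processing $B_J$, for each pair $u,v$ whose list is not yet full and for which $B_J$ is useful, scan $B_J$ in reverse topological order; add $w\in B_J$ to the list iff $w\leadsto u$, $w\leadsto v$, and $w$ reaches none of the LCAs already found for $u,v$. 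Using the transitive closure each test is $O(k)=O(1)$ because the ``found so far'' list always has $<k$ elements (we stop a pair once it reaches $k$), and correctness follows from the structural observation: $w\in B_J$ is an LCA of $u,v$ in $G$ iff it is a common ancestor reaching no other LCA of $u,v$ in $B_J\cup\cdots\cup B_g$, and by the time we reach $w$ we have already identified all such LCAs in $B_{J+1}\cup\cdots\cup B_g$ (they lie in useful blocks, processed earlier) and all those topologically after $w$ inside $B_J$. Since each pair scans at most $k$ useful blocks of size $O(n/g)$, this phase costs $O(k^2 n^3/g)=O(n^3/g)=O(\sqrt{n^3 T(n)})$.

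The main obstacle is the structural observation that a suffix-of-blocks induced subgraph preserves the ancestor and LCA relation exactly, which is what lets a single \geqlca{} call on $G[B_j\cup\cdots\cup B_g]$ read off $d_j$; granted that, the remaining delicate point is keeping each per-block scan linear in the block size, which is exactly what the cap of $k$ on the number of listed LCAs provides. Choosing $g=\lceil\sqrt{n^3/T(n)}\,\rceil$ balances the $O(g\,T(n))$ oracle cost against the $O(n^3/g)$ scanning cost, giving total time $O(n^\omega+g\,T(n)+n^3/g)=O(\sqrt{n^3 T(n)})$, and hence the subcubic equivalence of \apkarglca{k} and \geqlca{k}.
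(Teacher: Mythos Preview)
Your proposal is correct and follows essentially the same approach as the paper: the key structural lemma that suffix-induced subgraphs preserve the LCA relation exactly (the paper's Claim~\ref{claim:suffix_lca}), the partition into $\Theta(\sqrt{n^3/T(n)})$ consecutive topological blocks, calling \geqlca{r} on each suffix to locate the blocks containing the relevant LCAs, and a linear scan of those blocks using the transitive closure are all the same. The only difference is organizational: the paper finds the $\ell$-th latest LCA iteratively for $\ell=1,\dots,k$ (so for each $\ell$ it calls \geqlca{\ell} on every suffix and then scans one block per pair), whereas you batch all \geqlca{r} calls upfront to compute the capped counts $\min(d_j,k)$ and then scan the at-most-$k$ ``useful'' blocks per pair in one pass; both lead to the same $O\bigl(k\cdot\tfrac{n}{L}T(n)+k^2 n^2 L\bigr)$ bound balanced at $L=\sqrt{T(n)/n}$.
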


\begin{proof}
    Suppose we are given a DAG $G = (V, E)$. First compute a topological ordering $\TOP{}$ of the vertices in $O(n^2)$ time, and the transitive closure $D$ in $O(n^\omega)$ time. Now, for every pair of vertices $u$ and $v$, we inductively find their $k$ topologically latest (with respect to $\TOP{}$) LCAs. 
    
    Suppose we have found the set $S(u, v)$ 
    of the topologically latest $\ell-1$ LCAs for every pair of vertices $u, v$, for some $1 \leq \ell \leq k$ with respect to  $\TOP{}$. 
    Now, partition the vertices into sets $V = V_1 \sqcup V_2 \sqcup \dots \sqcup V_{n/L}$, where $V_1$ contains the first $L$ vertices in the topological ordering, $V_2$ contains the next $L$ and so on for a parameter $L$ that we will set later.
    
    Let $\lca_{G[W]}(u, v)$ denote the set of LCAs of $u$ and $v$ in the subgraph induced by $W$ (note the distinction between $\lca_{G[W]}(u, v)$ and $\lca(u, v) \cap W$). Consider the vertex set $U_i = V_i \sqcup V_{i+1} \sqcup \dots \sqcup V_{n/L}$ and the induced subgraph $G_i = G[U_i]$. 
    \begin{claim}\label{claim:suffix_lca}
    For $u, v \in U_i$, it must be the case that $\lca_{G_i}(u, v) = \lca(u, v) \cap U_i.$
    \end{claim}
    \begin{proof}
    First notice that for any $w, w' \in U_i$, $w$ can reach $w'$ in $G$ if and only if $w$ can reach $w'$ in $G_i$ because any path from $w$ to $w'$ in $G$ does not use any vertex in $V \setminus U_i$ due to the fact that $U_i$ is a suffix of the topological ordering.
    Thus we can use the reachability between them  unambiguously. In particular, for any $w \in U_i$, the set of its descendants (resp. ancestors) in $G_i$ is the intersection between $U_i$ and  the set of its descendants (resp. ancestors) in $G$.
    For any $w \in \lca(u, v) \cap U_i$, it is clear that $w$ does not have any descendants in $U_i$ that is a common ancestor of $u$ and $v$,  since otherwise $w \notin \lca(u, v)$. Therefore, $w \in \lca_{G_i}(u, v)$. On the other hand, suppose $w \in \lca_{G_i}(u, v)$. Since $w$ is clearly in $U_i$, it suffices to show that $w \in \lca(u, v)$. Note that $w$ does not have any descendants in $U_i$ which is a common ancestor of $u$ and $v$. Moreover, all vertices in $V \setminus U_i$ occur earlier than $w$ in the topological ordering. Therefore, $w$ has no descendants in $V \setminus U_i$. Therefore, there is no $y \ne w$ such that $w \leadsto y \leadsto u$, and $w \leadsto y \leadsto v$, and thus, $w \in \lca(u, v)$, as desired.
    \end{proof}

    Now we describe our algorithm. For $i = n/L, n/L - 1, \dots, 1$,
    run \geqlca{\ell} on each $G_i$. For each $(u, v) \in V \times V$, keep track of the largest index $i_{u,v}$ where \geqlca{\ell} outputs 1, i.e. largest index such that $|\lca_{G_i} (u, v)| \geq \ell$. By Claim~\ref{claim:suffix_lca}, this must mean that $|\lca(u, v) \cap U_{i_{u,v}}| \geq \ell$ whereas $|\lca(u, v) \cap U_{i_{u,v} - 1}| < \ell$. In other words, the $\ell$th LCA lies in $V_{i_{u,v}}$.
    By Corollary~\ref{cor:ktok+1}, we can compute \geqlca{\ell} in time $O(T(n))$. Therefore, this step takes $O\left(\frac{n}{L} \cdot T(n)\right)$ time in total.
    
    Next, for each $u, v \in V$, note that the topologically $\ell$th LCA must lie in the vertex partition $V_{i_{u,v}}$, if $i_{u, v}$ exists. Therefore, it suffices to find the latest vertex $x \in V_{i_{u,v}}$ such that $x \in \Anc(u) \cap \Anc(v)$ and no $y \in S(u, v)$ is a descendent of $x$. Such an $x$ must in fact be the $\ell$th LCA. Note that these checks can be done in $O(1)$ time for each $x \in V_{i_{u,v}}$ using the transitive closure $D$. If there is no $i_{u,v}$ such that \geqlca{\ell} outputs 1, then $u$ and $v$ have fewer than $\ell$ LCAs. This step takes $O(\ell \cdot L) = O(L)$ time for each pair $u, v \in V$.
    
    Since we have to iteratively find up to $k$ LCAs per vertex pair, the overall runtime of the algorithm is  $O(n^\omega + k (\frac{n}{L} \cdot T(n) + n^2 \cdot L))$.
    Choosing $L = \sqrt{T(n)/n}$, we have a runtime of $O(\sqrt{n^3 \cdot T(n)})$. 
    
    Moreover, it is clear that if there is a subcubic algorithm for \apklca{}, we can use the same algorithm to solve \geqlca{k} with an $\tO(n^2)$ additional cost. Therefore the two problems are in fact subcubically equivalent.
\end{proof}

In Theorem~\ref{thm:exact_1_lca} and Theorem~\ref{thm:exact_2_lca}, we showed $O(n^\omega)$ time algorithms for \eqlca{1} and \eqlca{2}. By their equivalences with \geqlca{2} and \geqlca{3} respectively, we can also solve \geqlca{2} and \geqlca{3} in $O(n^\omega)$ time. By Theorem~\ref{thm:listing_and_detection}, these imply $O(n^{(\omega+3)/2})$ time algorithms for \apkarglca{2} and \apkarglca{3}. 

In the following theorem, we show that we can further improve the $O(n^{(3+\omega)/2})$ running time for \apkarglca{2} and \apkarglca{3} to  $\tilde{O}(n^{2 + \lambda})$ time where $\omega(1, \lambda, 1) = 1 + 2 \lambda$. Interestingly this running time matches the current best running time of the \maxwitness{} problem \cite{CzumajKL07}. 
For these algorithms, we use an idea from \cite{KowalukL07} about comparing the sizes of two sets for verifying whether a set of one or two vertices are all the LCAs. 

\TwoThreelisting*

\begin{proof}
We first describe the algorithm for $k = 2$, and we later comment on how to modify this algorithm for $k = 3$.
Suppose we are given a DAG $G = (V, E)$. First, compute a topological ordering $\TOP{}$ of the vertices and the transitive closure $D$ of the $G$. Now, use Czumaj et al.'s algorithm~\cite{CzumajKL07} to find the topologically latest LCA for every pair of vertices $u, v \in V$ with respect to $\TOP{}$. Denote this vertex by $\ell_1(u, v)$. We then find the topologically second latest LCA, which we denote by $\ell_2(u, v)$.  

Now, partition the vertices into sets $V = V_1 \sqcup V_2 \sqcup \dots \sqcup V_{n/L},$ where $V_1$ contains the first $L$ vertices in the ordering $T$, $V_2$, contains the next $L$ and so on. For each $(u, v) \in V \times V$, we will find the largest index $i_{u,v}$ such that $V_{i_{u,v}}$ contains $\ell_2(u, v)$.

For $i > i_{u, v},$ we must have 
\begin{equation}\label{eq:lca_reachability_2_listing}
    |\Anc(u) \cap \Anc(v) \cap V_i| = |\Anc(\ell_1(u, v)) \cap V_i|.
\end{equation}
It is clear that $\Anc(\ell_1(u, v)) \cap V_i \subseteq \Anc(u) \cap \Anc(v) \cap V_i$. Therefore, it suffices to show that when $i > i_{u, v}$, the reverse inclusion holds. Fix any $w \in \Anc(u) \cap \Anc(v) \cap V_i$. Since $w$ is a common ancestor of $u$ and $v$, it must be able to reach some vertex in $\lca(u, v)$. However, since $i>i_{u, v}$, the only such LCA $w$ can reach is $\ell_1(u, v)$, so $w \in \Anc(\ell_1(u, v))$. Therefore, $w \in \Anc(\ell_1(u, v)) \cap V_i$.

On the other hand, when $i = i_{u, v}$, this equality no longer holds because the right hand side is still a subset of the left hand side and $\ell_2(u, v)$ is in the left hand side but not in the right hand side. 
Therefore, by checking Equation~\eqref{eq:lca_reachability_2_listing} for $i = n/L, n/L -1, \dots$, we can find $i_{u, v}$. 

For each $i$, construct an $n \times L$ matrix $A$ whose rows are indexed by $V$ and columns are indexed by $V_i$ such that $A[u, x] = D[x, u]$, where $D$ is the transitive closure of $G$. We can then compute the matrix product $B = A A^T$ in $O(n^{\omega(1, \log_n L, 1)})$ time. Note that \[B[u, v] = \sum_{x \in V_i} A[u, x] \cdot A[v, x] = |\Anc(u) \cap \Anc(v) \cap V_i|.\]
Moreover, we can compute $|\Anc(x) \cap V_i|$ for all $x \in V$ in $O(n^2)$ time. Therefore, for each $i$, we can test if Equation~\eqref{eq:lca_reachability_2_listing} holds for all $u, v \in V$ in $O(n^2 + n^{\omega(1, \log_n L, 1)}) = O(n^{\omega(1, \log_n L, 1)})$ time.

Once we have computed $i_{u, v}$ for $u, v \in V$, we can simply search for the topologically latest vertex $x$ in $V_{i_{(u, v)}}$ such that $x \leadsto u$, $x \leadsto v$ but $x \not \leadsto \ell_1(u, v)$. This can be done in $O(L)$ time for each $u, v \in V$ with a linear scan of $V_{i_{u, v}}$. 

Thus, the overall runtime is $O(n^\omega + \frac{n}{L} \cdot n^{\omega(1, \log_n L, 1)} + n^2L)$. By setting $L=n^\lambda$ where $\lambda$ satisfies the equation $\omega(1, \lambda, 1) = 1 + 2\lambda$, we have that the runtime of the algorithm is in fact $O(n^\omega + n^{2 + \lambda})$, as desired. 

\subparagraph*{Algorithm for \apkarglca{3}.}
For $k = 3$, we first find $\ell_1(u, v)$ and $\ell_2(u, v)$ for all $u, v \in V$. Proceed exactly as we did for \apkarglca{2} to find the index $i_{u,v}$ such that $V_{i_{u,v}}$ contains $\ell_3(u, v)$, except we replace Equation~\eqref{eq:lca_reachability_2_listing} with the following equality:
\begin{equation}\label{eq:lca_reachability_3_listing}
    |\Anc(u) \cap \Anc(v) \cap V_i| = \left|\left(\Anc(\ell_1(u, v)) \cup \Anc(\ell_2(u, v))\right) \cap V_i\right|,
\end{equation}
which holds for $i > i_{u, v}$ and does not hold for $i=i_{u, v}$. We can also compute \[\left|\left(\Anc(\ell_1(u, v)) \cup \Anc(\ell_2(u, v))\right) \cap V_i\right|\] by multiplying an $n \times L$ matrix by an $L \times n$ matrix, thereby giving us the same overall runtime.
\end{proof}


\section{Lower Bounds}
\label{sec:lower_bound}

In this section, we show our conditional lower bounds for \eqlca{k} and \countlca{}. These lower bounds are the first conditional lower bounds for LCA problems that are higher than $n^{\omega-o(1)}$.

\subsection{Lower Bounds for \texorpdfstring{\eqlca{k}}{AP-Exactk-LCA}}

First, we show lower bounds for the \eqlca{k} problem by reducing from 3-uniform hypercliques. Combined with Corollary~\ref{cor:ktok+1}, the following theorem also shows that, for all constant $k \geq 6$, \eqlca{k} requires $n^{3-o(1)}$ time.

\begin{theorem}
\label{thm:apeq3456lb}
Assuming the \hyperclique{4}{3} hypothesis, \eqlca{3} requires $n^{2.5-o(1)}$ time. Assuming the \hyperclique{5}{3} hypothesis, \eqlca{4} and \eqlca{6} require $n^{8/3-o(1)}$ and $n^{3-o(1)}$ time respectively. Also, assuming the \hyperclique{6}{3} hypothesis, \eqlca{5} requires $n^{14/5-o(1)}$ time. 
\end{theorem}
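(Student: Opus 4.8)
The plan is to reduce the unbalanced version of \hyperclique{\ell}{3} (Fact~\ref{fac:unbalanced_hyperclique}) to \eqlca{k} for the appropriate $k$, so that an $O(n^{c-\eps})$ algorithm for \eqlca{k} would refute the corresponding hyperclique hypothesis. The basic idea is to build a DAG whose vertices encode hyperedges/tuples of the hypergraph, and whose reachability is arranged so that the number of LCAs of a carefully chosen pair of ``query'' vertices counts, or thresholds, the number of hypercliques extending a fixed partial clique. Concretely, for \hyperclique{4}{3}: partition the $\ell$ parts, let the query pair $(u,v)$ correspond to a choice of two vertices $a,b$ of the hypergraph (one from each of two parts), make the candidate LCAs correspond to third vertices $c$ such that $\{a,b,c\}$ is a hyperedge, and add a small number of fixed ``dummy'' ancestors to pad the LCA count to exactly $k$. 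The key design constraint is that a vertex $w_c$ (representing $c$) should be an LCA of $u_{a}$ and $v_{b}$ exactly when $c$ together with $a,b$ forms a triangle in the 3-uniform sense \emph{and} there is a fourth vertex $d$ completing a 4-hyperclique — this last condition must be encoded as ``$w_c$ has no proper descendant that is also a common ancestor,'' i.e. the existence of $d$ must \emph{destroy} the LCA status of $w_c$, flipping the count. I would first set up the gadget for the $k=6$ / \hyperclique{5}{3} case (the cleanest, since it gives the $n^{3-o(1)}$ bound and via Corollary~\ref{cor:ktok+1} all $k\ge 6$), then note that the same construction with fewer parts and a smaller padding constant handles $k=3,4,5$ with the weaker exponents $n^{2.5-o(1)}, n^{8/3-o(1)}, n^{14/5-o(1)}$ respectively.

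The main steps, in order: (1) Given a $3$-uniform $\ell$-partite hypergraph with part sizes $n^{a_1},\dots,n^{a_\ell}$ chosen so that $\sum a_i$ equals the target exponent ($\sum a_i = 2.5$ for $\ell=4$, $=8/3$ or $3$ for $\ell=5$, $=14/5$ for $\ell=6$), build the DAG $G$ on $O(n)$ vertices: introduce a ``bottom'' layer of query vertices indexed by pairs of hypergraph vertices from two designated parts, a ``middle'' layer indexed by individual hypergraph vertices of a third part (these are the candidate LCAs), and ``top'' layers indexed by the remaining parts; connect a middle vertex $w_c$ down to query vertex $(a,b)$ iff $\{a,b,c\}$ is a hyperedge, and connect a top vertex $w_d$ down to $w_c$ (and onward to the relevant queries) iff the appropriate hyperedges among $\{c,d,\dots\}$ are present. (2) Add a constant number $t$ of fixed universal-ancestor vertices that are ancestors of every query vertex but of nothing else, so that every query pair automatically has $t$ ``free'' LCAs on top of the hypergraph-dependent ones. (3) Argue the core correspondence: $w_c \in \lca(u_a, v_b)$ iff $\{a,b,c\}$ is a hyperedge and there is \emph{no} vertex $d$ (and further vertices for larger $\ell$) such that the top-layer vertex $w_{d,\dots}$ is also a common ancestor of $u_a,v_b$ sitting below $w_c$ — equivalently, iff $\{a,b,c\}$ spans a hyperedge but does not extend to an $\ell$-hyperclique. (4) Conclude that the query pair $(u_a,v_b)$ has exactly $k$ LCAs for \emph{all} choices of $a,b$ iff the hypergraph has \emph{no} $\ell$-hyperclique, by balancing $t$ against the number of hyperedges through each pair (which can be made uniform by a standard padding argument, or handled by also querying the right \leqlca/\geqlca threshold via Theorem~\ref{thm:eq-leq-geq-eq}). (5) Observe the reduction runs in $O(n^{\sum a_i})$ time building an $O(n)$-vertex DAG, so an $O(n^{\sum a_i - \eps})$ \eqlca{k} algorithm refutes the hypothesis; finally invoke Corollary~\ref{cor:ktok+1} to push the $k=6$ cubic bound up to all $k\ge 6$.

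The hard part will be step (3) together with step (4): making the LCA count land on \emph{exactly} the right constant $k$ uniformly across all query pairs, rather than merely ``at most $k$'' or ``at least $k$.'' The number of hyperedges $\{a,b,c\}$ through a given pair $(a,b)$ varies with $(a,b)$, so the raw count of middle-layer common ancestors is not uniform; I expect to fix this either by a padding/completion step that makes the hypergraph ``regular'' on the relevant pairs without creating or destroying $\ell$-hypercliques, or by exploiting the equivalences of Section~\ref{sec:relationship} to reduce to \geqlca{(k+1)} or \leqlca{k} so that only a threshold — not an exact count — must be controlled. A secondary subtlety is ensuring the added universal ancestors and the top-layer vertices do not accidentally become LCAs of the wrong pairs or shadow each other, which requires a careful check that the only proper-descendant relations among candidate ancestors are exactly the intended ``$w_d$ below $w_c$'' ones; this is routine once the layering is fixed but must be verified. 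I would also double-check the arithmetic matching $\ell$, $k$, and the exponent in each of the four cases, since the whole point of the theorem is the precise numerology $2.5, 8/3, 14/5, 3$.
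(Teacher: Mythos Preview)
Your high-level plan---reduce unbalanced \hyperclique{\ell}{3} to an LCA-counting question on a layered DAG of $O(n)$ vertices---is the right one, and your arithmetic for the four exponents is correct. But the specific gadget you sketch has a genuine gap, and it is exactly the one you flag as ``the hard part'': in your construction the number of middle-layer common ancestors of a query pair $(u_a,v_b)$ equals the number of $c$ with $\{a,b,c\}$ a hyperedge, which varies with $(a,b)$. Neither of your proposed fixes really closes this. Making a $3$-uniform hypergraph pair-regular without creating or destroying $\ell$-hypercliques is not standard and is at best a nontrivial lemma you would have to prove; and passing to \geqlca{} does not help, because in your design the presence of an extending vertex $d$ simultaneously \emph{removes} $w_c$ from the LCA set and potentially \emph{adds} $w_d$ to it, so the net effect on the count is unclear rather than a clean threshold flip. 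There is also a wiring problem you glossed over: with separate query vertices $u_a$ and $v_b$, an edge $w_c\to u_a$ can only depend on $a$ and $c$, so you cannot make ``$w_c$ is a common ancestor of $u_a$ and $v_b$'' encode the ternary condition $\{a,b,c\}\in E$.

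The paper's construction avoids all of this with one idea you are missing. It makes the $k$ baseline LCAs \emph{purely structural}, independent of the hypergraph: the query pair sits in a layer $V_3$ (e.g.\ $(a,b)_3$ and $c_3$), the middle layer $V_2$ consists of pair-labeled vertices such as $(a,b)_2,(b,c)_2,(c,a)_2$, and an edge from $V_2$ to $V_3$ is present iff the labels are \emph{consistent}. This forces the set of common ancestors in $V_2$ of any query pair to be exactly $k$ (here $k=3$), for every query pair, with no regularity assumption. The hypergraph is then encoded one layer higher, by \emph{negated} edges: from $u\in V_1$ (a copy of the ``large'' part) to $(x,y)_2\in V_2$ iff $\{u,x,y\}$ is \emph{not} a hyperedge; and $V_1$ is wired directly to all of $V_3$ so every $u$ is a common ancestor. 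Now $u$ is an LCA of the query pair iff $u$ reaches none of the $k$ structural LCAs in $V_2$, i.e.\ iff all $k$ corresponding hyperedges through $u$ are present, i.e.\ iff $u$ completes an $\ell$-hyperclique with the labels of the query pair. Hence the LCA count is exactly $k$ when no completion exists and strictly more than $k$ when one does---no padding, no regularity, and the direction of ``extending vertex kills an LCA'' is reversed: the completing vertex becomes an \emph{additional} LCA rather than shadowing one. Once you have this for \eqlca{3}, the other three cases are the same template with $V_2$ chosen as the appropriate family of $(k{-}1)$-tuples so that consistency with the two query vertices pins down exactly $k$ of them.
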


\begin{proof}
All four reductions share the same underlying ideas, so we do not give full details for all of them. 

\subparagraph*{\hyperclique{4}{3} $\rightarrow$ \eqlca{3}.}
    Suppose we are given a $3$-uniform $4$-partite hypergraph $G$ on vertex sets $A, B, C, U$, where $|A| = |B| = |C| = \sqrt{n}$, and $|U| = n$. By Fact~\ref{fac:unbalanced_hyperclique}, the \hyperclique{4}{3} hypothesis implies that it requires $(|A||B||C||U|)^{1-o(1)} = n^{2.5-o(1)}$ time to determine whether $G$ contains a $4$-hyperclique. 
    
    We construct the following instance of \eqlca{3} as depicted in \autoref{fig:ap_eq3_lca_reduction}. The graph $G'$ contains $3$ layers of vertices $V_1, V_2, V_3$. Vertex set $V_1$ is a copy of $U$, vertex set $V_2$ equals $(A \times B) \sqcup (B \times C) \sqcup (C \times A)$ and vertex set $V_3$ equals $(A \times B) \sqcup C$. To distinguish vertices from $V_2$ and $V_3$, we use subscript $2$ and $3$, e.g. $(a, b)_2$ and $(a,b)_3$, to denote vertices from $V_2$ and $V_3$ respectively. 
    
    We also add the following edges to the graph $G'$:
    \begin{itemize}
        \item Add a directed edge from every vertex in $V_1$ to every vertex in $V_3$. 
        \item Add a directed edge from any vertex in $V_2$ to any vertex in $V_3$ as long as they do not have \textit{inconsistent} labels. For instance, for every $a \in A, b \in B, c \in C$, we add an edge from $(a, b)_2$ to $(a, b)_3$ and to $c_3$, but we do not add an edge from $(a, b)_2$ to $(a, b')_3$ if $b \ne b'$.
        \item For every $u \in V_1$ and every $(x, y)_2 \in V_2$, add a directed edge from $u$ to $(x, y)$ if and only if there is \textit{not} a $3$-hyperedge among $u, x$ and $y$. 
    \end{itemize}
    
    \begin{figure}[ht]
    \centering
    \scalebox{0.8}{
    \begin{tikzpicture}
    \def\W{3.5};
    \def\H{1.5};
    \def\colorone{violet};
    \def\colortwo{red};
	\def\colorthree{blue};
	\def\colorfour{gray};
	
    \node[draw,ellipse,minimum height=\H cm,minimum width=\W cm] (V1) at (0,0) {};
    \node[above right] at (V1.25) {$U$};
	\node at (0,0) [circle,fill, inner sep = 1pt, label=above right:$u$] (v1){};
	
     \node[draw,ellipse,minimum height=\H cm,minimum width=\W cm] (V21) at (-4,-3) {};
     \node[above left] at (V21.110) {$A \times B$};
     \node at (-4,-3) [circle,fill, inner sep = 1pt, label=left:{$(a', b')_2$}] (v21){};

	\node[draw,ellipse,minimum height=\H cm,minimum width=\W cm] (V22) at (0,-3) {};
    \node[above left] at (V22.110) {$B \times C$};
    \node at (0,-3) [circle,fill, inner sep = 1pt, label=right:{$(b'', c'')_2$}] (v22){};
    
    \node[draw,ellipse,minimum height=\H cm,minimum width=\W cm] (V23) at (4,-3) {};
    \node[above] at (V23.45) {$C \times A$};
    \node at (4,-3) [circle,fill, inner sep = 1pt, label=right:{$(c''', a''')_2$}] (v23){};

    \node[draw,ellipse,minimum height=\H cm,minimum width=\W cm] (V31) at (-2.5,-7) {};
    \node[below] at (V31.270) {$A \times B$};
    \node at (-2.5,-7) [circle,fill, inner sep = 1pt, label=below right:{$(a, b)_3$}] (v31){};

    \node[draw,ellipse,minimum height=\H cm,minimum width=2 cm] (V32) at (2.5,-7) {};
    \node[below] at (V32.270) {$C$};
    \node at (2.5,-7) [circle,fill, inner sep = 1pt, label=below right:{$c_3$}] (v32){};

	\draw[->,line width=1pt, color=\colorone] (v1) to[]  node[left] {\footnotesize	{$\{u, a', b'\} \not \in E$}} (v21);

	\draw[->,line width=1pt, color=\colorone] (v1) to[]  node[right, pos=0.6, xshift=-2] {\footnotesize	{$\{u, b'', c''\} \not \in E$}} (v22);

	 \draw[->,line width=1pt, color=\colorone] (v1) to[]  node[right, pos=0.4, xshift=1] {\footnotesize	{$\{u, c''', a'''\} \not \in E$}} (v23);

	\draw[->,line width=1pt, text width=0.9cm, color=\colortwo] (v21) to[]  node[left] {\footnotesize	{$a'=a$ $b'=b$}} (v31);
	
	\draw[->,line width=1pt, color=\colorthree] (v21) to[]  node[left, pos=0.4, xshift=-3, yshift=-1] {\footnotesize{all}} (v32);

	\draw[->,line width=1pt,color=\colortwo] (v22) to[]  node[left, pos=0.3, yshift=2] {\footnotesize	{$b''=b$}} (v31);
	\draw[->,line width=1pt, color=\colorthree] (v22) to[]  node[left, pos=0.3] {\footnotesize	{$c''=c$}} (v32);
		
	\draw[->,line width=1pt, color=\colortwo] (v23) to[]  node[right, pos=0.3] {\footnotesize	{$a'''=a$}} (v31);
	\draw[->,line width=1pt, color=\colorthree] (v23) to[]  node[right, pos = 0.4] {\footnotesize	{$c'''=c$}} (v32);
	
	\draw[->,line width=1pt, out=190, in=150, distance=7cm, color=\colorfour] (v1) to[]  node[left, pos = 0.4, xshift=-2, yshift=1] {\footnotesize{all}} (v31);
	\draw[->,line width=1pt, out=-10, in=30, distance=7cm, color=\colorfour] (v1) to[]  node[right, pos = 0.4, xshift=2, yshift=1] {\footnotesize{all}} (v32);

    \node[] () at (-8,0) {$V_1 $};
    \node[] () at (-8,-3) {$V_2$};
    \node[] () at (-8,-7) {$V_3$};
    
    \end{tikzpicture}}
    \caption{Construction of $G'$ in Theorem~\ref{thm:apeq3456lb} from the 3-uniform 4-hyperclique instance. Between the parts where we mark ``all'', we add all possible edges. Between the parts where we mark a condition, we only add an edge when the corresponding condition holds.} 
    \label{fig:ap_eq3_lca_reduction}
\end{figure}
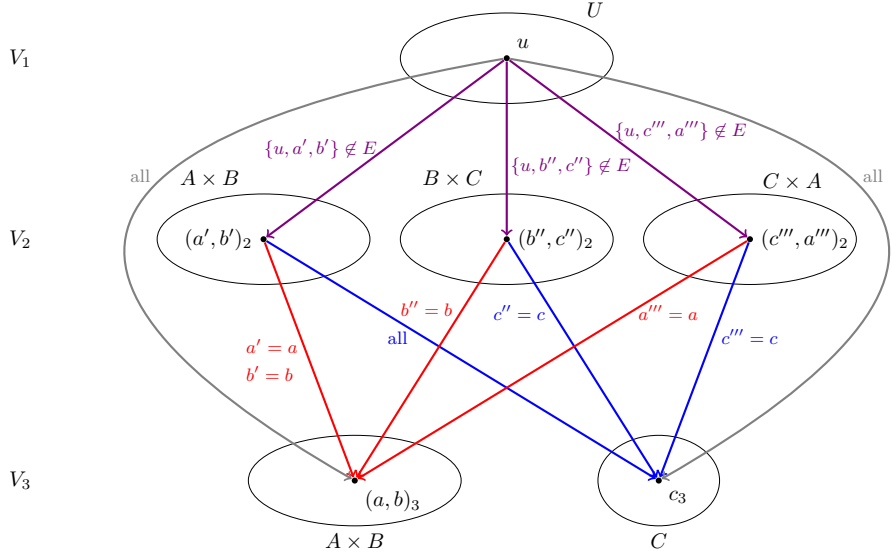

    We consider the set of LCAs for every pair of $(a, b)_3, c_3 \in V_3$.
    
    First, since $G'$ is a three layered graph, all common ancestors of $(a, b)_3$ and $c_3$ in $V_2$ are their LCAs. Since we only add edges from $V_2$ to $V_3$ when the labels are consistent, it is easy to verify that the set of LCAs in $V_2$ is $\{(a, b)_2, (b, c)_2, (c, a)_2\}$. 
    
    Now we claim that $a, b, c$ are in a $4$-hyperclique in $G$ if and only if 
    $(a, b)_3$ and $c_3$ have an LCA in $V_1$ 
    and there is a $3$-hyperedge among $a, b, c$ in $G$.
    
   Suppose $a, b, c$ are in a $4$-hyperclique with a vertex $u \in V(G)$. Since $G$ is $4$-partite, we must have $u \in U$. The copy of $u$ in $G'$ is clearly a common ancestor of $(a, b)_3$ and $c_3$, since we add all possible edges from $V_1$ to $V_3$. Because $a, b, c, u$ is in a $4$-hyperclique, $\{a, b, u\}, \{b, c, u\}, \{c, a, u\} \in E(G)$. Therefore, in $G'$ we do not add edges from $u$ to any of $(a, b)_2, (b,c)_2$ and $(c,a)_2$. Since these are the only common ancestors of $(a, b)_3$ and $c_3$ in $V_2$, $u$ in fact cannot reach any other vertex that can reach both $(a, b)_3$ and $c_3$, which makes $u$ an LCA. Clearly, there is a $3$-hyperedge among $a, b, c$ in $G$.

    To prove the converse, suppose $u \in V_1$ is an LCA of $(a, b)_3$ and $c_3$ and there is a $3$-hyperedge among $a, b, c$ in $G$. In that case, $u$ cannot reach any vertex that can reach both $(a, b)_3$ and $c_3$. In particular, $u$ cannot reach any of $(a, b)_2, (b, c)_2, (c, a)_2$. When we add edges from $V_1$ to $V_2$, we have that $\{a, b, u\}, \{b, c, u\}, \{c, a, u\}$ are all $3$-hyperedges in $G$. Also, since $\{a, b, c\}$ is a $3$-hyperedge,  there is indeed a $4$-hyperclique with vertices $a, b, c, u$.

    Thus,  $a, b, c$ are in a $4$-hyperclique in $G$ if and only if  the number of LCAs of 
    $(a, b)_3$ and $c_3$ is not $3$
    and there is a $3$-hyperedge among $a, b, c$ in $G$. Thus, given the result of an \eqlca{3} computation of $G'$, we can easily determine if $G$ has a  $4$-hyperclique.  Therefore, assuming the \hyperclique{4}{3} hypothesis, \eqlca{3} requires $n^{2.5-o(1)}$ time. 

\subparagraph*{\hyperclique{5}{3} $\rightarrow$ \eqlca{4}.}
Assuming the \hyperclique{5}{3} hypothesis,  \hyperclique{5}{3} on  graphs $G$ with $5$ parts $A, B, C, D, U$ such that $|A| = n^{2/3}, |B|=|C|=|D|=n^{1/3}$ and $|U| = n$ requires $n^{8/3-o(1)}$ time. Given such a hypergraph $G$, we construct the following  \eqlca{4} instance $G'$ on $O(n)$ vertices. 

The graph $G'$ contains $3$ layers of vertices $V_1, V_2, V_3$. Set $V_1$ is a copy of $U$, set $V_2$ equals $(A \times B) \sqcup (A \times C) \sqcup (A \times D) \sqcup (B \times C \times D)$ and set $V_3$ equals $(A \times B) \sqcup (C \times D)$. To distinguish vertices from $V_2$ and $V_3$, we use subscript $2$ and $3$ to denote vertices from $V_2$ and $V_3$ respectively. 

We also add the following edges to the graph $G'$:
\begin{itemize}
    \item Add a directed edge from every vertex in $V_1$ to every vertex in $V_3$. 
    \item Add a directed edge from some vertex in $V_2$ to some vertex in $V_3$ as long as they don't have inconsistent labels. 
    \item For every $u \in V_1$ and every $(a, x)_2 \in V_2$, add a directed edge from $u$ to $(a, x)_2$ if and only if there is \textit{not} a $3$-hyperedge among $u, a$ and $x$. For every $u \in V_1$ and every $(b, c, d)_2 \in V_2$,  add a directed edge from $u$ to $(b, c, d)_2$ if and only if there is \textit{not} a $4$-hyperclique among $\{u, b, c, d\}$. 
\end{itemize}

For any $a \in A, b \in B, c \in C, d \in D$, the set of LCAs in $V_2$ of $(a, b)_3$ and $(c,d)_3$ is clearly $\{(a, b)_2, (a, c)_2, (a, d)_2, (b, c, d)_2\}$. 
We claim that $a, b, c, d$ are  in a $5$-hyperclique in $G$ if and only if $(a, b)_3$ and $(c,d)_3$ have some LCA in $V_1$ and
 $a, b, c, d$ are in a $4$-hyperclique in $G$.

Suppose $\{a, b, c, d, u\}$ is a $5$-hyperclique in $G$. Clearly, $u$ is a common ancestor of $(a, b)_3$ and $(c,d)_3$. Also, because of the way we add edges between $V_1$ and $V_2$, $u$ cannot reach any one of $(a, b)_2, (a, c)_2, (a, d)_2, (b, c, d)_2$, which are the only vertices in $V_2$ that can reach both $(a, b)_3$ and $(b,c,d)_3$. Thus, $u$ is an LCA. 

For the converse, suppose $u \in V_1$ is an LCA and $a, b, c, d$ are in a $4$-hyperclique. Then $u$ must not reach any other vertex that can reach both  $(a, b)_3$ and $(c,d)_3$. In particular, $u$ cannot reach
 any of $(a, b)_2, (a, c)_2, (a, d)_2, (b, c, d)_2$. This means that $\{a, b, u\}, \{a, c, u\}, \{a, d, u\} \in E(G)$ and $\{b, c, d, u\}$ is in a $4$-hyperclique. These hyperedges and the hyperclique (together with the $4$-hyperclique among $a, b, c, d$) are enough to make $\{a, b, c, d, u\}$ a $5$-hyperclique. 

Thus, $G$ has a $5$-hyperclique if and only if there are some $a \in A, b \in B, c \in C, d \in D$ such that  $a, b, c, d$ are in a $4$-hyperclique and the number of LCAs of $(a, b)_3$ and $(b,c,d)_3$ is not $4$. This completes the reduction to \eqlca{4} and thus shows an $n^{8/3-o(1)}$ lower bound for it. 

\subparagraph*{\hyperclique{5}{3} $\rightarrow$ \eqlca{6}.}
Assuming the \hyperclique{5}{3} hypothesis, \hyperclique{5}{3} on graphs $G$ with $5$ parts $A, B, C, D, U$ such that $|A| = |B| = |C| = |D| = n^{1/2}$ and $|U| = n$ requires $n^{3-o(1)}$ time. Given such a hypergraph $G$, we construct an  \eqlca{6} instance $G'$ on $O(n)$ vertices. 

The graph $G'$ contains $3$ layers of vertices $V_1, V_2, V_3$. Set $V_1$ is a copy of $U$, set $V_2$ equals $(A \times B) \sqcup (A \times C) \sqcup (A \times D) \sqcup (B \times C) \sqcup (B \times D) \sqcup (C \times D)$ and set $V_3$ equals $(A \times B) \sqcup (C \times D)$. 

The set of edges we add and the remaining proofs are similar to previous reductions, so we omit them for conciseness. At the end, we can solve the \hyperclique{5}{3} instance $G$ by calling an \eqlca{6} algorithm on $G'$ with $O(n^2)$ additional work, and thus showing an $n^{3-o(1)}$ lower bound for \eqlca{6}.

\subparagraph*{\hyperclique{6}{3} $\rightarrow$ \eqlca{5}.}
Assuming the \hyperclique{6}{3} hypothesis,  \hyperclique{6}{3} on graphs $G$ with $6$ parts $A, B, C, D, E, U$ such that $|A| = |B| = n^{1/5}$, $|C| = |D| = n^{2/5}$, $|E| = n^{3/5}$ and $|U| = n$ requires $n^{14/5-o(1)}$ time. Given such a hypergraph $G$, we construct an  \eqlca{5} instance $G'$ on $O(n)$ vertices.

The graph $G'$ contains $3$ layers of vertices $V_1, V_2, V_3$. Set $V_1$ is a copy of $U$, set $V_2$ equals $(A \times C \times D) \sqcup (B \times C \times D) \sqcup (A \times B \times E) \sqcup (C \times E) \sqcup (D \times E)$ and set $V_3$ equals $(A \times B \times C) \sqcup (D \times E)$. 

The set of edges we add and the remaining proofs are similar to previous reductions, so we omit them for conciseness. The key guarantee for correctness is that all hyperedges in $G$ involving $U$ can be captured by some edge in $V_1 \times V_2$.  At the end, we can solve the \hyperclique{6}{3} instance $G$ by calling an \eqlca{5} algorithm on $G'$ with $O(n^2)$ additional work, and thus showing an $n^{14/5-o(1)}$ lower bound for \eqlca{6}.
\end{proof}

\begin{remark}
Note that in all our reductions to \eqlca{k} for $3 \le k \le 5$, we only need to output the results for $o(n^2)$ pairs of $u$ and $v$. For instance, in the reduction from \hyperclique{4}{3} to \eqlca{3}, we only need to output whether $(u, v)$ has exactly $3$ LCAs for $u \in A \times B$ and $v \in C$. The total number of such pairs is only $O(n^{1.5})$. This is the main reason why we do not get $n^{3-o(1)}$ conditional lower bounds for \eqlca{k} for $3 \le k \le 5$. On the other hand, in the reduction to \eqlca{6}, we do have $\Theta(n^2)$ queries.
\end{remark}

Williams \cite{williamsmax2sat} showed that \maxsat{} reduces to $3$-uniform hypercliques. Lincoln, Vassilevska Williams and Williams~\cite{lincoln2018tight} further generalized this reduction to a reduction from Constraint Satisfaction Problem (CSP) on degree-$3$ formulas to $3$-uniform hypercliques. Therefore, Theorem~\ref{thm:apeq3456lb} also works assuming the \maxsat{} hypothesis or the hardness of maximizing the number of satisfying clauses in degree-$3$ CSP formulas. 

\begin{corollary}
Assuming \maxsat{} (or even max degree $3$ CSP formulas) on $N$ variables and $\poly(n)$ clauses requires $2^{N-o(N)}$ time, \eqlca{3}, \eqlca{4}, \eqlca{5} and \eqlca{6} requires $n^{2.5-o(1)}$, $n^{8/3-o(1)}$, $n^{14/5-o(1)}$ and $n^{3-o(1)}$ time respectively.  
\end{corollary}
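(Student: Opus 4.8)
The plan is to derive the four bounds by composing the standard reduction from \maxsat{} (and from degree-$3$ CSP) to $3$-uniform hyperclique detection with Theorem~\ref{thm:apeq3456lb}. Concretely, the first step is to show that the assumed $2^{N-o(N)}$ hardness of \maxsat{} on $N$ variables and $\poly(N)$ clauses implies the \hyperclique{\ell}{3} hypothesis for each constant $\ell\in\{4,5,6\}$; the corollary is then immediate by feeding this into Theorem~\ref{thm:apeq3456lb}.

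For the first step I would recall the reduction of Williams~\cite{williamsmax2sat}, generalized to degree-$3$ CSP by Lincoln, Vassilevska Williams and Williams~\cite{lincoln2018tight}. Given a \maxsat{} instance on $N$ variables, split the variables into $\ell$ blocks of $N/\ell$ variables; the $i$-th part of an $\ell$-partite $3$-uniform hypergraph is the set of the $2^{N/\ell}$ partial assignments to block $i$, and on each triple of parts one places $3$-hyperedges among triples of partial assignments encoding how many clauses supported on those three blocks are jointly satisfied (clauses touching only one or two blocks are folded into the vertex set, resp.\ into a pairwise consistency check, in the usual way). A maximum-weight $\ell$-hyperclique then corresponds to an assignment maximizing the number of satisfied clauses, giving a $\poly(n)$-time reduction from \maxsat{} on $N$ variables to a (weighted) \hyperclique{\ell}{3} instance with $n:=2^{N/\ell}$ vertices per part. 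Hence an $n^{\ell-\eps}$-time algorithm for $\ell$-hyperclique detection would solve \maxsat{} in $2^{N(1-\eps/\ell)}\cdot\poly(N)=2^{(1-\delta)N}$ time with $\delta:=\eps/\ell-o(1)>0$, contradicting the hypothesis; so the \hyperclique{\ell}{3} hypothesis holds for every constant $\ell$ under the assumed hardness of \maxsat{} (equivalently, of degree-$3$ CSP).

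Given this, the corollary drops out of Theorem~\ref{thm:apeq3456lb} with no further work: the case $\ell=4$ gives the $n^{2.5-o(1)}$ lower bound for \eqlca{3}; the case $\ell=5$ gives the $n^{8/3-o(1)}$ and $n^{3-o(1)}$ lower bounds for \eqlca{4} and \eqlca{6} respectively; and the case $\ell=6$ gives the $n^{14/5-o(1)}$ lower bound for \eqlca{5}. No new gadget is required, since every reduction inside the proof of Theorem~\ref{thm:apeq3456lb} already begins from an (unbalanced) $3$-uniform hyperclique instance; this argument only swaps the hardness source feeding those reductions.

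The one ingredient that needs care — but which is supplied in full by~\cite{williamsmax2sat,lincoln2018tight}, so nothing new is needed here — is the optimization-versus-decision gap: \maxsat{} asks for the \emph{maximum} number of simultaneously satisfiable clauses, while \hyperclique{\ell}{3} is a pure decision problem. In those works this is bridged by passing to the weighted / exact-weight formulation of hyperclique (or equivalently by binary-searching a threshold on the number of satisfied clauses), together with routine checks that the $\poly(N)$-clause regime and the treatment of clauses of arity $<3$ are preserved. I therefore expect the composition itself to be routine, with this optimization-to-decision bookkeeping the only mildly delicate point, and it is imported wholesale from prior work.
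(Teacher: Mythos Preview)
Your proposal is correct and follows exactly the paper's approach: the paper simply cites Williams~\cite{williamsmax2sat} and Lincoln, Vassilevska Williams and Williams~\cite{lincoln2018tight} for the reduction from \maxsat{} (resp.\ degree-$3$ CSP) to $3$-uniform hyperclique, and then invokes Theorem~\ref{thm:apeq3456lb}. Your write-up is in fact more detailed than the paper's one-sentence justification, and your identification of the optimization-to-decision step as the only mildly delicate point (imported from the cited works) is accurate.
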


\subsection{Lower Bounds for Counting LCAs}

In this section, we show two conditional lower bounds for \countlca{}, one based on SETH and one based on the \fourclique{} hypothesis. 

The next lemma is a crucial tool for the SETH lower bound. It is a generalization of our previous reduction from \hyperclique{5}{3}  to \eqlca{6}.

\begin{lemma}
\label{lem:SETH_lem}
If there exists a $T(N)$ time algorithm for \eqlca{\binom{2(k-1)}{k-1}} for graphs with $N$ vertices, then there exists an $O(f(k) \poly(n)^{f(k)} T(2^{n/3}))$ time algorithm for \maxksat{k} with $n$ variables and $\poly(n)$ clauses for some function $f$.
\end{lemma}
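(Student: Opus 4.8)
The plan is to generalize the two-step chain \maxksat{3} $\to$ \hyperclique{5}{3} $\to$ \eqlca{6} that underlies the corresponding case of Theorem~\ref{thm:apeq3456lb} and the remark following it; write $m := k-1$, so that $\binom{2(k-1)}{k-1}=\binom{2m}{m}$. In Step~1 I would turn \maxksat{k} into a structured $3$-uniform weighted hyperclique problem, and in Step~2 reduce the latter to \eqlca{$\binom{2(k-1)}{k-1}$}.

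\textbf{Step 1.} First apply the arity reduction of Lincoln, Vassilevska Williams and Williams~\cite{lincoln2018tight} (generalizing~\cite{williamsmax2sat}), replacing each $k$-ary clause by a gadget of ternary constraints over $\poly(n)$ fresh variables, so that the resulting CSP has constraints of arity $3$ only and its optimum still encodes $\mathrm{OPT}(\varphi)$. Split the CSP variables into $1+2m$ blocks: one ``big'' block of $n/3$ of the original variables, giving a part $U$ with $2^{n/3}$ partial assignments, and $2m=2(k-1)$ ``small'' blocks, arranged (this is where the CSP structure is used) so that each ternary constraint touches at most three parts, at most one of which is $U$ and at most two of which are small parts; all parts have $\le 2^{n/3}$ vertices. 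Weight each hyperedge by the number of constraints it satisfies; a maximum-weight hyperclique (one vertex per part) then realizes $\mathrm{OPT}(\varphi)$. As all weights are $\poly(n)$, a standard thresholding and weight-removal step — binary search on $\mathrm{OPT}$, recording partial constraint counts in $\poly(n)$-sized auxiliary parts and looping over their values — turns this into $\poly(n)^{f(k)}$ calls of \emph{unweighted} $3$-uniform hyperclique detection on instances of the above shape, for some function $f$.

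\textbf{Step 2.} Now mimic the \hyperclique{5}{3} $\to$ \eqlca{6} gadget. Build a three-layer DAG $G'$ with $O(2^{n/3})$ vertices: the top layer $V_1$ is a copy of $U$; the middle layer $V_2$ has one part for each of the $\binom{2m}{m}$ size-$m$ subsets $S$ of the $2m$ small parts, with the $S$-part's vertices labeled by tuples in $\prod_{i\in S}(\text{small part }i)$; the bottom layer $V_3$ is split into two parts, each a product of $m$ of the small parts, so a pair in $V_3$ selects exactly one vertex from every small part. As in the paper's construction, add all edges $V_1\to V_3$; add an edge from a $V_2$-vertex to a $V_3$-vertex iff their labels are consistent (this, together with restricting $V_3$ to admissible tuples, enforces the at-most-two-small-part constraints); and add an edge from $u\in V_1$ to a $V_2$-vertex iff $u$ and the corresponding small-part vertices do \emph{not} form the relevant $3$-uniform hyperedge (enforcing the $U$-incident constraints). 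For a pair in $V_3$ whose small-part choice $x_1,\dots,x_{2m}$ is valid, the LCAs in $V_2$ are exactly one per part (baseline $\binom{2m}{m}$), and a vertex $u\in V_1$ becomes an additional, $\bigl(\binom{2m}{m}+1\bigr)$-th, LCA exactly when it is blocked from reaching every $V_2$-LCA, i.e.\ iff $u,x_1,\dots,x_{2m}$ is a full $(2m+1)$-clique. Padding the construction so that every other vertex pair has exactly $\binom{2m}{m}$ LCAs, a single \eqlca{$\binom{2(k-1)}{k-1}$} query on $G'$ decides the hyperclique instance, and summing over the $\poly(n)^{f(k)}$ calls gives the claimed $O(f(k)\,\poly(n)^{f(k)}\,T(2^{n/3}))$ bound.

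\textbf{Main obstacle.} Step~2 is a careful but essentially mechanical extension of a construction already in the paper, so the crux is Step~1: encoding the \emph{optimization} \maxksat{k}, rather than mere satisfiability, as $3$-uniform hyperclique detection while keeping the largest part of size $2^{n/3}$ and the number of parts a function of $k$ alone. Making the degree-$3$ CSP blocking keep every reduced constraint simultaneously $3$-party, with $\le 2$ small parts and $\le 1$ copy of $U$, and every part of size $\le 2^{n/3}$, is the delicate point, and the weight-removal/thresholding step must be set up so its overhead is absorbed into $\poly(n)^{f(k)}$. A secondary nuisance is the padding argument guaranteeing that pairs outside the intended family contribute exactly $\binom{2(k-1)}{k-1}$ LCAs, so that the \emph{all-pairs} decision problem \eqlca{$\binom{2(k-1)}{k-1}$} faithfully reports the answer.
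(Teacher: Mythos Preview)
Your two-step plan matches the paper's, and your Step~2 is the right generalization of the \hyperclique{5}{3}$\to$\eqlca{6} gadget. The gap is a mismatch between your two steps. Step~2 as you wrote it is really tailored to $k$-uniform hyperedges: a $V_2$-vertex carries $k{-}1$ small-part labels, so together with $u\in V_1$ that is exactly one $k$-tuple and ``the relevant hyperedge'' is well-defined. But your Step~1 produces $3$-uniform hyperedges, and for $k>3$ there are $\binom{k-1}{2}$ candidate triples $(u,x_i,x_j)$ through a single $V_2$-vertex, so the $V_1\to V_2$ edge rule is not well-defined as stated. Your claim that $V_2\to V_3$ consistency together with ``restricting $V_3$ to admissible tuples'' enforces the non-$U$ constraints also fails: a $3$-uniform hyperedge on three small parts split $2$--$1$ across the two $V_3$-halves is caught by neither.

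The paper sidesteps all of this---and your ``main obstacle'' with it---by taking $k$-uniform $(2k{-}1)$-hyperclique, not $3$-uniform, as the intermediate problem, so no arity reduction is needed. Step~1 is then the direct Williams split: one block $A_0$ of $n/3$ variables and $2(k{-}1)$ blocks of $n/(3(k{-}1))$ variables each; the parts $P_i$ are the partial assignments ($|P_0|=2^{n/3}$, each small $|P_i|=2^{n/(3(k-1))}$); a $k$-clause touches at most $k$ blocks and contributes weight to a single $k$-hyperedge; weights are $\poly(n)$-bounded and removed by enumerating all $n^{O(\binom{2k-1}{k})}$ threshold combinations. In Step~2, with $k$-uniform edges, your $V_1\to V_2$ rule is exactly the paper's. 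The $V_2\to V_3$ edges are pure label consistency and carry no hyperedge information; the hyperedges among the $2(k{-}1)$ small parts are not encoded in $G'$ at all but checked separately, per $V_3$-pair, in $\poly(n)$ time (this is the paper's ``$\alpha_1,\dots,\alpha_{2k-2}$ form a $(2k{-}2)$-hyperclique'' side condition). For the same reason your padding worry is moot: one simply reads the \eqlca{} bit only for the intended $V_3\times V_3$ pairs and ignores the rest.
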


To prove the lemma, we first reduce \maxksat{k} to $k$-uniform  $(2k-1)$-hyperclique, which is a straightforward generalization of Williams' \maxksat{2} algorithm \cite{williamsmax2sat}. Then we reduce $k$-uniform  $(2k-1)$-hyperclique to \eqlca{\binom{2(k-1)}{k-1}}, building on ideas similar to the proof of Theorem~\ref{thm:apeq3456lb}. 

\begin{proof}
We first reduce \maxksat{k} to $k$-uniform  $(2k-1)$-hyperclique. This is a straightforward generalization of Williams' \maxksat{2} algorithm \cite{williamsmax2sat}, but we include the proof for completeness. 

Following Williams' reduction, we first reduce \maxksat{k} to max-weight  \hyperclique{2k-1}{k}.
We split the set of variables to $(2k-1)$ parts, $A_0 \sqcup A_1 \sqcup A_2 \sqcup \cdots \sqcup A_{2k-2}$, where $|A_{0}| = \frac{n}{3}$ and $|A_i| = \frac{n}{3(k-1)}$ for each $i \in [2k-2]$. We create a $(2k-1)$-partite $k$-uniform hypergraph on vertices $P_0 \sqcup \cdots \sqcup P_{2k-2}$, where each part corresponds to a set of variables. The vertices in part $P_i$ corresponds to $A_i$, representing partial assignments $\alpha_i$  of variables in $A_i$. Initially, all hyperedge weights are $0$. 

For each clause of the \maxksat{k} instance, we can identify $k$ subsets of variables $A_{i_1}, \ldots, A_{i_k}$ that contain all the variables in the clause (if there are multiple possible ways to choose these $k$ subsets, we can just pick any one of them). For every possible partial assignment $\alpha_{i_1}, \ldots, \alpha_{i_k}$, we increment the weight of its hyperedge if the partial assignment satisfy the clause. It is then easy to see that the weight of a hyperclique $(\alpha_{0}, \ldots, \alpha_{2k-2})$ equals the number of clauses the assignment satisfies. Thus, the weight of the max-weight hyperclique equals the maximum number of clauses that can be satisfied. The time to construct this hypergraph is $O(\poly(n) \cdot 2^{\frac{2}{3} n})$. 

Since the hyperedge weights are bounded by $n^{O(1)}$, we can then reduce this max-weight  \hyperclique{2k-1}{k} instance to $n^{O\left(\binom{2k-1}{k}\right)}$ instances of  \hyperclique{2k-1}{k} by enumerating all the combinations of edge weights. 

Finally, we reduce each  \hyperclique{2k-1}{k} instance on vertex set $P_0 \sqcup P_1 \sqcup \cdots \sqcup P_{2k-3} \sqcup P_{2k-2}$ to \eqlca{\binom{2(k-1)}{k-1}}. 

Similar to the proof of Theorem~\ref{thm:apeq3456lb}, 
we create a  graph $G'$ with $3$ layers of vertices $V_1, V_2, V_3$. We set $V_1$ to be a copy of $P_{0}$. We set $V_2$ to be $\bigsqcup_{S \in \binom{[2k-2]}{k-1}} \prod_{i \in S} P_i$, i.e., vertices in $V_2$ are tuples of $(k-1)$ vertices from $k-1$ distinct vertex sets of $P_1, \ldots, P_{2k-1}$. Finally, we set $V_3$ to be $(\prod_{1 \le i \le k-1} P_i) \sqcup (\prod_{k \le i \le 2k-2} P_i)$, i.e., vertices in $V_3$ are either $(\alpha_1, \ldots, \alpha_{k-1}) \in P_1 \times \cdots \times P_{k-1}$ or $(\alpha_k, \ldots, \alpha_{2k-2}) \in P_{k} \times \cdots \times P_{2k-2}$.

We also add the following edges to the graph $G'$:
\begin{itemize}
    \item Add a directed edge from every vertex in $V_1$ to every vertex in $V_3$. 
    \item Add a directed edge from some vertex in $V_2$ to some vertex in $V_3$ as long as they don't have inconsistent labels. 
    \item From every $\alpha_{0} \in V_1$ to every $(\alpha_{i_1}, \ldots, \alpha_{i_{k-1}}) \in V_2$,  add a directed edge  if and only if there is \textit{not} a $k$-hyperedge among these  vertices. 
\end{itemize}

Similar to the proof of Theorem~\ref{thm:apeq3456lb}, we can show that $\alpha_0, \ldots, \alpha_{2k-2}$ is in a $(2k-1)$-hyperclique if and only if $\alpha_1, \ldots, \alpha_{2k-2}$ form a $(2k-2)$-hyperclique and the number of LCAs of $(\alpha_1, \ldots, \alpha_{k-1}) \in V_3$ and $(\alpha_{k}, \ldots, \alpha_{2k-2}) \in V_3$ is not $\binom{2k-2}{k-1}$. We omit the details of the proof for conciseness. 

The number of vertices in the \eqlca{\binom{2(k-1)}{k-1}} instance is $O(\binom{2k-2}{k-1} \cdot 2^{n/3})$. Thus, if there is a $T(N)$ time algorithm for \eqlca{\binom{2(k-1)}{k-1}} on a graph with $N$ vertices, then the overall running time of the algorithm for \maxksat{k} is 
\[O\left(\poly(n) \cdot 2^{\frac{2}{3} n} + n^{O\left(\binom{2k-1}{k}\right)} \cdot T\left(\binom{2k-2}{k-1} \cdot 2^{n/3}\right)\right) = O(f(k) \poly(n)^{f(k)} T(2^{n/3}))\]
for some function $f$, 
as claimed.
\end{proof}

\begin{remark}
Lemma~\ref{lem:SETH_lem} implies that if we assume the \maxksat{k} hypothesis, then \eqlca{\binom{2(k-1)}{k-1}} requires $n^{3-o(1)}$ time. Since our reduction uses  \hyperclique{2k-1}{k} as an intermediate problem, the same lower bound also holds assuming the \hyperclique{2k-1}{k} hypothesis. 
\end{remark}

Now we show our SETH lower bound using Lemma~\ref{lem:SETH_lem}.

\SETH*

\begin{proof}
For the sake of contradiction, assume \countlca\ has an $O(n^{3-\epsilon})$ time algorithm for $\epsilon > 0$ when the algorithm only needs to return the minimum between the count and $g(n)$. For any fixed $k$, when $n$ is large enough, we have $\binom{2(k-1)}{k-1} < g(n)$, so we can solve \eqlca{\binom{2(k-1)}{k-1}} in $O(n^{3-\epsilon})$ time. 
Thus, 
by Lemma~\ref{lem:SETH_lem}, we can solve \maxksat{k} (and thus $k$-SAT) with $n$ variables and $\poly(n)$ clauses in time 
\[O(f(k) \poly(n)^{f(k)} (2^{n/3})^{3-\epsilon}) = O(f(k) \poly(n)^{f(k)} 2^{(1-\epsilon/3) n}) = O(\poly(n) \cdot 2^{(1 - \epsilon/3)n}),\]
which would refute SETH. 
\end{proof}

Finally, we present our reduction from \fourclique{} to \countlca{}, showing an $n^{\omega(1, 2, 1) - o(1)}$ lower bound for \countlca{} assuming the current algorithm for \fourclique{} is optimal. 

\FourCliqueHardness*

\begin{proof}
    
    Suppose we are given a \fourclique{} instance $G = (V, E)$. Without loss of generality, we assume $G$ is a $4$-partite graph with four vertex parts $V = A \sqcup B \sqcup C \sqcup D$ of size $n$ each.
    
    First, make a copy  $G' = (V', E')$ of $G$, and modify the edge set of $G'$ as follows:
    \begin{itemize}
        \item Remove all edges between $A$ and $B$.
        \item Direct all edges from $D$ to $A$ and $B$.
        \item Direct all edges from $C$ to $A, B$ and $D$.
    \end{itemize}
    
    Then we add two additional vertex sets $A'$ and $B'$ to $G'$, where $A'$ is a copy of $A$ and $B'$ is a copy of $B$. We use $a'$ to denote the copy of $a \in A$ in $A'$ and use $b'$ to denote the copy of $b \in B$ in $B'$. We also add the following edges:
    \begin{itemize}
        \item For every $a \in A$, add an edge $(a', a)$. 
        \item For every $b \in B$, add an edge $(b', b)$.  
        \item For every $a \in A, b \in B$, add two edges $(a', b)$ and $(b', a)$. 
        \item For every $a \in A, c \in C$, add an edge $(c, a')$ if $\{c, a\} \not \in E$.
        \item For every $b \in B, c \in C$, add an edge $(c, b')$ if $\{c, b\} \not \in E$.
    \end{itemize}
        
    This construction of the graph is also depicted in Figure~\ref{fig:4_clique_reduction}. From there, it is clear that $G'$ is a $3$-layered graph. 
    
    \begin{figure}[ht]
    \centering
    \scalebox{0.8}{
    \begin{tikzpicture}
    \def\W{3.5};
    \def\H{1.5};
    \def\colorone{violet};
    \def\colortwo{red};
	\def\colorthree{blue};
	\def\colorfour{gray};
	
    \node[draw,ellipse,minimum height=\H cm,minimum width=\W cm] (V1) at (0,0) {};
    \node[above right] at (V1.25) {$C$};
	\node at (0,0) [circle,fill, inner sep = 1pt, label=above right:$c$] (v1){};
	
     \node[draw,ellipse,minimum height=\H cm,minimum width=\W cm] (V21) at (-4,-3) {};
     \node[above left] at (V21.110) {$A'$};
     \node at (-4,-3) [circle,fill, inner sep = 1pt, label=left:{$a'$}] (v21){};

	\node[draw,ellipse,minimum height=\H cm,minimum width=\W cm] (V22) at (0,-3) {};
    \node[above left] at (V22.110) {$D$};
    \node at (0,-3) [circle,fill, inner sep = 1pt, label=right:{$d$}] (v22){};
    
    \node[draw,ellipse,minimum height=\H cm,minimum width=\W cm] (V23) at (4,-3) {};
    \node[above] at (V23.45) {$B'$};
    \node at (4,-3) [circle,fill, inner sep = 1pt, label=right:{$b'$}] (v23){};

    \node[draw,ellipse,minimum height=\H cm,minimum width=\W cm] (V31) at (-2.5,-7) {};
    \node[below] at (V31.270) {$A$};
    \node at (-2.5,-7) [circle,fill, inner sep = 1pt, label=below right:{$a$}] (v31){};

    \node[draw,ellipse,minimum height=\H cm,minimum width=\W cm] (V32) at (2.5,-7) {};
    \node[below] at (V32.270) {$B$};
    \node at (2.5,-7) [circle,fill, inner sep = 1pt, label=below right:{$b$}] (v32){};

	\draw[->,line width=1pt, color=\colorone] (v1) to[]  node[left] {\footnotesize	{$\{c, a\} \not \in E$}} (v21);

	\draw[->,line width=1pt, color=\colorfour] (v1) to[]  node[right, pos=0.6, xshift=-2] {\footnotesize	{$\{c, d\} \in E$}} (v22);

	 \draw[->,line width=1pt, color=\colorone] (v1) to[]  node[right, pos=0.4, xshift=1] {\footnotesize	{$\{c, b\} \notin E$}} (v23);

	\draw[->,line width=1pt, text width=0.9cm, color=\colortwo] (v21) to[]  node[left] {\footnotesize{$a' = a$}} (v31);
	
	\draw[->,line width=1pt, color=\colorthree] (v21) to[]  node[left, pos=0.4, xshift=-3, yshift=-1] {\footnotesize{all}} (v32);

	\draw[->,line width=1pt,color=\colorfour] (v22) to[]  node[left, pos=0.3, yshift=6, xshift=6] {\footnotesize	{$\{d, a\} \in E$}} (v31);
	\draw[->,line width=1pt, color=\colorfour] (v22) to[]  node[right, pos=0.3,yshift=6,xshift=-3] {\footnotesize	{$\{d, b\} \in E$}} (v32);
		
	\draw[->,line width=1pt, color=\colorthree] (v23) to[]  node[right, pos=0.3,yshift=-5] {\footnotesize{all}} (v31);
	\draw[->,line width=1pt, color=\colortwo] (v23) to[]  node[right, pos = 0.4] {\footnotesize	{$b' = b$}} (v32);
	
	\draw[->,line width=1pt, out=190, in=150, distance=7cm, color=\colorfour] (v1) to[]  node[left, pos = 0.4, xshift=-2, yshift=1] {\footnotesize{$\{c, a\} \in E$}} (v31);
	\draw[->,line width=1pt, out=-10, in=30, distance=7cm, color=\colorfour] (v1) to[]  node[right, pos = 0.4, xshift=2, yshift=1] {\footnotesize{$\{c, b\} \in E$}} (v32);

    \end{tikzpicture}}
    \caption{Construction of $G'$ in Theorem~\ref{thm:4_clique_hardness} given a 4-partite \fourclique{} instance. Between parts where we mark ``all'', we add all possible edges. Between parts where we mark a condition, we only add an edge when the corresponding condition holds.} 
    \label{fig:4_clique_reduction}
\end{figure}

    \begin{claim}
    \label{cl:count_lca_4_clique}
    For every $a \in A, b \in B, c \in C$, $c$ is an LCA of $a$ and $b$ in $G'$ if and only if $\{c, a\}, \{c, b\} \in E$ and there doesn't exist any $d \in D$ such that $\{c, d\}, \{d, a\}, \{d, b\} \in E$. 
    \end{claim}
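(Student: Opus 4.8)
The plan is to exploit the fact that $G'$ is a $3$-layered DAG --- top layer $C$, middle layer $A' \sqcup D \sqcup B'$, bottom layer $A \sqcup B$ --- so that every directed path has length at most two and reachability can be read off almost directly from the edge set. Fix $a \in A$, $b \in B$, $c \in C$ for the rest of the argument. The first step is to observe that $c$ is \emph{always} a common ancestor of $a$ and $b$: if $\{c,a\} \in E$ then $(c,a) \in E'$, and otherwise $(c,a') \in E'$ together with $(a',a) \in E'$ gives $c \leadsto a$; the same dichotomy with $b$ in place of $a$ gives $c \leadsto b$. So the only interesting question is whether $c$ has a \emph{proper descendant} that is also a common ancestor of $a$ and $b$.

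Next I would classify all common ancestors of $a$ and $b$ by layer. No bottom-layer vertex qualifies, since the vertices of $A \sqcup B$ have out-degree zero and $a \neq b$. In the middle layer one checks the three vertex types: a copy $\tilde a \in A'$ has out-neighbours $\{\tilde a\} \cup B$ in the lower layers, so it reaches $b$ but reaches $a$ only when $\tilde a = a$; symmetrically for $B'$; and $d \in D$ reaches $a$ iff $\{d,a\} \in E$ and reaches $b$ iff $\{d,b\} \in E$. Hence the middle-layer common ancestors of $a$ and $b$ are exactly $\{a',b'\} \cup \{\, d \in D : \{d,a\},\{d,b\} \in E \,\}$, while by the first step the top-layer common ancestors are all of $C$.

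Finally, because every edge leaving $c$ enters the middle or bottom layer and every edge leaving the middle layer enters the bottom layer, $c$ reaches a given middle-layer vertex if and only if there is a direct edge from $c$ to it. Therefore $c$ fails to be an LCA of $a$ and $b$ exactly when $c$ has a direct edge into one of $a'$, $b'$, or a ``good'' $d \in D$ (one with $\{d,a\},\{d,b\}\in E$). Unwinding the edge definitions, $(c,a') \in E' \iff \{c,a\} \notin E$, $(c,b') \in E' \iff \{c,b\} \notin E$, and $(c,d) \in E' \iff \{c,d\} \in E$. Thus $c$ is an LCA of $a$ and $b$ iff $\{c,a\} \in E$ and $\{c,b\} \in E$ and there is no $d \in D$ with $\{c,d\},\{d,a\},\{d,b\} \in E$, which is precisely the claim.

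The only place that needs genuine care is the asymmetric behaviour of the $A'$ and $B'$ gadget copies: each $a'$ reaches every vertex of $B$ but only its own copy in $A$, so it is automatically a common ancestor of the pair $(a,b)$ and, lacking any middle-layer descendant, is automatically an LCA of that pair. This is exactly the feature that pins the LCA count, and getting the direction of the copy-edges wrong would be the main pitfall. The layered structure, which forbids $c$ from reaching a middle-layer vertex along any indirect path, is the other point one must invoke explicitly.
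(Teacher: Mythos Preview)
Your argument is correct and follows essentially the same route as the paper: both proofs exploit the $3$-layer structure to reduce ``$c$ is an LCA'' to the nonexistence of a middle-layer vertex $u$ with $(c,u),(u,a),(u,b)\in E'$, and then dispatch the three cases $u\in A'$, $u\in B'$, $u\in D$ by the same edge-definition unwinding. Your organization is slightly different---you first observe that $c$ is \emph{always} a common ancestor and classify the middle-layer common ancestors of $a,b$ before checking edges out of $c$, whereas the paper argues the two implications separately---but the logical content is the same.
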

    \begin{proof}
    First, suppose $c$ is an LCA of $a$ and $b$. For the sake of contradiction, suppose $\{c, a\} \not \in E$. Then by the construction of $G'$, $(c, a') \in E'$. Also, $(a', a), (a', b) \in E'$, so $c$ cannot be an LCA. This leads to a contradiction, so we must have $\{c, a\} \in E$. Similarly, we must have $\{c, b\} \in E$. Finally, suppose for the sake of contradiction that there exists a $d \in D$ such that $\{c, d\}, \{d, a\}, \{d, b\} \in E$, then by construction, $(c, d), (d, a), (d, b) \in E'$, so $c$ cannot be an LCA. Thus, there doesn't exist any $d \in D$ such that $\{c, d\}, \{d, a\}, \{d, b\} \in E$.
    
    Now we prove the converse direction. Suppose $\{c, a\}, \{c, b\} \in E$ and there doesn't exist any $d \in D$ such that $\{c, d\}, \{d, a\}, \{d, b\} \in E$. By our construction, $(c, a), (c, b) \in E'$, so $c$ is at least a common ancestor of $a$ and $b$. Since $G'$ is a $3$-layered graph, it suffices to show that there isn't any vertex $u$ in the middle layer such that $(c, u), (u, a), (u, b) \in E'$. First, for any $u \in A'$, if $u \ne a'$, then $(u, a) \not \in E'$; if $u = a'$, then $(c, u) \not \in E'$ because $\{c, a\} \in E$. Therefore, there isn't any $u \in A'$ such that $(c, u), (u, a), (u, b) \in E'$. Similarly, there isn't any $u \in B'$ such that $(c, u), (u, a), (u, b) \in E'$. For any $d \in D$, we already have the condition that at least one of $\{c, d\}, \{d, a\}, \{d, b\}$ is not in $E$, so at least one of $(c, d), (d, a), (d, b)$ is not in $E'$. Therefore, $c$ is an LCA. 
    \end{proof}
    
    Using this claim, we describe our algorithm below. 
    
    First, run \countlca{} to compute $|\lca(a, b)|$ for all $(a, b) \in A \times B$. 
    Since $G'$ is a three-layered graph, the set of LCAs of $a$ and $b$ in the middle layer is exactly the set of their common neighbors in the middle layer. Therefore, we can easily compute $|\lca(a, b) \cap (A' \cup B' \cup D)|$ in $O(n^\omega)$ time by using matrix multiplication to count the number of their common neighbors in the middle layer. Also, clearly, there isn't any LCA of $a$ and $b$ in $A$ or $B$. Thus, we can compute the number of $c \in C$ that is an LCA of $a$ and $b$ by 
    \[|\lca(a, b) \cap C| = |\lca(a, b)| - |\lca(a, b) \cap (A' \cup B' \cup D)|.\]
    
    By Claim~\ref{cl:count_lca_4_clique}, $|\lca(a, b) \cap C|$ is exactly the number of $c \in C$ such that $\{c, a\}, \{c, b\} \in E$ and there doesn't exist any $d \in D$ such that $\{c, d\}, \{d, a\}, \{d, b\} \in E$. 
    
    Next, in $O(n^\omega)$ we can use matrix multiplication again to compute $Q(a, b)$ for every $(a, b)$ where $Q(a, b)$ is defined as the number of $c \in C$ such that $\{c, a\}, \{c, b\} \in E$. 

    Note that $Q(a, b) - |\lca(a, b) \cap C|$ is exactly the number of $c \in C$ such that $\{c, a\}, \{c, b\} \in E$ and there \textit{exists} $d \in D$ such that $\{c, d\}, \{d, a\}, \{d, b\} \in E$. Thus, $a$  and $b$ are in a 4-clique if and only if $\{a, b\} \in E$ and $Q(a, b) - |\lca(a, b) \cap C| > 0$.

    Overall, if we can compute \countlca{} in $T(n)$ time, then we can solve the $4$-partite \fourclique{} instance in $O(T(n) + n^\omega)$ time.
\end{proof}

Since \countlca{} is easier than \alllca{}, this lower bound shows that the \alllca{} algorithm in \cite{Eckhardt2007FastLC} is in fact conditionally optimal.

\section{\apvlca}
\label{sec:apvlca}

In this section, we show two conditional lower bounds for \apvlca{}, based on the \maxwitness{} hypothesis and the \hyperclique{4}{3} hypothesis. First, recall the following theorem:

\apvlbmaxwitness*

At the high level, we reduce the $\textsf{MaxWitness}$ problem to $O(\log n)$ calls of the \apvlca{} problem using a parallel binary search technique. 

\begin{proof}
    Without loss of generality, suppose $n = 2^\ell$ for some integer $\ell$. Suppose we have two $n \times n$ Boolean matrices $A$ and $B$, each already padded with a column and row of ones to ensure that there always exists a Boolean witness. Now, we will describe an algorithm to compute $C =$ \maxwitness{}$(A, B)$ using an \apvlca{} algorithm $\ell = \log n$ times. At the high level, we will be using a parallel binary search to find the maximum witness corresponding to each entry of $C$. 
     
    Construct a tripartite graph $G$ on vertices $V = I \sqcup J \sqcup K$, where $|I| = |J| = |K|$ and identify each of the sets with $[n]$. Add a directed edge from $k \in K$ to $i \in I$ if $A[i, k] = 1$ and an edge from $k \in K$ to $j \in J$ if $B[k, j] = 1$. Then, computing $C[i, j]$ is the same as determining the largest $k \in K$ that is a common ancestor of both $i \in I$ and $j \in J$. 
    Now, we will iteratively find the $t$th bit in the binary representation of each $C[i, j]$ for $t = 1, \ldots, \ell$ (the first bit is the highest order bit, and the last bit is the lowest order bit). In the first iteration, we do the following.
    
    \textbf{Phase 1:} Construct a graph $G_1$ by first making a copy of $G$ and adding a vertex $w$. Then, we add a directed edge from $w$ to every vertex in $I \cup J$. Now, add a directed edge from $w$ to all vertices $k \in K$ whose binary representation starts with $1$. Finally, run \apvlca{} where we guess $w$ is an LCA for all pairs $(i, j) \in I \times J$. If $w$ is in fact an LCA, set $c^{(1)}_{i, j} = 0$, and otherwise, set $c^{(1)}_{i, j} = 1$.

    More generally, at the $t$th iteration of the algorithm, we do the following.
    
    \textbf{Phase $t$:} At the $t$th iteration of the algorithm for $1 \leq t \leq \ell$, construct the graph $G_t$ as follows. First, make a copy of $G$.
    Then, for each string $b = {b_1b_2 \dots b_{t-1}} \in \{0, 1\}^{t-1}$, create a vertex $w_{{b}}$. Now, add an edge from $w_{{b}}$ to all vertices in $K$ whose binary representation starts with ${b_1b_2 \dots b_{t-1}||1}$. Then, add an edge from every $w_{{b}}$ to every vertex in $I \cup J$. If $c^{(t-1)}_{i, j} = {b}$, guess that $w_{{b}}$ is an LCA for $(i, j) \in I \times J$. Run \apvlca{} with all of these guesses. If the algorithm outputs {\sc yes} for $(i, j)$, set $c^{(t)}_{i, j} = {{b} || 0}$. Otherwise, set $c^{(t)}_{i, j} = {{b} || 1}$.
    
    We show by induction that at Phase $t$, $c_{i, j}^{(t)}$ is the first $t$ bits of $C[i, j]$. In Phase 1, note that $w$ is an LCA for $(i , j) \in I \times J$ exactly when none of its children are common ancestors of $(i, j)$. In other words, $(i, j)$ has no common ancestor (and hence no witness) $k \in K$ whose first bit is 1.
    
    Suppose at iteration $t-1$, this claim is true. In other words, for each $i, j$, $d_{i, j} = c^{(t-1)}_{i, j}$ corresponds to the first $t-1$ bits of $C[i, j]$. Then, at iteration $t$, we guessed that $w_{d_{i, j}}$ is an ancestor. Since $w_{d_{i,j}}$ only has children whose first $t$ bits are $d_{i, j} || 1$, it is an LCA of $(i, j)$ exactly when none of these children are common ancestors, i.e. the largest common ancestor of $(i, j)$ has binary representation starting with $d_{i, j}||0$. Otherwise, it starts with $d_{i, j}||1$, as desired.
    
    Therefore, after $\ell$ iterations, we have that $C[i, j] = c^{(\ell)}_{i, j}$ (where we interpret $c_{i, j}$ as an $\ell$-bit binary integer). The algorithm does $O(n^2)$ work at each phase to construct $G_t$, and then invokes an \apvlca{} algorithm. Hence the overall runtime is $\tilde{O}(n^2 + T(n)) = \tilde{O}(T(n))$, as desired.
\end{proof}

\apvlbhyperclique*

\begin{proof}
Suppose we are given a $3$-uniform $4$-partite hypergraph $G$ on vertex sets $A, B, C, U$, where $|A| = |B| = |C| = \sqrt{n}$, and $|U| = n$. The \hyperclique{4}{3} hypothesis implies that it requires $n^{2.5-o(1)}$ time to determine whether $G$ contains a $4$-hyperclique by Fact~\ref{fac:unbalanced_hyperclique}.

We construct the following \vlca{} instance $G'$ on $O(n)$ vertices. 

The graph $G'$ contains $3$ layers of vertices $V_1, V_2, V_3$ with an additional vertex $s$. We set $V_1$ to be $A \times B$, set $V_2$ to be a copy of $U$ and set $V_3$ to be $(B \times C) \sqcup (C \times A)$. 

We also add the following edges to the graph $G'$.
\begin{itemize}
    \item Add a directed edge from every $v_1 \in V_1$ to every $v_3 \in V_3$. 
    \item Add a directed edge from $(a, b) \in V_1$ to $u \in V_2$ if and only if $\{u, a, b\} \in E(G)$. 
    \item Add a directed edge from $u \in V_2$ to $(b, c) \in V_3$ if and only if $\{u, b, c\} \in E(G)$. Similarly, add a directed edge from $u \in V_2$  to $(c, a) \in V_3$ if and only if $\{u, c, a\} \in E(G)$. 
    \item Add a directed edge from $s$ to every other vertex in $G'$. This ensures that every pair of vertices has some common ancestors, and thus has at least one LCA.  
\end{itemize}

We claim that for every $a \in A, b \in B, c \in C$, $a, b, c$ are in a $4$-hyperclique in $G$ if and only if $\{a, b, c\} \in E(G)$ and $(a, b)$ is \textit{not} an LCA of $(b, c)$ and $(c, a)$ in $G$. 

First, if $a, b, c$ are in a $4$-hyperclique with $u$, then clearly $\{a, b, c\} \in E(G)$. Also, by the construction of $G'$, $((a, b), u), (u, (b, c)), (u, (c, a))$ are all edges in $G'$. Thus, $(a, b)$ can reach a vertex $u$ which can reach both $(b, c)$ and $(c, a)$, so $(a, b)$ is not an LCA of $(b, c)$ and $(c, a)$. 

Conversely, if $\{a, b, c\} \in E(G)$ and $(a, b)$ is \textit{not} an LCA of $(b, c)$ and $(c, a)$, then since $(a, b)$ can reach both $(b, c)$ and $(c, a)$ via edges added from $V_1$ to $V_3$, $(a, b)$ must be able to reach some vertex that can reach both $(b, c)$ and $(c, a)$. Such a vertex must belong to $V_2$. Say the vertex is $u$, then 
 by the construction of $G'$, we must have $\{a, b, u\}, \{b, c, u\}, \{c, a, u\} \in E(G)$. Together with the hyperedge $\{a, b, c\}$, $a, b, c$  is in a $4$-hyperclique. 
 
 Therefore, we can run \vlca{} on $G'$ with the following set of LCA candidates:
 \begin{itemize}
     \item For every $a \in A, b \in B, c \in C$ such that $\{a, b, c\} \in E(G)$, let $w_{(b, c), (c, a)} = (a, b)$. 
     \item For every other pair of vertices $u, v \in V(G')$, we use Grandoni et al.'s algorithm \cite{grandoni2020lca} to find an actual LCA $\ell_{u,v}$ for them in $O(n^{2.447})$ time and set $w_{u, v} = \ell_{u, v}$. 
 \end{itemize}
 If some LCA candidate is incorrect, it must be that $(a, b)$ is not an LCA for $(b, c)$ and $(c, a)$ for some $a \in A, b \in B, c \in C$ such that $\{a, b, c\} \in E(G)$ and thus by previous discussion, the hypergraph $G$ has a $4$-hyperclique. On the other hand, if all LCA candidates are correct, then the hypergraph $G$ does not have a $4$-hyperclique.
 
 Therefore, assuming the \hyperclique{4}{3} hypothesis, \vlca{} requires $n^{2.5-o(1)}$ time.

\end{proof}

Our conditional lower bounds for \apvlca{} and \vlca{} are surprising because they suggest that \apvlca{} and \vlca{} require $n^{2.5-o(1)}$ time, while \aplca{} can be computed in $O(n^{2.447})$ time~\cite{grandoni2020lca}. This defies the common intuition that verification should be easier than computation.

\section{Open problems}
\label{sec:open}
We conclude this work by pointing out some potential future directions.
\begin{enumerate}
    \item Does there exist a subcubic time algorithm for \eqlca{k} for any $3 \leq k \leq 5$? Or, can we show an $n^{3-o(1)}$ conditional lower bound for \eqlca{k} for any such $k$? How about \apvlca{}? 
    \item Is it possible to show conditional lower bounds for \apkarglca{k} without using \geqlca{k} as an intermediate problem? For instance, since \geqlca{k} has $O(n^\omega)$ time algorithms for $k \le 3$, we cannot hope to get a higher than $n^\omega$ lower bound for \apkarglca{k} for $k \le 3$ using \geqlca{k} as an intermediate problem. However, the current best algorithm for \apkarglca{1} runs in $O(n^{2.447})$ and the best algorithm for \apkarglca{2} and \apkarglca{3} runs in $O(n^{2.529})$ time. 
    \item All our reductions reduce to instances of LCA variants in graphs with $O(1)$ layers. In such graphs, some variants could have faster algorithms. In particular, 
    \aplca{} has an $\tO(n^\omega)$ time algorithm \cite{CzumajKL07} for graphs with $O(1)$ layers, and thus we cannot hope to show a higher conditional lower bound using our techniques. In order to overcome this, we need to find reductions that show hardness for LCA variants in graphs with many layers.
    
    \item Are there any other related problems whose verification version is easier than the computation version? Can we reduce these problems to or from \aplca{}?
\end{enumerate}

\bibliography{ref}

\end{document}